%% file: main.tex
\documentclass[11pt]{article}
\usepackage{mystyle}

\newcommand{\qnorm}[2][q]{\|#2\|_{#1}}

\newcommand{\cost}{\mathrm{SC}}
\newcommand{\M}{\mathcal{M}}
\newcommand{\A}{\mathcal A}
\renewcommand{\phi}{\varphi}

\begin{document}
\title{Strategic Facility Location with $p$-Norm Social Costs}
\author{Jabari Hastings\thanks{Supported by the Simons Foundation Collaboration on the Theory of Algorithmic Fairness and the Simons Foundation Investigators Award 17351.}\\
Stanford University}

\date{}
\maketitle

\begin{abstract}
We consider the strategic facility location problem in $\ell_q(\mathbb R^d)$ spaces where the social cost is defined by an arbitrary $p$-norm of the individual costs. While the optimal approximation ratios for deterministic strategyproof mechanisms are well established in the $d = 1$ setting, the guarantees for multi-dimensional spaces under an arbitrary $p$-norm are less understood.

In this work, we analyze the well-studied, strategyproof coordinate-wise median (CM) mechanism and provide approximation guarantees for these generalized social costs.
\begin{itemize}
\item We show that the CM mechanism is in fact robust to a broader class of social objectives: for every monotone symmetric norm objective, including all $p$-norm social costs, its approximation ratio never exceeds $3$ in arbitrary $\ell_q(\mathbb R^d)$ spaces, regardless of the dimension.

\item For $d = 2$, we establish tight approximation ratios for all $p, q \geq 1$ in $\ell_q(\mathbb R^2)$. In particular, we show that the CM mechanism is a $2^{1-1/\max(p,q)}$-approximation, resolving the conjecture of Goel and Hann-Caruthers (Social Choice and Welfare, 2023) in the Euclidean case and extending the guarantee to arbitrary $\ell_q$ distances.

\item For $d\geq 3$, we refine the dimension-independent approximation guarantee for $p$-norm social costs in $\ell_q(\mathbb R^d)$ spaces, giving upper bounds that depend on the relationship between the social-cost norm $p$ and the underlying distance norm $q$. This generalizes the recent result of Gravin and Jia (STOC, 2025) for the utilitarian social cost.
\end{itemize}
\end{abstract}

\thispagestyle{empty}
\clearpage

\tableofcontents
\clearpage

\section{Introduction}
\input{intro}

\section{Preliminaries}
\input{prelims}

\section{Monotone Symmetric Norm Objectives in \texorpdfstring{$\R^d$}{Rd}}\label{sec:monotone-norm}
\input{monotone-norm}

\section{A Program for \texorpdfstring{$p$-Norm Social Costs}{p-Norm Social Costs}}\label{sec:program}
\input{program}

\section{\texorpdfstring{$p$-Norm Social Costs in $\ell_q(\R^2)$}{p-Norm Social Costs in lq(R2)}}\label{sec:plane}
\input{plane}

\section{\texorpdfstring{$p$-Norm Social Costs in $\ell_q(\R^d)$}{p-Norm Social Costs in lq(Rd)}}\label{sec:higher}
\input{higher-dimension}

\clearpage
\section{Conclusion}
\input{conclusion}

\clearpage
\bibliography{mybib}

\bibliographystyle{apalike}

\appendix

\section{Proofs from \texorpdfstring{\Cref{sec:higher}}{Higher Dimension}}
\input{higher-appendix}

\end{document}

%% file: intro.tex
Facility location is a fundamental problem at the intersection of combinatorial optimization and mechanism design. In its canonical form, there are $n$ agents positioned at locations $x_1, \dots, x_n$ within a metric space, and the goal is to select a facility location $f$ that (approximately) minimizes a social cost derived from the distances between the agents and the facility. Within the optimization community, research has focused extensively on generalizations of the problem, including multiple facilities \citep{hochbaum1985best,charikar1999constant,jain2001approximation} and a variety of cost functions \citep{chakrabarty_approximation_2019,abbasi_parameterized_2023}. We refer the reader to \citep{laporte2020introduction} for a comprehensive survey.

In contrast, the mechanism design community has primarily focused on the strategic component of the facility location problem. In this setting, a mechanism must operate on the agents' \emph{reported} locations, which may differ from their true locations. Since agents may strategically misreport their location to manipulate the outcome, the goal is to design \emph{strategyproof} mechanisms, which incentivize agents to truthfully report their locations. A long line of work aims to determine the approximation guarantees of these mechanisms \citep{procaccia2013approximate,FotakisT10,FotakisT14,Meir:2019aa,walsh2020strategy,Goel:2023aa,Gravin:2025aa}. The strategic facility location problem has also emerged as a sandbox for testing new concepts in mechanism design, including truthfulness without monetary payments \citep{procaccia2013approximate,SerafinoV16}
and mechanisms with predictions \citep{AgrawalBGTX22, XuL22, barak2024mac, Gravin:2025aa}.

While strategyproofness may seem like an innocuous constraint, it introduces unique challenges that do not arise in the standard algorithmic setting. Some of these challenges are already visible in closely related variants of the problem. For instance, in the two-facility setting, \citet{LuSWZ10} show that deterministic strategyproof mechanisms cannot achieve an approximation ratio better than $\Omega(n)$ for the sum-of-distances objective, even in the line metric. This contrasts with the non-strategic $k$-median setting, where deterministic algorithms have long been known to achieve constant-factor approximations even in general metric spaces \citep{charikar1999constant,jain2001approximation}. 
This contrast illustrates that strategyproofness can substantially change the approximation landscape of facility location problems.

Against this backdrop, we return to the single-facility setting in $\ell_q(\mathbb R^d)$, where each agent's individual cost is the distance from their location to the chosen facility. We focus on another crucial dimension of the facility location problem, namely the choice of social cost objective. Traditionally, assignment problems have focused on the \textit{utilitarian} social cost (i.e., the sum of individual costs), or the \textit{egalitarian} social cost (i.e., the maximum individual cost). While these objectives are compelling because they convey how the ``average'' agent or the ``worst-off'' agent is impacted, it may be desirable for a mechanism not to be overly sensitive to any particular objective.
An ideal mechanism would \textit{simultaneously} provide good approximation guarantees for a general collection of objectives (e.g., the class of $p$-norm social costs). 
This kind of guarantee asks whether the same mechanism remains effective across different ways of aggregating individual costs, thereby giving a more complete picture of how costs are distributed across agents.

Despite the extensive work on facility location over the last two decades, our understanding of the approximation guarantees of deterministic strategyproof mechanisms in $\ell_q(\mathbb R^d)$ spaces under $p$-norm social costs remains substantially limited. Below, we briefly review the current landscape.

\paragraph{The line ($d = 1$).}
This setting is well-understood.
\citet{Feigenbaum:2017aa} show that the \emph{median}
mechanism, a simple rule proposed by \citet{Black48} that returns the median reported location, provides a $2^{1 - 1/p}$ approximation ratio, and that this guarantee is optimal among all deterministic, strategyproof mechanisms.

\paragraph{The plane ($d = 2$).} As we increase the dimension, considerable gaps emerge. In $\ell_2(\mathbb R^2)$, the \emph{coordinate-wise} median (CM) mechanism, which applies the median mechanism on each coordinate, is known to have the optimal approximation ratio among all deterministic, anonymous, and strategyproof mechanisms. However, its exact approximation guarantee for an arbitrary $p$-norm social cost remains an open question. \citet{Goel:2023aa} show that this ratio lies in the interval $[2^{1 - 1/p}, 2^{3/2 - 2/p}]$ and conjectured that the true bound is at the lower end of this range.

\begin{conjecture}[\citealp{Goel:2023aa}]
In the Euclidean space $\ell_2(\mathbb R^2)$, under the $p$-norm social cost for any $p \ge 2$, the CM mechanism is a $2^{1 - 1/p}$-approximation.
\end{conjecture}

\paragraph{Higher dimensions ($d \ge 3$).}
The literature is even sparser. The only known approximation guarantees are limited to the utilitarian social cost ($p = 1$). A recent result by \citet{Gravin:2025aa} demonstrates that for this $p=1$ case in $\ell_q(\mathbb R^d)$, the approximation ratio of the CM mechanism is bounded by a value that approaches $3$ as $q \to \infty$. This result was surprising because it ran counter to a widely held belief that the mechanism has an approximation ratio of $\Omega(\sqrt{d})$ in $\mathbb R^d$. However, it remained unclear whether similar dimension-independent guarantees hold for other social objectives.
\medskip

As the preceding discussion illustrates,
the median mechanism serves as a natural benchmark for the class of deterministic strategyproof mechanisms for facility location.
Notably, in $\ell_2(\mathbb{R}^2)$, the mechanism is also uniquely characterized as the only deterministic, strategyproof mechanism that is both Pareto optimal and anonymous \citep{peters1993range}. Motivated by its fundamental role in the facility location setting, our work analyzes the approximation guarantees of the CM mechanism in general normed spaces $\ell_q(\mathbb{R}^d)$ under arbitrary $p$-norm social costs.

\subsection{Our Contributions}
We provide a significantly sharper understanding of the CM mechanism for general objectives.

\paragraph{Constant approximation ratio for monotone symmetric norms.}
We first show that the constant, dimension-independent approximation guarantee of \citet{Gravin:2025aa}
is not specific to the utilitarian objective. In fact, the CM mechanism has
a constant approximation ratio for every monotone symmetric norm applied to
the vector of individual costs, a class that includes all $p$-norm social costs.
\begin{restatable}{theorem}{thmmonotonesocialcost}\label{thm:monotone-social-cost}
In the normed space $\ell_q(\R^d)$ with $q\in[1,\infty]$ and $d\ge1$, for every $n\ge1$ and every monotone symmetric norm $\phi$ on $\R^n$, the CM mechanism has approximation ratio $\alpha_\phi(\M)\le3$. Moreover, when $d=2$, $\alpha_\phi(\M)\le2$.
\end{restatable}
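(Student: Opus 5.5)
The plan is to reduce everything to a single scalar inequality. The tool is the standard majorization (Ky Fan) characterization of monotone symmetric norms. If $a,b\in\R^n_{\ge0}$ and, for every $k$, the sum of the $k$ largest entries of $a$ is at most $c$ times the sum of the $k$ largest entries of $b$, then $\phi(a)\le c\,\phi(b)$ for every monotone symmetric norm $\phi$.

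Fix an instance $x_1,\dots,x_n$. Let $m$ be the CM output and $o$ an optimal facility for $\phi$. Write $c(y)=(\|x_i-y\|_q)_i$. By the triangle inequality, $c(m)\le c(o)+\|o-m\|_q\mathbf 1$ entrywise. Monotonicity and the triangle inequality for $\phi$ then give
\[
\phi(c(m))\le \phi(c(o))+\|o-m\|_q\,\phi(\mathbf 1).
\]
So the bound $3$ follows once I show $\|o-m\|_q\,\phi(\mathbf 1)\le 2\,\phi(c(o))$.

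By Ky Fan, this holds if $k\|o-m\|_q\le 2\cdot(\text{top-}k\text{ sum of }c(o))$ for every $k$. The top-$k$ sum is at least $(k/n)\sum_i c_i(o)$, so it suffices to prove the single utilitarian-type inequality
\[
\tfrac n2\|m-o\|_q\le\sum_i\|x_i-o\|_q .
\]

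This inequality comes coordinate-wise plus Minkowski.
\begin{itemize}
\item In coordinate $j$, at least $\lceil n/2\rceil$ agents have $x_{ij}$ on the far side of $m_j$ from $o_j$ (including ties). Each of them has $|x_{ij}-o_j|\ge|m_j-o_j|$, so $\sum_i|x_{ij}-o_j|\ge\frac n2|m_j-o_j|$.
\item By the triangle inequality for $\|\cdot\|_q$ applied to the nonnegative vectors $(|x_{ij}-o_j|)_j$, we get $\sum_i\|x_i-o\|_q\ge\big\|\big(\sum_i|x_{ij}-o_j|\big)_j\big\|_q$.
\item By monotonicity of $\|\cdot\|_q$, the right-hand side is at least $\frac n2\|m-o\|_q$.
\end{itemize}
This completes the dimension-free bound $3$ for all $q\in[1,\infty]$. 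When $n$ is even, I will take the median convention in the paper's definition of CM and check that the counting still gives at least $n/2$ agents per coordinate.

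For $d=2$ the additive triangle step loses too much, and this is the main obstacle. I would instead aim for a direct majorization: for every $k$, the top-$k$ sum of $c(m)$ is at most twice the top-$k$ sum of $c(y)$, for every $y$. By translation and reflection, take $m=0$ and $y$ in the closed positive quadrant.

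The first thing I would try is a charging scheme. The two medians split the agents into the four closed quadrants around $m$, and by the median property the negative quadrant together with either adjacent half-quadrant holds at least half the agents.
\begin{itemize}
\item Agents in the closed negative quadrant satisfy $\|x_i-m\|_q\le\|x_i-y\|_q$ coordinate-wise, so they cost nothing extra.
\item For every other agent I would construct an injection $\sigma$ into the agents with $\|x_i-m\|_q\le\|x_i-y\|_q+\|x_{\sigma(i)}-y\|_q$.
\end{itemize}
Such an injection gives a pointwise inequality of the form $c(m)\le c(y)+Pc(y)$ with $P$ a partial permutation matrix, and hence $\phi(c(m))\le 2\phi(c(y))$. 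The intended matching pairs agents in the positive quadrant with agents in the negative quadrant, and agents in the two mixed quadrants with each other, using that $0$ lies coordinate-wise between them.

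The delicate part is ensuring the injection exists when the quadrant counts are unbalanced. In that case I would fall back to a fractional (doubly substochastic) matching, which suffices by Birkhoff–von Neumann and the convexity of $\phi$.
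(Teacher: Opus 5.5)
Your argument for the bound $3$ is correct and is essentially the paper's. It uses the same split $\|x_i-m\|_q\le\|x_i-o\|_q+\|o-m\|_q$, the same averaging bound on top-$k$ sums, and the same key inequality $\frac n2\|m-o\|_q\le\sum_i\|x_i-o\|_q$. The only difference is that you derive that inequality by coordinate-wise median counting plus Minkowski, where the paper uses a H\"older dual vector $w$ with $w\cdot f=\|f\|_q$.

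The $d=2$ part has a genuine gap, at exactly the step you call delicate. The whole content of the planar bound is the existence of the injection $\sigma$, and you do not establish it. Your proposed fallback cannot close the gap. For a fixed compatibility graph (which agent may be charged to which), a fractional doubly substochastic assignment saturating the charged agents exists if and only if an integral one does, by fractional Hall's condition or integrality of the bipartite matching polytope. So passing to fractional matchings rescues nothing. The worry is also real under a fixed quadrant convention. For $x=((0,5),(0,6),(1,0),(-2,-2),(-3,-3))$ the CM output is $0$, but counting zero coordinates as nonpositive puts two agents in the $(-,+)$ quadrant and only one in $(+,-)$. The mixed agents then cannot be matched with each other as you intend.

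The missing idea is the paper's balance constraint. Assign signs to zero coordinates the way the median counts them, so that every coordinate has exactly $\lfloor n/2\rfloor$ plus signs. This is possible because at most $\lfloor n/2\rfloor$ entries of each coordinate are strictly positive and at least $\lfloor n/2\rfloor$ are nonnegative. Then $|S_{++}|+|S_{+-}|=|S_{++}|+|S_{-+}|=\lfloor n/2\rfloor$ forces $|S_{+-}|=|S_{-+}|$, and the total count forces $|S_{--}|\in\{|S_{++}|,|S_{++}|+1\}$; in the example, $(0,5)$ moves into $S_{++}$. With this balance in hand:
\begin{itemize}
\item your intended injection exists;
\item each matched pair has opposite signs in every coordinate, so $\|a\|_q\le\|a-b\|_q\le\|a-y\|_q+\|b-y\|_q$;
\item agents of $S_{--}$ need no partner;
\item your partial-permutation step $\phi(c(m))\le\phi(c(y))+\phi(Pc(y))\le2\phi(c(y))$ then finishes the proof.
\end{itemize}
That final step is a clean equivalent of the paper's concatenation of the pairwise majorizations $(\|a\|_q,\|b\|_q)\preceq2(\|a-y\|_q,\|b-y\|_q)$.
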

Unlike the more involved analysis of \citet{Gravin:2025aa} for the utilitarian
case, which proceeds through a program relaxation and the solution of a system
of differential equations, our proof of \Cref{thm:monotone-social-cost} uses a
majorization argument based on standard consequences of Ky Fan dominance.
While simple, the argument perhaps unexpectedly reveals that the CM mechanism
controls not just the total cost, but also the cumulative cost borne by the
worst-off agents at various scales.

\paragraph{Tight $p$-norm approximation ratios in $\ell_q(\mathbb R^2)$.}
The preceding theorem gives a broad constant guarantee, but does not capture
the sharper dependence on $p$ and $q$. We next specialize to $p$-norm social
costs in the plane and determine the asymptotically tight approximation ratio
of the CM mechanism for every $p,q\ge1$.
Specifically, we show that the CM mechanism is a $2^{1-1/\max(p,q)}$-approximation, resolving the conjecture of \citet{Goel:2023aa} for the Euclidean plane and extending the approximation guarantees to all $\ell_q(\mathbb R^2)$ spaces.\footnote{Concurrently and independently, \citet{chan2026strategyproofmechanismseuclideanfacility} also resolved this conjecture for the Euclidean plane. We provide a detailed comparison of our concurrent results in \Cref{sec:related-work}.} We then complement these upper bounds with matching lower bounds.
\begin{restatable}{theorem}{thmupperboundtwo}\label{thm:cm-2d-upper-bound}
In the normed space $\ell_q(\R^2)$ with $p,q \geq 1$, the CM mechanism $\M$ has approximation ratio
\begin{equation*}
\alpha_p(\M) \leq
\begin{cases}
2^{1 - 1/p} & \text{if } 1 \leq q \leq p,  \\
2^{1 - 1/q} & \text{if } 1 \leq p < q.
\end{cases}
\end{equation*}
\end{restatable}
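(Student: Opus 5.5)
We need to bound $\cost_p(\M(x))/\cost_p(y)$ for every profile $x_1,\dots,x_n\in\R^2$ and every candidate facility $y$, where $\cost_p(f)=\bigl(\sum_i \qnorm{x_i-f}^p\bigr)^{1/p}$.

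\textbf{Normalization.} Translate so that $\M(x)=0$. By definition of the coordinate-wise median, at least half of the agents have nonpositive first coordinate and at least half have nonnegative first coordinate, and likewise for the second coordinate.

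\textbf{Pointwise comparison.} Write $y=(y_1,y_2)$ and suppose, by the symmetry $x\mapsto -x$ in each coordinate, that $y_1,y_2\ge 0$. Partition the agents into the four closed quadrants around $0$. For an agent $x_i$ in the quadrant opposite to $y$, every coordinate of $x_i-y$ has absolute value at least the corresponding coordinate of $x_i$. Hence $\qnorm{x_i-y}\ge\qnorm{x_i}$ by monotonicity of the $\ell_q$ norm. For the remaining agents we use only the triangle inequality $\qnorm{x_i}\le\qnorm{x_i-y}+\qnorm{y}$. The median property gives a lower bound on the cost at $y$ through the agents lying on the far side of each coordinate axis. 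Each such agent $i$ satisfies $\qnorm{x_i-y}\ge|x_{i,j}-y_j|\ge y_j$ in the relevant coordinate $j$.

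\textbf{Reduction to a finite-dimensional program.} Following the approach of the program in \Cref{sec:program} (which I assume formalizes exactly this), I will reduce to a worst case in which all agents of each type sit at a few canonical positions. The natural candidates are the extreme points $0$, $(y_1,0)$, $(0,y_2)$ and $y$ itself, with the mass constraints imposed by the two medians. The convexity of $t\mapsto t^p$ and the fact that $\cost_p(0)^p$ is a sum of convex functions of the positions suggest that the worst case puts agents at these corner points. The reduced problem is then to maximize
\[\frac{a\cdot 0+b\,y_1^p+c\,y_2^p+e\,\qnorm{y}^p}{a\,\qnorm{y}^p+b\,y_2^p+c\,y_1^p+e\cdot 0}\]
over masses $a,b,c,e\ge 0$ satisfying the median constraints $a+c\ge b+e$ and $a+b\ge c+e$. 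Note that the numerator mass at $y$ cannot exceed half of the total.

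\textbf{Solving the program.} Setting $y=(1,t)$, the optimum should be attained either at $t=0$ with $a=e=\tfrac12$, or at $t=1$ with all mass split between $0$ and $y$. At $t=0$ the problem is one-dimensional and the ratio reduces to the line bound $2^{1-1/p}$ of \citet{Feigenbaum:2017aa}. The diagonal configuration instead produces a factor comparing $\qnorm{(1,1)}=2^{1/q}$ with combinations of $y_1^p+y_2^p$. After taking $p$-th roots and optimizing, this yields the $2^{1-1/q}$ bound exactly when $q>p$. So I expect the answer to be $\max(2^{1-1/p},2^{1-1/q})$, with the two regimes determined by comparing the exponents.

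\textbf{Main obstacle.} The hardest step is rigorously justifying the reduction to corner configurations for general $q$. The map $x_i\mapsto\qnorm{x_i}^p/\qnorm{x_i-y}^p$ is not jointly convex, so a direct extreme-point argument fails. My first attempt would be to fix the costs at $y$ and move each agent within the set $\{x:\qnorm{x-y}=r_i\}$, restricted to its quadrant, so as to maximize $\qnorm{x}$. This pushes each agent onto a ray from $y$ or onto an axis. The problem then becomes a one-parameter family per agent, and I would check it with a two-variable calculus inequality of the form $\qnorm{(u,v)}^p\le\lambda(\ldots)$. Proving that inequality uniformly in $p,q$ is where I expect the real difficulty to lie.
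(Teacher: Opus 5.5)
Your proposal has a genuine gap, and the corner reduction it rests on is false, not merely unjustified. In your reduced program, the median constraints $a+c\ge b+e$ and $a+b\ge c+e$ give $a-e\ge |b-c|$. So the numerator minus the denominator equals $(b-c)(y_1^p-y_2^p)+(e-a)\qnorm{y}^p$, which is at most $|b-c|\bigl(|y_1^p-y_2^p|-\qnorm{y}^p\bigr)\le 0$. Hence the program never exceeds ratio $1$. In particular, $t=0$ with $a=e=1/2$ gives $1$, not $2^{1-1/p}$.

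The extremal instances put agents outside the box spanned by $0$ and $y$:
\begin{itemize}
\item After translating, agents at $0$ and $(2,0)$ with $y=(1,0)$ give ratio $2^{1-1/p}$.
\item Putting $k$ agents at each of $(1,0)$ and $(0,1)$ plus one agent at $0$, and comparing with $y=(1/2,1/2)$, gives ratio tending to $2^{1-1/q}$.
\end{itemize}
So your ``solving the program'' step does not yield the theorem. Your fallback (moving agents along $\ell_q$-spheres around $y$, then proving an unspecified two-variable inequality) is left open, so no bound is actually proved. The per-agent estimates you list are opposite-quadrant domination, the triangle inequality, and the median lower bound through $y_j$. These essentially give the non-tight dimension-free constant $3$; bounding agents one at a time loses too much.

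The missing idea is the exact balance between opposite quadrants that the coordinate-wise median enforces in the plane. Duplicate every agent so that $n$ is even; this changes neither $\M(x)$ nor the ratio. Give zero coordinates the signs used by the median. Then $|S_{++}|+|S_{+-}|=n/2=|S_{++}|+|S_{-+}|$, and together with the total count this forces $|S_{+-}|=|S_{-+}|$ and $|S_{++}|=|S_{--}|$. So the agents can be matched into pairs $(a,b)$ lying in opposite closed quadrants. It then suffices to prove $\qnorm{a-y}^p+\qnorm{b-y}^p\ge\lambda^p\bigl(\qnorm{a}^p+\qnorm{b}^p\bigr)$ for every $y$, using $|a_j-b_j|=|a_j|+|b_j|$.

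For $q\le p$:
\begin{itemize}
\item Convexity of $z\mapsto\qnorm{z}^p$ gives the lower bound $2^{1-p}\qnorm{a-b}^p$.
\item Then $\qnorm{a-b}^q\ge\qnorm{a}^q+\qnorm{b}^q$ and superadditivity of $t\mapsto t^{p/q}$ give $\lambda=2^{1/p-1}$.
\end{itemize}
For $p<q$:
\begin{itemize}
\item First use $\qnorm{z}^p\ge 2^{p/q-1}\qnorm[p]{z}^p$ in $\mathbb{R}^2$.
\item Run the same argument in $\ell_p$.
\item Finish with $\qnorm[p]{z}\ge\qnorm{z}$, giving $\lambda=2^{1/q-1}$.
\end{itemize}
Summing over pairs proves the theorem with no extreme-point reduction. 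The pairing is what lets the large cost $\qnorm{a}$ of one agent be offset by its partner.
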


We derive our upper bounds on the approximation ratio by carefully analyzing the natural generalization of a mathematical program introduced by \citet{Gravin:2025aa} to study the CM mechanism for the utilitarian social cost ($p = 1$). While full details of the program appear in \Cref{sec:program}, we summarize the setup here and outline our refined analysis. At a high level, our goal is to minimize a quantity $\sum_{i \in [n]} g(x_i)$, where $g$ encodes the relative difference between an individual's distance to the optimal facility and their distance to the location chosen by the CM mechanism.

For the utilitarian objective, \citet{Gravin:2025aa} analyze a relaxation of the program that encodes the CM mechanism.
By deriving a uniform bound on the contribution of each $g(x_i)$, they establish an upper bound of $\sqrt{6 \sqrt{3} - 8} \approx 1.55$ for $\ell_2(\mathbb R^d)$. However, their approximation ratio is not tight in $\mathbb R^2$, exceeding the optimal guarantee of $\sqrt{2} \approx 1.41$ established by \citet{Meir:2019aa}.

To attain the exact $2^{1-1/\max(p,q)}$-approximation ratio in $\mathbb R^2$ for an arbitrary $p$-norm objective, we cannot afford to solve a relaxation of the program. Our improved analysis comes from two key observations.
First, we observe that the CM mechanism actually imposes a strict balancing constraint on the locations. In particular, when the output of the CM mechanism is the origin and $n$ is even, there must be an equal number of locations in opposite orthants. While this property trivially holds on the real line ($d=1$), our analysis shows that this phenomenon persists in two dimensions.
This balancing constraint leads us to a second, more crucial observation: we can uniformly lower bound the contribution of \emph{pairs} of locations, rather than a single location. When we take pairs of points $(a, b)$ in opposite orthants, the penalty of $g(a)$ can be offset by $g(b)$. We remark that \citet{Feigenbaum:2017aa} leveraged this second observation to determine the asymptotic ratio of the CM mechanism in $\mathbb R$ for all $p$-norm social costs.

With these two insights, our upper bound analysis of the CM mechanism becomes considerably streamlined, relying on standard inequalities. While such inequalities might suggest potential slack, the lower bounds in \Cref{thm:cm-2d-lower-bound} confirm that our analysis is indeed tight. For the case of $1 \leq q \leq p$, we use a two-agent construction and reason about strategic manipulation in a manner similar to that of \citet{Feigenbaum:2017aa,Goel:2023aa}. For the case of $1 \leq p < q$, we use a ``two clusters and a point'' instance (bearing some resemblance to a construction of \citet{Goel:2023aa}) that forces the CM mechanism to select the single point rather than the midpoint between the clusters.

\paragraph{Sharper $p$-norm guarantees in $\ell_q(\mathbb R^d)$.}
In the case of $d\geq 3$, the CM mechanism no longer enforces the strict balance constraint across opposite orthants observed in two dimensions, making the analysis significantly more involved. We therefore return to the optimization-program framework of \citet{Gravin:2025aa} and extend it to arbitrary $p$-norm social costs, obtaining dimension-independent bounds whose value depends on the relationship between $p$ and $q$.

\begin{restatable}{theorem}{thmupperboundhigh}\label{thm:cm-upper-bound}
In the normed space $\ell_q(\mathbb R^d)$ with $p,q \geq 1$ and $d \geq 1$, the CM mechanism $\M$ has approximation ratio $\alpha_p(\M) \leq \mathtt{UB}(p, q)$ where $\mathtt{UB}(p, q)$ is a function with the following properties:
\begin{enumerate}
    \item For every $p, q \geq 1$, if $p = q$ or $p = 2q$, then $\mathtt{UB}(p,q) = 2^{1 - 1/p}$.
    \item For every $p, q \geq 1$, if $p/q \in (1, 2)$, then $\mathtt{UB}(p, q) \le 1 + 2^{1/p}$.
    \item For every $p, q \geq 1$, if $p/q \in (0, 1) \cup (2, \infty)$, then $\mathtt{UB}(p, q) \leq 1 + 2^{1/\min{(p, q)}}$.
\end{enumerate}
In particular, $\alpha_p(\M) \leq 3$ for every $p, q \geq 1$.
\end{restatable}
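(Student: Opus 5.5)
We will follow the program-based approach that the introduction attributes to \citet{Gravin:2025aa}, generalized to the $p$-norm objective. First, normalize the instance. By translation invariance of CM, assume the mechanism outputs the origin. Then write $o$ for the optimal facility and scale so that $\|o\|_q=1$. The approximation ratio is $\bigl(\sum_i \|x_i\|_q^p\bigr)^{1/p} / \bigl(\sum_i \|x_i-o\|_q^p\bigr)^{1/p}$. Bounding it by $\alpha$ is equivalent to showing $\sum_i g(x_i)\ge 0$, where
\[
g(x)=\alpha^p\|x-o\|_q^p-\|x\|_q^p .
\]
The constraint that CM outputs the origin says that in every coordinate $j$, at least half the points have $x_{i,j}\ge 0$ and at least half have $x_{i,j}\le 0$. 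We relax the discrete configuration to a probability measure $\mu$ on $\mathbb R^d$ with coordinate medians $0$. The task is then to find the least $\alpha$ with $\int g\,d\mu\ge0$ for every such $\mu$ and every unit $o$.

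Second, reduce dimension. Since $\|\cdot\|_q^q$ is a sum over coordinates, we would like each coordinate's contribution to be handled separately. The plan is to find, for each coordinate $j$, a weight function $h_j(t)$ depending only on $\mathrm{sign}(t)$ that is a ``dual certificate'': $g(x)\ge \sum_j \lambda_j\,\mathrm{sign}(x_j)\,c_j$ with constants chosen so that the median constraints make $\int \sum_j \lambda_j \mathrm{sign}(x_j)c_j\,d\mu\ge 0$. Concretely, by symmetry take $o$ in the positive orthant. The median condition gives $\mu(x_j\le 0)\ge 1/2$, so a linear function of $\mathbf 1[x_j> 0]-\mathbf 1[x_j\le0]$ with a nonpositive coefficient integrates to something nonnegative. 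Hence it suffices to prove a pointwise inequality
\[
g(x)\ \ge\ \sum_j \beta_j\bigl(\mathbf 1[x_j\le 0]-\mathbf 1[x_j>0]\bigr)
\]
with $\beta_j\le 0$. Minimizing $g$ over each sign pattern turns this into a finite-dimensional inequality in $\|o\|$, $p$, and $q$. The dimension independence comes from optimizing over how the mass of $o$ (i.e.\ $o_j^q$, summing to $1$) is distributed, using convexity of $t\mapsto t^{p/q}$ when $p\ge q$ and concavity when $p<q$.

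Third, evaluate the resulting one-variable optimization in the regimes listed. When $p=q$, the objective separates exactly across coordinates. Each coordinate then reduces to the one-dimensional problem of \citet{Feigenbaum:2017aa}, which gives $2^{1-1/p}$. When $p=2q$, the function $s\mapsto s^{2}$ applied to $\sum_j |x_j-o_j|^q$ is handled by Cauchy--Schwarz, which again gives a coordinatewise reduction with the same constant. For general ratios we cannot get the exact value. Instead we use the triangle inequality $\|x\|_q\le\|x-o\|_q+1$ together with a crude bound $\sum_i\|x_i-o\|_q^p\ge$ (mass with $x$ on the far side)$\cdot$(distance)$^p$. In each coordinate at least half the mass is at distance at least $o_j$ from $o$. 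Aggregating via the power mean inequality gives a total $\ge 2^{-1}\cdot\min(1,d^{\,\cdot})$-type bound, which after optimization yields $1+2^{1/p}$ or $1+2^{1/\min(p,q)}$. The ``in particular'' clause follows because all three expressions are at most $3$ when $p,q\ge1$.

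The main obstacle is the second step: finding the certificate and proving the pointwise inequality uniformly in $d$. This matters most in the intermediate regime $p/q\in(1,2)$, where neither exact separation nor Cauchy--Schwarz applies. I would first try interpolating between the $p=q$ and $p=2q$ certificates via Hölder's inequality with exponent $p/q$. If that fails, I would fall back on the triangle-inequality bound above.
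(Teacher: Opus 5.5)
Your $p=q$ argument is correct, and simpler than the paper's: the objective separates exactly, and the one-dimensional pairing bound applies coordinatewise against any comparison point. The case $p=2q$, however, has a real gap. Here $\qnorm{x_i}^p=(\sum_j|x_{ij}|^q)^2$, and the cross terms $|x_{ij}|^q|x_{ik}|^q$ with $j\neq k$ depend on how coordinates are coupled across agents, which CM does not control. For example, take $o_1,o_2>0$ and the two points $(o_1,L)$ and $(0,0)$. The CM output is $0$, yet $\sum_i|x_{i1}|^q|x_{i2}|^q=o_1^qL^q$ is unbounded while $\sum_i|x_{i1}-o_1|^q|x_{i2}-o_2|^q=o_1^qo_2^q$ stays fixed. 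Cauchy--Schwarz in the only helpful direction, $(\sum_ja_j)^2\le d\sum_ja_j^2$, followed by the one-dimensional bound per coordinate, gives $d^{1/p}2^{1-1/p}$. That loses exactly the dimension independence being claimed.

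The missing ingredient is the pointwise analysis you defer as ``minimizing $g$ over each sign pattern.'' In the paper this works as follows. Stationarity within a signature class gives $(1-c(z))z_j=f_j$ and rules out mixed signs (\Cref{lem:higher-d-optimal-z}). Then $c(z)$ is eliminated by \Cref{clm:C-optimization}. For $p\ge q$ one obtains $g(x_i)\ge G(X_i)=(1-X_i)^{p/q}-\delta_2X_i^{p/q}$, where $X_i=\sum_{j:x_{ij}\ge0^+}f_j^q$, $\sum_iX_i=n/2$, and $\delta_2=\lambda^p(1-\lambda^{p/(p-1)})^{1-p}$. For $p/q\in\{1,2\}$, linearity or convexity of $G$ then gives $\sum_iG(X_i)\ge0$ whenever $\delta_2\le1$, i.e.\ $\lambda\le2^{1/p-1}$. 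Your certificate is the Lagrangian dual of this relaxed program (a supporting line of the convex envelope of $G$ at $1/2$). It is sound in spirit, but without this computation it is only a reformulation. Also, as displayed it has the wrong sign: with $\beta_j\le0$, its right-hand side integrates to something $\le0$.

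For $p/q\in(1,2)$, neither of your routes reaches $1+2^{1/p}$. Set $a_i=\qnorm{x_i}^q$, $b_i=\qnorm{x_i-o}^q$, and $r=p/q$. Log-convexity of $r\mapsto\sum_ia_i^r$ bounds the mechanism's cost from above by the $r=1,2$ moments. On the optimum's side, however, you would need the reverse inequality $(\sum_ib_i)^{2-r}(\sum_ib_i^2)^{r-1}\le C\sum_ib_i^r$. This already fails for one-dimensional instances with CM output $0$. Because the median is nonlinear, no Riesz--Thorin-type interpolation is available either.

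Your fallback should be carried out with the weights $o_j^q$, with nothing $d$-dependent. Then $\qnorm{x_i-o}^p\ge Y_i^{p/q}$, where $Y_i=\sum_{j:x_{ij}\le0}o_j^q$ and $\sum_iY_i\ge n/2$. Hence $\sum_i\qnorm{x_i-o}^p\ge n/2$ for $p\le q$, and $\ge n2^{-p/q}$ for $p\ge q$ by Jensen. Minkowski then bounds the ratio by $1+2^{1/\min(p,q)}$. This does prove item~3 and the final bound of $3$, more directly than the paper's program analysis. But for $p/q\in(1,2)$ it yields $1+2^{1/q}>1+2^{1/p}$.

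The paper gets the sharper constant from the relaxed program. $G$ is concave and then convex (\Cref{lem:higher-d-second-derivative}), so extremal solutions take values in $\{0,b\}$. The tangency conditions $u(b)=u'(b)=0$ for $u(b)=G(b)+(2b-1)G(0)$ force $b^*\ge1/2$ and $\delta_2\ge1/2$. This gives $\mathtt{UB}(p,q)\le(1+2^{1/(p-1)})^{(p-1)/p}\le1+2^{1/p}$.
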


Extending the analysis to an arbitrary $p$-norm social cost introduces technical challenges arising from the relationship between the cost function and $q$-norm distances---specifically, the subadditivity or superadditivity that occurs depending on the ratio $p/q$.

\subsection{Related Work}\label{sec:related-work}
The existing literature on facility location is extensive and the works mentioned in the introduction are by no means comprehensive. In addition to the textbook of \citet{laporte2020introduction}, we refer the reader to the surveys from the perspective of mechanism design \citep{chan2021mechanism} and social choice \citep{barbera1993generalized}. We also briefly highlight some closely related work.

\paragraph{Independent and Concurrent Work.}
The contemporaneous work of
\citet{chan2026strategyproofmechanismseuclideanfacility} also investigates strategic facility location under $p$-norm social costs in the Euclidean plane $\ell_2(\mathbb R^2)$. They independently resolve the conjecture of \citet{Goel:2023aa} and also give approximation guarantees for various randomized mechanisms. Their stated bounds for the CM mechanism are for the Euclidean plane, whereas our bounds apply in arbitrary $\ell_q(\mathbb R^2)$ spaces. They also do not consider the more general setting of $d \ge 3$.

A revised version of their work also proves that CM is a $2$-approximation for symmetric monotone norm objectives in the Euclidean plane. Prompted by this result and by feedback from anonymous reviewers, we considered the same broader objective class. In this direction, our main contribution is a dimension-independent $3$-approximation for CM under every monotone symmetric norm objective in arbitrary $\ell_q(\R^d)$. We also record a corresponding planar statement in $\ell_q(\R^2)$.

\paragraph{Strategyproof Mechanisms.}
The facility location problem can naturally be framed as an election.
Early works on voting with single-peaked preferences or their natural extensions include \citep{moulin1980strategy, barbera1993generalized,barbera1998strategy,schummer2002strategy,ching1997strategy}. Notably, \citet{kim1984nonmanipulability,peters1993range} study the voting problem when preferences are expressed in $\ell_2(\mathbb R^2)$, immediately translating to the setting we study. In fact, the prevailing characterization of deterministic strategyproof mechanisms for facility location in two dimensions also emerges from these works.
Providing a crisp specification for higher dimensions is an open problem.
Apart from the works of \citet{border1983straightforward,peremans1997strategy}, who consider classes of quadratic separable preferences in $\mathbb R^d$, our understanding of the general problem remains limited.

\paragraph{Randomized Facility Location.} Several works have investigated whether randomized mechanisms achieve improved approximation guarantees while preserving strategyproofness \citep{procaccia2013approximate,tang2020characterization,BalkanskiGS24}. Many of the proposed mechanisms are only strategyproof \textit{in expectation}. However, the recent work of \citet{barak2026facility} introduces a (universally) strategyproof randomized mechanism that attains an approximation ratio of $4/\pi \approx 1.27$ for the utilitarian objective in $\mathbb R^2$, which is strictly better than the CM mechanism.

\paragraph{Facility Location in Other Metrics.}
We note that there has been interest in studying facility location in other metric spaces. When the metric is a cycle, it is known that the only onto and strategyproof mechanism is a dictatorship \citep{schummer2002strategy}. Moreover, any deterministic strategyproof mechanism can have a worst-case approximation ratio of $\Omega(n)$~\citep{DokowFMN12}. The cycle results significantly limit the settings to trees \citep{DokowFMN12,FeldmanW13}, or to instances with a constant number of agents~\citep{Meir:2019aa}. We note that \citet{FilimonovM22} show that a variant of the median mechanism achieves optimal social cost on trees.

\paragraph{Additional Facility Location Variants.} Beyond the standard model, researchers have explored generalizations such as optimizing squared $\ell_2$ social costs \citep{FeldmanW13} and locating obnoxious facilities to maximize distance from agents \citep{cheng2013strategy, mei2019, li2023obnoxious, kanellopoulos2025constrained}. Other variants include dual preferences \citep{FeigenbaumS15, zou15}, concave costs \citep{FotakisT16}, distributed selection of multiple facilities \citep{filoratsikas2021}, and false-name-proofness \citep{todo11}, which extends strategyproofness to settings where agents can strategically inflate their influence by reporting multiple locations.

%% file: prelims.tex
\subsection{Notation and Terminology}
In the strategic facility location problem, the true locations of agents $[n]$ are denoted by the location profile $x = (x_1,x_2, \dots, x_n) \in (\R^d)^n$.
When a mechanism $\M$ takes these reported locations as input, it produces a facility location $\M(x) \in \R^d$.
For an agent located at $x_i$, the individual \emph{cost} of a facility location $f \in \R^d$ is the distance $\qnorm{f - x_i}$ for some $q \in [1, \infty]$. 

\begin{definition}
A mechanism $\M$ is (individually) \emph{strategyproof} if no agent can strictly decrease their cost by reporting a false location. Formally, for any agent $i \in [n]$ and any two location profiles $x, x' \in (\R^d)^n$ that differ only in the $i$-th agent's location (i.e., $x_j = x_j'$ for all $j \neq i$),
\begin{equation*}
    \qnorm{\M(x) - x_i} \le \qnorm{\M(x') - x_i}.
\end{equation*}
\end{definition}

If a mechanism is strategyproof, reporting a truthful location is a dominant strategy. Thus, we can safely assume that such mechanisms operate on the agents' true locations.

\paragraph{Approximation Ratio.}
Let $\phi: \R^n \to \R$ be a norm.
The $\phi$-norm \emph{social cost} of a facility $f$ aggregates the individual costs using the norm $\phi$:
\begin{equation*}
 \cost_\phi(f, x) = \phi \left (\|f - x_1\|_q, \|f - x_2\|_q, \dots, \|f - x_n\|_q \right ).
\end{equation*}
Given a set of locations $x \in (\R^d)^n$, we define the \textit{approximation ratio} of a mechanism $\M$ to be the worst-case ratio of the mechanism's social cost compared to the optimal social cost:
\begin{equation*}
    \alpha_\phi(\M) = \sup_{x \in (\R^{d})^n} \frac{\cost_\phi(\M(x), x)}{\min_{f \in \R^d} \cost_\phi(f, x)}.
\end{equation*} 

In this work, the norm $\phi$ will be \emph{monotone} (i.e., $\phi(u)\le \phi(v)$ if for all $j\in[n]$, $|u_j|\le |v_j|$) and \emph{symmetric} (i.e., $\phi$ is invariant under coordinate permutations).
Note that monotonicity implies invariance under coordinate sign
changes.
When $\phi$ is taken to be an $\ell_p$ norm, we will let $\cost_p$ and $\alpha_p$ denote the social cost and approximation ratio respectively.

\subsection{Coordinate-wise Median}
In this work, we focus on the coordinate-wise median (CM) mechanism. For a given location profile $x$, this mechanism calculates the standard one-dimensional median for each of the $d$ coordinates independently.
\footnote{When $n$ is even, the mechanism can break ties arbitrarily (in favor of either the larger or the smaller value), but consistently across dimensions. In this work, we assume that ties are broken in favor of the smaller value.}
If $x_{i,j}$ denotes the $j$-th coordinate of agent $i$'s location, the mechanism outputs a facility location $m = (m_1,m_2, \dots, m_d) \in \R^d$, where
\begin{equation*}
m_j = \operatorname{median}(x_{1,j}, x_{2,j}, \dots, x_{n,j}) \quad \text{for each } j \in [d].
\end{equation*}
It is well-known that the CM mechanism is strategyproof and \emph{anonymous} (i.e., for any permutation $\pi$ of the agents $[n]$, it holds that $\M(x_1,x_2, \dots, x_n) = \M(x_{\pi(1)},x_{\pi(2)}, \dots, x_{\pi(n)})$).

\paragraph{Simplifying Transformations.} 
If the CM mechanism receives a location profile $(z_1,z_2, \dots, z_n) \in (\R^d)^n$, we can make a sequence of transformations that do not affect our analysis of the approximation ratio. Let $f \in \R^d$ denote the optimal facility location under the $\phi$-norm social cost objective. We perform the following three steps for all norms in this work.

\begin{enumerate}
    \item \textit{Translation:} We uniformly translate all points so that $\M(z) = (0,0,\dots, 0)$. Since $\ell_q$-norms are translation invariant, the approximation ratio is preserved.

    \item \textit{Orientation (Non-negative Orthant):} Following the translation, we choose the orientation of the coordinate axes so that the optimal facility $f$ lies in the non-negative orthant. This does not affect individual distances, and hence social costs, since the $\ell_q$ distance is invariant under coordinate sign changes.
    
    \item \textit{Normalization:} If $f = \M(z) = 0$, then the mechanism is optimal on this instance. So, we can assume that $f \neq \M(z)$. We can normalize so that $\qnorm{f} = 1$. This does not affect the approximation ratio since all distances are scaled by the same factor.
\end{enumerate}
Additionally, for the $p$-norm social costs, we also perform the following transformation.
\begin{enumerate}[start=4]
    \item \textit{Duplication:} If $n$ is odd, we can consider the instance where each location appears twice (i.e., the profile $y = (z_1, z_1, z_2, z_2, \dots, z_n, z_n)$). The optimal facility location for this new instance is still $f$ and we also have $\M(y) = \M(z)$, so the approximation ratio is unchanged. Thus, we assume without loss of generality that $n$ is even.
\end{enumerate}

%% file: monotone-norm.tex
In this section, we bound the approximation ratio of the CM mechanism when the social cost is obtained by applying an arbitrary monotone symmetric norm to the vector of individual distances. 
This class of norms spans a large set of objectives that one could use to evaluate the effectiveness of the coordinate-wise median, including the typical utilitarian and egalitarian social costs.

Recently, \citet{Gravin:2025aa} showed that the CM mechanism is at most a $3$-approximation for the utilitarian social cost, a significant improvement over the $O(\sqrt{d})$ upper bound of \citet{Meir:2019aa} which was widely believed to be tight. Our main result in this section shows that this phenomenon is not limited to the utilitarian social cost: for any monotone symmetric norm, the CM mechanism is at most a $3$-approximation.

\thmmonotonesocialcost*

To prove this theorem, we use weak majorization and the Ky Fan dominance theorem
(see, e.g., \citealp{Hardy:1952aa,marshall_inequalities_2011}). For vectors $u,v\in\R_{\ge0}^n$, write $u\preceq v$ (and say $v$ \textit{weakly majorizes} $u$) if
\[
    \sum_{i=1}^k u_i^\downarrow \le \sum_{i=1}^k v_i^\downarrow
    \qquad\text{for every } k\in[n],
\]
where $u^\downarrow$ and $v^\downarrow$ denote the nonincreasing
rearrangements. The standard Ky Fan dominance theorem states that if $u\preceq v$ and $\phi$ is a monotone symmetric
norm, then $\phi(u)\le \phi(v)$. 
Since norms are homogeneous, the following lemma is immediate.

\begin{lemma}\label{lem:majorization-reduction}
    For any monotone symmetric norm $\phi$ on $\R^n$, if $u,v\in\R_{\ge0}^n$ and
$u\preceq Cv$ for some $C\ge0$, then $\phi(u)\le C\phi(v)$.
\end{lemma}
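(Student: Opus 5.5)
The plan is to reduce directly to the Ky Fan dominance theorem, using only homogeneity of $\phi$ and the fact that weak majorization is defined coordinatewise on sorted partial sums. Fix a monotone symmetric norm $\phi$ on $\R^n$, nonnegative vectors $u,v\in\R_{\ge0}^n$, and a constant $C\ge0$ with $u\preceq Cv$.

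First I will note that $Cv\in\R_{\ge0}^n$. Because $C\ge0$, scaling by $C$ preserves the nonincreasing order, so $(Cv)^\downarrow = C\,v^\downarrow$. This identity is the only place where the sign of $C$ matters; a negative $C$ would reverse the rearrangement. Consequently the hypothesis $u\preceq Cv$ states exactly that
\[
\sum_{i=1}^k u_i^\downarrow \le C\sum_{i=1}^k v_i^\downarrow \qquad\text{for every } k\in[n].
\]

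Next I will apply the Ky Fan dominance theorem to the pair $u$ and $Cv$. Both vectors lie in $\R_{\ge0}^n$, $u\preceq Cv$, and $\phi$ is monotone and symmetric, so the theorem gives $\phi(u)\le\phi(Cv)$.

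Finally, absolute homogeneity of the norm gives $\phi(Cv)=|C|\,\phi(v)=C\,\phi(v)$, again using $C\ge0$. Chaining the two steps yields $\phi(u)\le C\phi(v)$.

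The degenerate case $C=0$ is already covered by this argument. Here $u\preceq 0$ forces every partial sum of $u^\downarrow$ to be nonpositive, so $u=0$, and the conclusion reads $0\le0$. I do not expect any real obstacle. The only point needing care is checking that the rearrangement commutes with nonnegative scaling, so that the majorization hypothesis is read correctly before Ky Fan is invoked.
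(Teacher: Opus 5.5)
Your proposal is correct and follows the paper's argument exactly: the paper also treats this lemma as immediate, applying Ky Fan dominance to the pair $u$ and $Cv$ and then using homogeneity to write $\phi(Cv)=C\phi(v)$. Your check that $(Cv)^\downarrow = C\,v^\downarrow$ for $C\ge0$ and your treatment of the case $C=0$ are correct; the paper leaves both implicit.
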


Our approach is to find appropriate values of the constant $C$ in the above lemma that suffice for our setting.
In the following proposition, we show that $C=3$ suffices in general.
\begin{proposition}\label{prop:monotone-majorization-high-d}
Suppose that $\M(x)=0$ and $f\ge0$. Then
\[
    \left(\qnorm{x_1},\qnorm{x_2},\dots,\qnorm{x_n}\right)
    \preceq
    3\left(\qnorm{x_1-f},\qnorm{x_2-f},\dots,\qnorm{x_n-f}\right).
\]
\end{proposition}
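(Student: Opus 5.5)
The plan is to check the weak majorization one value of $k$ at a time. We compare the top-$k$ sums through the triangle inequality and then lower bound the top-$k$ sums of the right-hand vector using the median property. Write $u_i=\qnorm{x_i}$ and $v_i=\qnorm{x_i-f}$; by the normalization step of the preliminaries, $\qnorm{f}=1$. We must show $\sum_{i\le k}u_i^\downarrow\le 3\sum_{i\le k}v_i^\downarrow$ for every $k\in[n]$.

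\textbf{Step 1 (triangle inequality).} For every $i$ we have $u_i\le v_i+\qnorm{f}=v_i+1$. Let $S$ be the set of indices of the $k$ largest entries of $u$. Then
\[
\sum_{i\le k}u_i^\downarrow=\sum_{i\in S}u_i\le \sum_{i\in S}v_i+k\le \sum_{i\le k}v_i^\downarrow+k .
\]
It therefore suffices to prove $\sum_{i\le k}v_i^\downarrow\ge k/2$ for every $k$. An average of the $k$ largest entries is at least the average of all entries, so $\sum_{i\le k}v_i^\downarrow\ge \frac{k}{n}\sum_{i=1}^n v_i$. The whole claim thus reduces to the single utilitarian-type inequality $\sum_{i=1}^n v_i\ge n/2$.

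\textbf{Step 2 (dual certificate).} Pick a dual vector $w$ with $\|w\|_{q^*}=1$ and $\langle w,f\rangle=\qnorm{f}=1$, where $q^*$ is the conjugate exponent. Since $f\ge 0$, we may take $w\ge 0$: flipping the signs of coordinates of $w$ to match $f$ only helps, and $w_j=0$ can be taken wherever $f_j=0$. By Hölder, for each agent
\[
v_i\ge \sum_j w_j\,|f_j-x_{i,j}|\ge \sum_j w_j f_j\,\mathbf 1[x_{i,j}\le 0],
\]
because $x_{i,j}\le 0$ and $f_j\ge0$ give $|f_j-x_{i,j}|\ge f_j$.

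\textbf{Step 3 (median property).} Since $\M(x)=0$, each coordinate median is $0$. With ties broken toward the smaller value, at least $\lceil n/2\rceil\ge n/2$ agents satisfy $x_{i,j}\le 0$, for each coordinate $j$. Summing Step 2 over $i$ and exchanging the order of summation gives
\[
\sum_i v_i\ge \sum_j w_jf_j\cdot \tfrac n2=\tfrac n2\,\langle w,f\rangle=\tfrac n2 .
\]
Combining with Step 1 yields $\sum_{i\le k}u_i^\downarrow\le \sum_{i\le k}v_i^\downarrow+2\sum_{i\le k}v_i^\downarrow=3\sum_{i\le k}v_i^\downarrow$, which is the stated majorization.

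The main obstacle is Step 2. A coordinate-wise lower bound on $v_i$ must be aggregated correctly in the $\ell_q$ norm, and the dual vector $w$ is what makes the per-coordinate median information add up to exactly $\qnorm{f}$. The sign condition $f\ge0$ is used precisely to ensure that $w\ge0$ and that the relevant half of the agents in each coordinate sits on the nonpositive side.
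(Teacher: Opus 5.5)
Your proposal is correct and matches the paper's proof essentially step for step. Both use the same nonnegative dual certificate $w$ with H\"older's inequality, the same median count giving $\sum_i v_i \ge \frac{n}{2}\qnorm{f}$, the same top-$k$ averaging bound, and the same triangle-inequality comparison. The only cosmetic difference is that you normalize $\qnorm{f}=1$, which is harmless by homogeneity (with $f=0$ trivial), whereas the paper carries $\qnorm{f}$ through explicitly.
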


\begin{proof}
Writing $u=(\qnorm{x_1},\qnorm{x_2},\dots,\qnorm{x_n})$ and
    $v=(\qnorm{x_1-f},\qnorm{x_2-f},\dots,\qnorm{x_n-f})$, we aim to show that $u\preceq 3v$.
Let $q^*$ satisfy
$1/q+1/q^*=1$ (with $1^*=\infty$ and $\infty^*=1$). By the
dual characterization of the $\ell_q$ norm, choose $w\ge0$ with
$\qnorm[q^*]{w}=1$ and $w\cdot f=\qnorm{f}$. 

For each agent $i$, H\"older's
inequality gives $w\cdot |x_i-f|\le \qnorm[q^*]{w}\qnorm{x_i-f}=v_i$.
Since $w\ge0$ and $f\ge0$, it follows that
\[
    v_i
    \ge w\cdot |x_i-f|
    \ge \sum_{j:x_{i,j}\le0} w_j f_j .
\]
Summing over agents, and using the fact that every coordinate has at least
$n/2$ nonpositive entries when the coordinate-wise median is zero, gives
\[
    \sum_{i=1}^n v_i
    \ge \frac n2\sum_{j=1}^d w_j f_j
    = \frac n2 \qnorm{f}.
\]
Consequently, for every $k\in[n]$,
\begin{equation}\label{eq:lb-sum-v}
    \sum_{i=1}^k v_i^\downarrow
    \ge \frac{k}{n}\sum_{i=1}^n v_i
    \ge \frac{k}{2}\qnorm{f}.
\end{equation}
Now let $T$ be the set of the $k$ largest entries among $u_1,u_2,\dots,u_n$.
We readily derive that 

\begin{align*}
    \sum_{i=1}^k u_i^\downarrow
    &= \sum_{i\in T} \qnorm{x_i} \\
    &\le \sum_{i\in T} \qnorm{x_i-f} + k\qnorm{f}
        \tag{triangle inequality} \\
    &\le \sum_{i=1}^k v_i^\downarrow + 2\sum_{i=1}^k v_i^\downarrow
        \tag{by \eqref{eq:lb-sum-v}} \\
    &= 3\sum_{i=1}^k v_i^\downarrow .
\end{align*}
Since this holds for every $k\in[n]$, the desired weak-majorization inequality
follows.
\end{proof}

Already, we have a factor 3 approximation for the CM mechanism for any monotone symmetric norm.
However, upon closer inspection, there is additional structure that we can exploit
when $d = 2$. It turns out that the CM mechanism enforces a balance constraint across opposite orthants.

\paragraph{Balance Constraints in $\R^2$}
Suppose that
$x\in(\R^2)^n$ satisfies $\M(x)=0$. Each location $x_i\in\R^2$ is assigned a
signature $\sigma(x_i)\in\{-1,1\}^2$, with zero coordinates counted according
to the median constraint. Let
\begin{align*}
    &S_{++} = \{ x_i \mid  \sigma(x_i) = (+1, +1) \}, \qquad
    &S_{-+} = \{x_i \mid  \sigma(x_i) = (-1, +1) \}, \\
    &S_{--} = \{x_i \mid  \sigma(x_i) = (-1, -1) \}, \qquad
    &S_{+-} = \{x_i \mid  \sigma(x_i) = (+1, -1) \}.
\end{align*}
Let $t=\lfloor n/2\rfloor$. The median constraint can be written as
\begin{subequations}\label{eq:planar-balance-constraints}
\begin{align}
    |S_{++}| + |S_{+-}| &= t, \label{eq:planar-balance-x}\\
    |S_{++}| + |S_{-+}| &= t, \label{eq:planar-balance-y}\\
    |S_{++}| + |S_{+-}| + |S_{-+}| + |S_{--}| &= n .
    \label{eq:planar-balance-total}
\end{align}
\end{subequations}
By subtracting \eqref{eq:planar-balance-y} from
\eqref{eq:planar-balance-x}, we get $|S_{+-}|=|S_{-+}|$. Substituting this
identity into \eqref{eq:planar-balance-total} yields
$|S_{--}|=|S_{++}|$ if $n$ is even and
$|S_{--}|=|S_{++}|+1$ if $n$ is odd. Thus all points can be paired with a point
in the opposite signed orthant, except possibly for one leftover point in
$S_{--}$. With this balance constraint, we can prove the following more refined
majorization inequality for $d=2$.

\begin{proposition}\label{prop:monotone-majorization-2d}
Suppose that $d=2$, $\M(x)=0$, and $f\ge0$. Then
\[
    \left(\qnorm{x_1},\qnorm{x_2},\dots,\qnorm{x_n}\right)
    \preceq
    2\left(\qnorm{x_1-f},\qnorm{x_2-f},\dots,\qnorm{x_n-f}\right).
\]
\end{proposition}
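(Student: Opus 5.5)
The plan is to repeat the top-$k$ comparison from \Cref{prop:monotone-majorization-high-d}, but with a pointwise bound built from the opposite-orthant pairing instead of the averaging bound \eqref{eq:lb-sum-v}. Write $u_i=\qnorm{x_i}$ and $v_i=\qnorm{x_i-f}$. The balance constraints \eqref{eq:planar-balance-constraints} give $|S_{+-}|=|S_{-+}|$ and $|S_{--}|\in\{|S_{++}|,|S_{++}|+1\}$. So I will fix a bijection pairing $S_{++}$ with $S_{--}$ and $S_{+-}$ with $S_{-+}$. This leaves at most one unpaired point, which lies in $S_{--}$.

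I then define the map $\pi:[n]\to[n]$ sending each paired index to its partner, with $\pi(i)=i$ for the leftover point. Then $\pi$ is an involution, hence injective. The key claim is the pointwise bound
\[
u_i\le v_i+v_{\pi(i)} \quad\text{for every } i,
\]
with the stronger bound $u_i\le v_i$ for points of $S_{--}$. Granting this, for any set $T$ of $k$ indices (in particular the $k$ largest $u_i$) we get
\[
\sum_{i\in T}u_i\le \sum_{i\in T}v_i+\sum_{i\in\pi(T)}v_i\le 2\sum_{i=1}^k v_i^\downarrow ,
\]
because $T$ and $\pi(T)$ are both $k$-element sets. (For the leftover point, $u_i\le v_i\le v_i+v_{\pi(i)}$ holds trivially.) This is exactly $u\preceq 2v$.

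The pointwise claim uses coordinatewise comparisons, together with monotonicity of $\qnorm{\cdot}$ in absolute values of coordinates and $f\ge0$. The cases are as follows.
\begin{itemize}
\item \emph{$b\in S_{--}$.} Since $b\le0$ and $f\ge0$, we have $|b_j|\le|b_j-f_j|$ and $f_j\le|b_j-f_j|$. Hence $u_b\le v_b$ and $\qnorm{f}\le v_b$.
\item \emph{$a\in S_{++}$ paired with such a $b$.} The triangle inequality gives $u_a\le v_a+\qnorm{f}\le v_a+v_b$.
\item \emph{$a\in S_{+-}$.} We have $|a_1|\le|a_1-f_1|+f_1$ and $|a_2|\le|a_2-f_2|$. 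So $|a|\le|a-f|+(f_1,0)$ coordinatewise, giving $u_a\le v_a+f_1$.
\item \emph{$b\in S_{-+}$, the partner of $a$.} Here $b_1\le 0$, so $f_1\le|b_1-f_1|\le v_b$. Hence $u_a\le v_a+v_b$.
\end{itemize}
The case of $b\in S_{-+}$ bounded using $a\in S_{+-}$ is symmetric.

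The main point needing care is the treatment of zero coordinates in the signature $\sigma$. I need each point assigned sign $-1$ in coordinate $j$ to actually satisfy $x_{i,j}\le 0$, and each point assigned $+1$ to satisfy $x_{i,j}\ge0$. I also need the counts to satisfy \eqref{eq:planar-balance-x}--\eqref{eq:planar-balance-y} with $t=\lfloor n/2\rfloor$. Since the median in each coordinate is $0$ with ties broken toward the smaller value, at least $n-t$ entries in each coordinate are $\le0$ and at least $t$ are $\ge0$. Hence one can label exactly $t$ entries $+1$, choosing all strictly positive ones plus some zeros, and label the rest $-1$. Once this labeling is fixed, the inequalities above only use the weak sign conditions, so they go through unchanged.
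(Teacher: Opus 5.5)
Your proof is correct and is essentially the paper's argument. You pair opposite orthants via the balance constraint, show that each paired point satisfies $\qnorm{a}\le\qnorm{a-f}+\qnorm{b-f}$, and use $\qnorm{z}\le\qnorm{z-f}$ for the leftover point in $S_{--}$. The only differences are cosmetic. The paper gets the pair bound in one line, from $\qnorm{a},\qnorm{b}\le\qnorm{a-b}$ (opposite signs coordinatewise) plus the triangle inequality, instead of your case analysis. It also combines two-element majorizations pair by pair, a step it leaves implicit, which your involution $\pi$ and top-$k$ count make explicit.
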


\begin{proof}
By the balance constraint, all points can be paired with a point in the
opposite signed orthant, except possibly for one leftover point in $S_{--}$.
Consider any opposite pair $(a,b)$. Since $a$ and $b$ have opposite signs in
every coordinate, $|a|,|b|\le |a-b|$ coordinate-wise. By monotonicity of the
$q$-norm and the triangle inequality,
\[
    \qnorm{a},\qnorm{b}\le \qnorm{a-b}
    \le \qnorm{a-f}+\qnorm{b-f}.
\]
Therefore, if $u=(\qnorm{a},\qnorm{b})$ and
$v=(\qnorm{a-f},\qnorm{b-f})$, then $u\preceq 2v$. Indeed, the one-coordinate
inequality follows from
\[
    \max\{\qnorm{a},\qnorm{b}\}
    \le \qnorm{a-f}+\qnorm{b-f}
    \le 2\max\{\qnorm{a-f},\qnorm{b-f}\},
\]
while the two-coordinate inequality follows from
\[
    \qnorm{a}+\qnorm{b}
    \le 2(\qnorm{a-f}+\qnorm{b-f}).
\]

If there is a leftover point $z\in S_{--}$, then $z\le0\le f$, so
$|z|\le |z-f|$ coordinate-wise and hence $\qnorm{z}\le\qnorm{z-f}$. Combining
the pairwise weak-majorization inequalities over all pairs, and including the
possible leftover point, gives the desired weak-majorization inequality.
\end{proof}

Putting it all together, our main result follows.
\begin{proof}[Proof of \Cref{thm:monotone-social-cost}]
By the reductions in the preliminaries, we may assume $\M(x)=0$ and $f\ge0$.
\Cref{prop:monotone-majorization-high-d} and
\Cref{lem:majorization-reduction} imply
\[
    \cost_\phi(\M(x),x)\le 3\cost_\phi(f,x)
\]
for every comparison facility $f$, so $\alpha_\phi(\M)\le3$.
When $d=2$, \Cref{prop:monotone-majorization-2d} and
\Cref{lem:majorization-reduction} similarly imply
$\cost_\phi(\M(x),x)\le2\cost_\phi(f,x)$, so $\alpha_\phi(\M)\le2$.
\end{proof}

In the remainder of the paper, we specialize to $p$-norm social costs. This
additional structure allows us to obtain sharper guarantees as a function of
$p$ and $q$.

%% file: program.tex
Our analysis of the CM mechanism for $p$-norm social costs will be based on a generalization of a mathematical program introduced by \citet{Gravin:2025aa}. In this section, we develop this program and explain how it allows us to derive upper bounds on the approximation ratio. In what follows, we let $\M(z)$ denote the output of the mechanism.

Fix a tuple of reported locations $(z_1, \dots, z_n) \in (\R^d)^n$ and let $f \in \R^d$ denote the optimal facility location under the $p$-norm social cost objective. 
We make the standard transformations to simplify our analysis (translation, orientation, normalization, and duplication).

For a fixed facility location $f \neq 0$ in the nonnegative orthant and any parameter $\lambda$, we can ask what is the smallest possible value of $\sum_{i\in [n]} \qnorm{x_i - f}^p - \lambda^p \qnorm{x_i}^p$
among any set of points $x = (x_1,\dots, x_n) \in (\R^d)^n$ satisfying $\M(x) = 0$. If this value is nonnegative, then $\cost_p(\M(z), z) \leq (1 /\lambda) \cdot\cost_p(f, z)$ and thus $\M$ is a $(1/\lambda)$-approximation on the original instance.

Our goal therefore is to fix a facility location and find the largest value $\lambda \in (0, 1)$ such that the following program has a nonnegative minimum value.
\begin{subequations} \label{eq:full-program}
\begin{align}
 \min_{x \in  (\R^d)^n} \quad & \sum_{i=1}^n g(x_i), \qquad \text{where }  g(v) := \qnorm{v - f}^p - \lambda^p \qnorm{v}^p  \label{eq:objective}  \\
 \text{s.t.} \quad & \qnorm{f} = 1, \quad f \geq 0  \label{eq:domain-constraint}
\end{align}

To encode the constraint that the origin is the output of the CM mechanism, we define a \emph{signature} mapping $\sigma(v) = (\operatorname{sign}(v_1), \dots, \operatorname{sign}(v_d)) \in \{-1, 1\}^d$. For $a \neq 0$, $\operatorname{sign}(a)$ is the standard sign function. Entries equal to zero may receive either sign, according to how they are counted by the CM mechanism when selecting the median; we denote these two cases by $0^-$ and $0^+$. Thus, whenever the CM output is the origin and $n$ is even, the zero entries can be assigned signs so that each coordinate has an equal balance of positive and negative signs.
We encode this median constraint as
\begin{equation}
    \sum_{i=1}^n \sigma(x_i) = 0.  \label{eq:median-constraint}
\end{equation}
\end{subequations}

%% file: plane.tex
We start by restating the initial program to handle the case when $d = 2$.
Recall that when the CM mechanism is given an input $x \in (\R^2)^n$, each location $x_i \in \R^2$ is assigned a signature $\sigma(x_i) \in \{-1, 1\}^2$. The median constraint ensures that we can partition the locations into the following sets based on these signatures:
\begin{align*}
    &S_{++} = \{ x_i \mid  \sigma(x_i) = (+1, +1) \}, \qquad
    &S_{-+} = \{x_i \mid  \sigma(x_i) = (-1, +1) \}, \\
    &S_{--} = \{x_i \mid  \sigma(x_i) = (-1, -1) \}, \qquad
    &S_{+-} = \{x_i \mid  \sigma(x_i) = (+1, -1) \}.
\end{align*}
Using these defined sets, our objective \eqref{eq:objective} becomes
\begin{subequations}\label{eq:full-program-2d}
\begin{align}\label{eq:objective-2d}
\min_{x \in  (\R^2)^n} \quad &
\sum_{x_i \in S_{++}} g(x_i) + \sum_{x_i \in S_{--}} g(x_i) +
\sum_{x_i \in S_{+-}} g(x_i) + \sum_{x_i \in S_{-+}} g(x_i).
\end{align}

By the balance constraint established in \Cref{sec:monotone-norm}, and using
the duplication reduction so that $n$ is even, the median constraints reduce to
\begin{align}
    |S_{++}| &= |S_{--}|, \label{eq:median-constraint-2d-1}\\
    |S_{-+}| &= |S_{+-}|, \label{eq:median-constraint-2d-2}\\
    |S_{++}| + |S_{+-}| &= \frac{n}{2}. \label{eq:median-constraint-2d-3}
\end{align}
This formulation makes two key structural features immediately apparent.
First, there is a balance requirement between opposite orthants. Second, the
system has only a single degree of freedom, since fixing the size of one
partition (for instance, $|S_{++}|$) automatically determines the remaining
three. Both of these features will be central to our subsequent analysis.

To finish specifying our program, we maintain constraint
\eqref{eq:domain-constraint}:
\begin{align}
    \qnorm{f} = 1, \quad f \ge 0 .
\end{align}
\end{subequations}
As mentioned in \Cref{sec:program}, our goal is to find a large value $\lambda \in (0, 1)$ that guarantees a nonnegative minimum for our objective. Note that for a fixed facility location $f \ge 0$, evaluating the contribution $g(x_i)$ of a single point in isolation could yield a large negative value, which would drastically weaken our lower bound for $\lambda$. We circumvent this issue by leveraging the symmetry established by the median constraints. By pairing points from opposite orthants, we can offset extreme individual costs and instead lower bound their joint contribution to the objective.

\subsection{Optimization over Opposing Orthants}
We begin by analyzing the points in the opposing orthants. Because of the balance constraints established in \eqref{eq:median-constraint-2d-1} and \eqref{eq:median-constraint-2d-2}, we can group the terms of our objective \eqref{eq:objective-2d} into paired components:
\begin{align*}
   \left ( \sum_{x_i \in S_{++}} g(x_i) + \sum_{x_i \in S_{--}} g(x_i)  \right)
   + \left (
    \sum_{x_i \in S_{+-}} g(x_i) + \sum_{x_i \in S_{-+}} g(x_i)
    \right ).
\end{align*}
For each component, the equal sizes of the opposite sets allow us to form a convenient bijection. By pairing each point $a \in S_{++}$ with another point $b \in S_{--}$, and similarly for $S_{+-}$ and $S_{-+}$, we effectively rewrite the entire objective as a sum of pairwise contributions. As a result, determining when \eqref{eq:objective-2d} is nonnegative reduces to lower bounding the joint cost $g(a) + g(b)$ of any such opposite pair $(a, b)$. We establish this bound using the following lemma.

\begin{restatable}{lemma}{lemminsumopp}\label{lem:min-sum-opp}
Let $f \geq 0$ be a facility location. For any pair of points $(a, b)$ from opposite orthants, i.e., $(a, b) \in (S_{+-} \times S_{-+}) \cup (S_{++} \times S_{--})$, the following statements hold:
\begin{enumerate}
\item If $1 \leq q \leq p$, then $\qnorm{a - f}^p + \qnorm{b - f}^p \geq 2^{1-p} \left ( \qnorm{a}^p + \qnorm{b}^p \right )$.
\item If $1 \leq p < q$, then $\qnorm{a - f}^p + \qnorm{b - f}^p \geq 2^{p/q - p} \left ( \qnorm{a}^p + \qnorm{b}^p \right )$.
\end{enumerate}
\end{restatable}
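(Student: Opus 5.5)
The plan is to reduce everything to one geometric inequality that uses only the opposite-orthant structure of the pair, together with the triangle inequality through $f$; the location of $f$ plays no further role. Write $A=\qnorm{a}$ and $B=\qnorm{b}$.

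\emph{Step 1: opposite orthants give a superadditive bound on $\qnorm{a-b}$.} Since $a$ and $b$ have opposite signs in every coordinate (with zero entries counted by their assigned signs), we have $|a_j-b_j|=|a_j|+|b_j|$ for each $j$. For $q<\infty$, the superadditivity of $t\mapsto t^q$ on $\R_{\ge0}$ for $q\ge1$ then gives
\[
\qnorm{a-b}^q=\sum_j(|a_j|+|b_j|)^q\ \ge\ \sum_j |a_j|^q+|b_j|^q = A^q+B^q .
\]
For $q=\infty$ we instead use $\qnorm[\infty]{a-b}\ge\max(A,B)$, which is the limiting form of the same inequality. So $\qnorm{a-b}\ge (A^q+B^q)^{1/q}$ in all cases.

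\emph{Step 2: pass through $f$.} By the power-mean inequality (convexity of $t^p$) and then the triangle inequality,
\[
\qnorm{a-f}^p+\qnorm{b-f}^p\ \ge\ 2^{1-p}\bigl(\qnorm{a-f}+\qnorm{b-f}\bigr)^p\ \ge\ 2^{1-p}\qnorm{a-b}^p\ \ge\ 2^{1-p}(A^q+B^q)^{p/q}.
\]

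\emph{Step 3: compare $(A^q+B^q)^{p/q}$ with $A^p+B^p$, splitting on the sign of $p-q$.}
\begin{itemize}
\item If $q\le p$, then $p/q\ge1$, and superadditivity of $t\mapsto t^{p/q}$ (applied to $A^q$ and $B^q$) gives $(A^q+B^q)^{p/q}\ge A^p+B^p$. This yields part 1.
\item If $p<q$, the power-mean inequality $M_q\ge M_p$ gives $\bigl(\tfrac{A^q+B^q}{2}\bigr)^{1/q}\ge\bigl(\tfrac{A^p+B^p}{2}\bigr)^{1/p}$. Raising to the $p$-th power, $(A^q+B^q)^{p/q}\ge 2^{p/q-1}(A^p+B^p)$. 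Combined with the factor $2^{1-p}$ from Step 2, the constant becomes $2^{p/q-p}$, which is part 2.
\item The case $q=\infty$ is handled the same way using $\max(A,B)^p\ge (A^p+B^p)/2$, giving the constant $2^{-p}$, consistent with $p/q=0$.
\end{itemize}

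\emph{Main obstacle.} The only real subtlety is Step 1. One must make sure the signature convention for zero coordinates still yields $|a_j-b_j|=|a_j|+|b_j|$. It does: if either entry is zero, the identity holds trivially. The remaining steps are standard convexity inequalities, so no slack needs to be tracked beyond choosing the right power mean in each regime.
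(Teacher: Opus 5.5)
Your proof is correct. For part 1 it is essentially the paper's argument. The paper uses convexity of $z\mapsto\qnorm{z}^p$ to reach $2^{1-p}\qnorm{a-b}^p$ in one step; this is the same as your scalar power-mean step followed by the triangle inequality. It then applies the same coordinatewise superadditivity chain you do.

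For part 2 the routes genuinely differ. The paper first converts the distances from $\ell_q$ to $\ell_p$ using the planar norm comparison $\qnorm{z}^p\ge 2^{p/q-1}\qnorm[p]{z}^p$ for $z\in\mathbb{R}^2$. It then reruns the part-1 chain entirely in $\ell_p$, where the exponents match, and returns to $\ell_q$ via $\qnorm{z}\le\qnorm[p]{z}$. You instead keep a single chain in $\ell_q$ for both regimes and branch only at the final scalar comparison of $(A^q+B^q)^{p/q}$ with $A^p+B^p$. You pay the factor $2^{p/q-1}$ through the power-mean inequality $M_q(A,B)\ge M_p(A,B)$ on the two numbers $A,B$.

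The constant $2^{p/q-p}$ is the same either way, but in your argument the factor $2$ comes from the pair $(a,b)$ rather than from the ambient dimension. So your proof works verbatim for opposite-orthant pairs in $\ell_q(\mathbb{R}^d)$ for any $d$, whereas the paper's first step would cost $d^{p/q-1}$ there. It also treats $q=\infty$ explicitly. The paper's version has the modest advantage of literally reusing the part-1 chain with $q$ replaced by $p$ after a one-line norm comparison.
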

\begin{proof}
We first note that since $a$ and $b$ are in opposing orthants, they have opposite signs in each coordinate (or are zero). Thus, for any coordinate $i \in \{1, 2\}$, we have $|a_i - b_i| = |a_i| + |b_i|$.
This guarantees that for any $r \geq 1$, the $r$-norm of their difference simplifies to:
\[ \qnorm[r]{a - b}^r = (|a_1| + |b_1|)^r + (|a_2| + |b_2|)^r .\]
We will use this observation when proving each of the statements.

We begin with the first statement.
Suppose that $1 \leq q \leq p$ and let $h(z)=\qnorm{z}^p$. Since $h$ is
convex, we can derive that
\begin{align*}
\qnorm{a - f}^p + \qnorm{b - f}^p
&= h(a-f)+h(f-b) \\
&\geq 2h\left(\frac{(a-f)+(f-b)}{2}\right) \tag{convexity of $h$} \\
&= 2^{1-p}\qnorm{a-b}^p \\
&= 2^{1-p} \left( \qnorm{a - b}^q \right)^{p/q} \\
&= 2^{1-p} \left( (|a_1| + |b_1|)^q + (|a_2| + |b_2|)^q \right)^{p/q} \\
&\geq 2^{1-p} \left( |a_1|^q + |b_1|^q + |a_2|^q + |b_2|^q \right)^{p/q} \\
&= 2^{1-p} \left( \qnorm{a}^q + \qnorm{b}^q \right)^{p/q} \\
&\geq 2^{1-p} \left( \qnorm{a}^p + \qnorm{b}^p \right).
\end{align*}

Now suppose that $1 \leq p < q$. We can again derive that
\begin{align*}
\qnorm{a - f}^p + \qnorm{b - f}^p
&\geq 2^{p/q - 1} \left( \qnorm[p]{a - f}^p + \qnorm[p]{b - f}^p \right) \tag{norm comparison in $\R^2$} \\
&\geq 2^{p/q - 1} \cdot 2^{1-p} \left( \qnorm[p]{a - f} + \qnorm[p]{b - f} \right)^p \tag{Jensen's inequality} \\
&\geq 2^{p/q - p} \qnorm[p]{(a - f) + (f - b)}^p \tag{Triangle inequality} \\
&= 2^{p/q - p} \qnorm[p]{a - b}^p \\
&= 2^{p/q - p} \left( (|a_1|+|b_1|)^p + (|a_2|+|b_2|)^p \right) \\
&\geq 2^{p/q - p} \left( |a_1|^p + |b_1|^p + |a_2|^p + |b_2|^p \right) \\
&= 2^{p/q - p} \left( \qnorm[p]{a}^p + \qnorm[p]{b}^p \right) \\
&\geq 2^{p/q - p} \left( \qnorm{a}^p + \qnorm{b}^p \right) \tag{since $\qnorm{z} \leq \qnorm[p]{z}$ for $z \in \R^2$}
\end{align*}
Both statements hold as desired.
\end{proof}

Recall the definition of our objective term, $g(z) = \qnorm{z - f}^p - \lambda^p \qnorm{z}^p$. Applying \Cref{lem:min-sum-opp} to any opposite pair $(a, b)$ directly implies that their combined contribution, $g(a) + g(b)$, is nonnegative for the specified values of $\lambda$.
Because the entire objective \eqref{eq:objective-2d} can be expressed as a sum of these paired terms, we immediately arrive at the following corollary.

\begin{corollary}\label{cor:opposite-orthants}
The objective function \eqref{eq:objective-2d} is nonnegative when
\begin{equation*}
\lambda =
\begin{cases}
2^{1/p - 1} & \text{if } 1 \leq q \leq p,  \\
2^{1/q - 1} & \text{if } 1 \leq p < q.
\end{cases}
\end{equation*}
\end{corollary}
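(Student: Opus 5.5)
The plan is to pair up the agents using the balance constraints and apply \Cref{lem:min-sum-opp} to each pair. Because $n$ is even after the duplication reduction, \eqref{eq:median-constraint-2d-1} and \eqref{eq:median-constraint-2d-2} supply bijections $\pi_1\colon S_{++}\to S_{--}$ and $\pi_2\colon S_{+-}\to S_{-+}$. (Zero coordinates are assigned signs as in the signature convention, so every point lies in exactly one of the four sets.) Under these bijections the objective \eqref{eq:objective-2d} regroups as
\[
\sum_{a\in S_{++}}\bigl(g(a)+g(\pi_1(a))\bigr)+\sum_{a\in S_{+-}}\bigl(g(a)+g(\pi_2(a))\bigr).
\]
It therefore suffices to show $g(a)+g(b)\ge 0$ for every opposite pair $(a,b)$ and the stated $\lambda$.

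Fix such a pair. By definition,
\[
g(a)+g(b)=\qnorm{a-f}^p+\qnorm{b-f}^p-\lambda^p\bigl(\qnorm{a}^p+\qnorm{b}^p\bigr).
\]
We split into the same two cases as the lemma.

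\emph{Case $1\le q\le p$.} Take $\lambda=2^{1/p-1}$, so that $\lambda^p=2^{1-p}$. Part 1 of \Cref{lem:min-sum-opp} gives exactly $\qnorm{a-f}^p+\qnorm{b-f}^p\ge 2^{1-p}(\qnorm a^p+\qnorm b^p)$, hence $g(a)+g(b)\ge 0$.

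\emph{Case $1\le p<q$.} Take $\lambda=2^{1/q-1}$, so that $\lambda^p=2^{p/q-p}$. This is precisely the constant in part 2 of \Cref{lem:min-sum-opp}, and again $g(a)+g(b)\ge 0$.

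One point needs checking: the lemma's hypothesis is membership in $(S_{+-}\times S_{-+})\cup(S_{++}\times S_{--})$. Our bijections produce pairs in exactly these products, with the order within a pair immaterial, since both sides of the lemma's inequality are symmetric in $a$ and $b$.

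Summing the nonnegative pair contributions shows that \eqref{eq:objective-2d} is nonnegative. There is no real obstacle here; the only subtlety is that the pairing relies on $n$ being even, which the duplication reduction guarantees, so no leftover point in $S_{--}$ remains. (Even if one did remain, it would satisfy $z\le 0\le f$, so $\qnorm{z-f}\ge\qnorm{z}$, and its term would be nonnegative since $\lambda<1$.)
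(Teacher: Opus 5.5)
Your proposal is correct and matches the paper's argument: you use the balance constraints \eqref{eq:median-constraint-2d-1}--\eqref{eq:median-constraint-2d-2} to pair each point with one in the opposite orthant, apply \Cref{lem:min-sum-opp} to each pair, and check that $\lambda^p$ equals $2^{1-p}$ or $2^{p/q-p}$ in the two cases. One small nit: in your parenthetical about a leftover point, $\lambda = 1$ when $p = q = 1$, but $\lambda \le 1$ is all that step needs.
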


\subsection{Approximation Ratio}

We are now ready to derive the exact approximation ratio of the CM mechanism.
\thmupperboundtwo*
\begin{proof}
We write the simplified program given at the start of \Cref{sec:plane}.
We then recall that when $\lambda$ is as specified in \Cref{cor:opposite-orthants}, the objective value \eqref{eq:objective-2d} is nonnegative for any $x \in (\mathbb R^2)^n$ that satisfies the program constraints.
For such $\lambda$ and any $x \in (\mathbb R^2)^n$, we have $\cost_p(f, x)^p \geq \lambda^p \cdot \cost_p(\M(x), x)^p$, which implies our desired result.
\end{proof}
A notable consequence of \Cref{thm:cm-2d-upper-bound} is that it determines the asymptotic approximation ratio of the CM mechanism for $p$-norm social costs in $\ell_2(\R^2)$, resolving a conjecture of \citet{Goel:2023aa}. We also provide approximation guarantees for all two-dimensional normed spaces $\ell_q(\R^2)$. Qualitatively, our result reveals a phase transition determined by the relationship between $p$ and $q$, highlighting a tension between the $p$-norm social cost and $q$-norm distances.

Below, we establish asymptotically matching lower bounds in all parameter regimes.

\begin{theorem}[Lower Bound]\label{thm:cm-2d-lower-bound}
In the normed space $\ell_q(\R^2)$ with $p,q \geq 1$, the CM mechanism $\M$ satisfies the following lower bounds:
\begin{align*}
\sup_{n \in \mathbb N}  \alpha_p(\M) \geq  \begin{cases} 2^{1 - 1/p}  & \text{if } 1 \leq q \leq p, \\ 2^{1 - 1/q} & \text{if } 1 \leq p < q. \end{cases}
\end{align*}
\end{theorem}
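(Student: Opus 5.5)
The plan is to give two explicit families of instances, one for each regime. In each, the output of the CM mechanism is computed directly and its $p$-norm social cost is compared with that of a specific candidate facility. The ratio against this candidate is a lower bound on the ratio against the optimum, since the optimum can only be cheaper. Neither construction needs a strategic-manipulation argument: the CM output is computed from the profile, and the candidate facility upper-bounds the optimal cost.

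\emph{Case $1 \le q \le p$.} Embed the classical one-dimensional instance on a coordinate axis.
\begin{itemize}
\item Take $n=2$ agents at $x_1=(0,0)$ and $x_2=(1,0)$.
\item With ties broken toward the smaller value, $\M(x)=(0,0)$. The individual costs are $(0,1)$, so $\cost_p(\M(x),x)=1$.
\item The candidate facility $f=(1/2,0)$ gives costs $(1/2,1/2)$ in every $\ell_q$ norm, so $\cost_p(f,x)=2^{1/p}/2$.
\item Hence $\alpha_p(\M)\ge 2^{1-1/p}$.
\end{itemize}
This bound holds for every $q$, but it matches the upper bound of \Cref{thm:cm-2d-upper-bound} only when $q\le p$.

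\emph{Case $1 \le p < q$.} Use a ``two clusters and a point'' instance.
\begin{itemize}
\item Place $k$ agents at $e_1=(1,0)$, $k$ agents at $e_2=(0,1)$, and one agent at the origin, so $n=2k+1$ is odd and tie-breaking is irrelevant.
\item In the first coordinate the values are $k$ ones and $k+1$ zeros, so the median is $0$. By symmetry the second coordinate also has median $0$, hence $\M(x)=(0,0)$.
\item The $2k$ cluster agents each incur cost $1$ and the remaining agent incurs $0$, so $\cost_p(\M(x),x)^p=2k$.
\item The candidate facility is the midpoint $f=(1/2,1/2)$ between the clusters. Its distance to $e_1$, to $e_2$ and to the origin is in each case $\qnorm{(1/2,1/2)}=2^{1/q-1}$. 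Thus $\cost_p(f,x)^p=(2k+1)\,2^{p/q-p}$.
\item The ratio satisfies
\[
\alpha_p(\M)^p \ge \frac{2k}{(2k+1)\,2^{p/q-p}} \xrightarrow{k\to\infty} 2^{p-p/q},
\]
so $\sup_n \alpha_p(\M)\ge 2^{1-1/q}$.
\end{itemize}

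The only delicate point is choosing the second instance so that the CM output is pinned to the lone point rather than drifting toward the clusters. This forces the lone point to sit at a coordinate-wise ``corner'' shared by both clusters, which is why the clusters are placed on different axes. The supremum over $n$ is needed only because the lone point's contribution must vanish in the limit.
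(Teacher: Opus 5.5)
Your proof is correct. The case $1 \le p < q$ matches the paper exactly: the same $k$ agents at $(1,0)$, $k$ at $(0,1)$, one agent at the origin, the same comparison facility $(1/2,1/2)$, and the same limit.

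The case $1 \le q \le p$ takes a different and more elementary route. You pick a two-agent instance on which the coordinate-wise median lands on one agent's location, and you compare it directly with the midpoint. The paper instead starts from $x_1=(-1,0)$, $x_2=(1,0)$ and runs a manipulation argument that works for an arbitrary deterministic strategyproof mechanism $\mathcal{A}$:
\begin{itemize}
\item If $\mathcal{A}(x)=(1,0)$, the ratio is already $2^{1-1/p}$.
\item Otherwise, agent~1 is moved to $\mathcal{A}(x)$. Strategyproofness then forces the new output to coincide with agent~1's new location, so the instance again has the output at one endpoint of a two-agent profile.
\end{itemize}

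For CM itself, with ties broken downward, the output on $((-1,0),(1,0))$ is already $(-1,0)$. The paper's argument therefore collapses to your computation, up to translation and scaling. Nothing is lost for the theorem as stated, and you avoid the strategyproofness step entirely.

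What the paper's version buys is a strictly stronger statement: no deterministic strategyproof mechanism has ratio below $2^{1-1/p}$ in any $\ell_q(\mathbb{R}^2)$. The paper uses exactly this afterwards to conclude that CM is optimal among deterministic strategyproof mechanisms when $q \le p$. Your argument does not give that.
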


\begin{proof}
Our analysis proceeds in two cases.

\paragraph{Case 1: $1 \leq q \leq p$.}
Extending the analysis of \citet{Feigenbaum:2017aa, Goel:2023aa}, we in fact show that any deterministic, strategyproof mechanism $\A$ has $\sup_n \alpha_p(\A) \geq 2^{1 - 1/p}$ in this regime.
Consider a profile $x \in (\R^2)^2$ with $x_1 = (-1, 0)$ and $x_2 = (1, 0)$.
It is not hard to see that the optimal facility location is the point $f = (0, 0)$
and that $\cost_p(f, x) = 2^{1/p}$.
If $\A(x) = (1, 0)$ then we have
\begin{align*}
    \frac{\cost_p(\A(x), x)}{\cost_p(f, x)} = \frac{2}{2^{1/p}} = 2^{1-1/p}.
\end{align*}
If $\A(x) \neq (1, 0)$, we can consider the profile $x' \in (\R^2)^2$ with $x_1' = \A(x)$ and $x'_2 = (1, 0)$.
It is not hard to see that the optimal facility location for the profile $x'$ is the point $f' = (x_1' + x_2')/2$, which also satisfies the following:
\[\qnorm{f' - x_1'} = \qnorm{f' - x_2'} = \qnorm{x_1' - x_2'}/ 2.\]
Since $\A$ is strategyproof, we know that
$0 = \qnorm{\A(x) - x_1'} \geq \qnorm{\A(x') - x_1'}$. This in turn implies that $\A(x') = x_1'$. Thus, we can derive that
\begin{align*}
\frac{\cost_p(\A(x'), x')}{ \cost_p(f', x')}
&=\frac{ \left (\qnorm{\A(x') - x_1'}^p + \qnorm{\A(x') - x_2'}^p \right )^{1/p}}{\left (\qnorm{f'- x_1'}^p + \qnorm{f' - x_2'}^p \right )^{1/p}} \\
&= \frac{\qnorm{x_1' - x_2'}}{ 2^{(1-p)/p}\qnorm{x_1' - x_2'}} \\
&= 2^{1 - 1/p}.
\end{align*}
This bound applies independently of $q$; however, it is the dominant lower bound when $1 \leq q \leq p$.

\paragraph{Case 2: $1 \leq p < q$.} For each integer $k > 0$, consider the profile $x \in (\R^2)^{2k + 1}$ with $k$ agents located at the point $(1, 0)$, $k$ agents at the point $(0, 1)$ and $1$ agent at the point $(0, 0)$.
It is easily verified that $\M(x) = (0, 0)$, as the median of $k$ ones, $k$ zeros, and a single zero is zero. Consider the facility location $f' = (1/2, 1/2)$. The optimal facility location $f$ is such that
\begin{align*}
\frac{\cost_p(\M(x), x)}{\cost_p(f, x)}
\geq \frac{(2k)^{1/p}}{\cost_p(f', x)}
= \frac{(2k)^{1/p}}{2^{1/q - 1}  (2k + 1)^{1/p} }.
\end{align*}
As $k \to \infty$, the rightmost ratio tends to $2^{1 - 1/q}$, which implies our desired result.
\end{proof}

Finally, we comment on the optimality of the CM mechanism. \citet{Goel:2023aa} proved its optimality among deterministic, strategyproof, and anonymous mechanisms in the Euclidean space $\ell_2(\R^2)$ for any $p \geq 1$ (see Theorem 1 in their work). In \Cref{thm:cm-2d-lower-bound}, we show that the CM mechanism continues to be optimal in all normed spaces $\ell_q(\R^2)$, whenever $1 \leq q \leq p$. Whether this optimality extends to the $1\leq p <q$ regime presents an interesting direction for future research.

%% file: higher-dimension.tex
We now turn our attention to the facility location setting in higher dimensions $(d \geq 3)$.
Recently, \citet{Gravin:2025aa} established a constant approximation for the facility location problem in $\ell_q(\R^d)$ under utilitarian social cost ($p = 1$). 

\begin{theorem}[\citealp{Gravin:2025aa}]\label{thm:1-norm}
    In the normed space $\ell_q(\R^d)$ with $q \geq 1$ and $d \geq 1$, the CM mechanism $\M$ has approximation ratio $\alpha_1(\M) \leq \mathtt{UB}(q)$, where $\mathtt{UB}(q)$ is an increasing function with $\mathtt{UB}(1) = 1$ and $\lim_{q \to \infty} \mathtt{UB}(q) = 3$.
\end{theorem}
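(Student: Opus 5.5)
Since this result is quoted from \citet{Gravin:2025aa}, the plan is to reconstruct it from the program of \Cref{sec:program} with $p=1$. Fix $f\ge0$ with $\qnorm{f}=1$ and look for the largest $\lambda$ such that $\sum_i g(x_i)\ge0$, where $g(v)=\qnorm{v-f}-\lambda\qnorm{v}$, whenever the signature constraint \eqref{eq:median-constraint} holds. The first step is to decouple the agents by a Lagrangian relaxation. Since $\sum_i\sigma(x_i)=0$, for any vector $\mu\in\R^d$ we have
\[
\sum_i g(x_i)=\sum_i\bigl(g(x_i)+\mu\cdot\sigma(x_i)\bigr).
\]
So it suffices to find $\mu$, depending on $f$ and $q$, such that the single-point function $h_\mu(v)=g(v)+\mu\cdot\sigma(v)$ is nonnegative on all of $\R^d$. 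Then $1/\lambda^*(q)$, with $\lambda^*(q)$ the worst case over $f$ of the best feasible $\lambda$, defines $\mathtt{UB}(q)$.

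The case $q=1$ is a sanity check that gives $\mathtt{UB}(1)=1$. Take $\mu=f$ and $\lambda=1$. Coordinatewise, $|v_j-f_j|-|v_j|\ge -f_j$ when $v_j\ge0$, and it equals $f_j$ when $v_j<0$. Hence $\qnorm[1]{v-f}-\qnorm[1]{v}\ge -f\cdot\sigma(v)$, so $h_f\ge0$.

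For general $q$, the key reduction is that $h_\mu$ can be minimized orthant by orthant. On the orthant with signature $s$, the term $\mu\cdot s$ is constant, so only $\min_v\,(\qnorm{v-f}-\lambda\qnorm{v})$ over that orthant is needed. By $1$-homogeneity in $(v,f)$ and the symmetry between coordinates with equal sign pattern and equal $f_j$, this is a low-dimensional calculus problem. I would choose $\mu_j=c(f_j)$ for a single scalar profile $c$ and require the inequality for every signature. Demanding tightness along a one-parameter family of worst-case $f$ turns the conditions into an ODE for $c$, and $\lambda^*(q)$ is the value at which the ODE solution stays admissible. \textbf{This is the main obstacle:} we must choose the right ansatz for $\mu$, solve the resulting differential equation, and show that its solution is increasing in $q$. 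I would first try to guess $c$ from the extremal configurations of the $q=\infty$ case and then verify monotonicity by a comparison argument for the ODE in the parameter $q$.

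Finally, the limit: \Cref{prop:monotone-majorization-high-d} already gives $\alpha_1(\M)\le3$ for every $q$, so $\lim_{q\to\infty}\mathtt{UB}(q)\le3$. For the matching behaviour I would exhibit near-extremal instances in $\ell_\infty(\R^d)$ with $d$ large, for which the relaxed program with the optimal $\mu$ has value approaching $1/3$. The natural candidate is $f$ equal to the all-ones vector, half the agents at $-f$-type points, and the other half spread over sparse positive vectors in different coordinates so that the medians stay at $0$. A continuity argument in $q$ then shows that $\mathtt{UB}(q)\to3$.
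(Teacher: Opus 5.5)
Your reformulation is sound. Since $\sum_i\sigma(x_i)=0$, adding $\mu\cdot\sigma(x_i)$ to each term costs nothing, and because $\sigma$ takes only $2^d$ values, LP duality even shows this relaxation loses nothing as $n\to\infty$. Your $q=1$ check with $\mu=f$ is also correct.

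The gap is that everything the theorem asserts for $1<q<\infty$ sits in the step you label the main obstacle. There you supply no multiplier, no computation of the per-orthant minimum, no differential equation and no monotonicity argument, so nothing is proved beyond $q=1$.

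The missing idea is the stationarity analysis that \Cref{sec:higher} inherits from the cited work. On a fixed signature $s$, a minimizer of $g$ never mixes strictly positive and strictly negative coordinates. Its positive coordinates have the form $v_j=(1+t)f_j$ for a single scalar $t\ge0$, so the orthant minimum depends on $s$ only through $X_s=\sum_{j:s_j=+}f_j^q$. For $p=1$ the stationarity equation solves in closed form:
\[
\min_{\sigma(v)=s}\bigl(\qnorm{v-f}-\lambda\qnorm{v}\bigr)=\bigl(1-\lambda^{q'}\bigr)^{1/q'}(1-X_s)^{1/q}-\lambda X_s^{1/q},\qquad q'=\tfrac{q}{q-1}.
\]
Since $X_s=\tfrac12+\tfrac12\sum_j f_j^q s_j$ is affine in $s$, it suffices to take $\mu_j=\tfrac{\nu}{2}f_j^q$. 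This choice lies inside your ansatz $\mu_j=c(f_j)$, but you never identify it. With it, $h_\mu\ge0$ becomes a supporting-line condition at $X=\tfrac12$ for this explicit one-variable function, and no ODE is needed. Working this out gives $\mathtt{UB}(q)=\bigl(1+m(q)^{-q'}\bigr)^{1/q'}$ with $m(q)=\min_{a\in[0,1/2]}(1-a)^{1/q}/(a^{1/q}+1-2a)$. This equals $\sqrt{6\sqrt3-8}$ at $q=2$, and $\mathtt{UB}(1)=1$, the limit $3$ and monotonicity all become statements about this formula.

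Your treatment of the limit has two further problems.
\begin{enumerate}
\item \Cref{prop:monotone-majorization-high-d} bounds $\alpha_1(\M)$, not the quantity $1/\lambda^*$ produced by a restricted ansatz for $\mu$. So $\lim_{q\to\infty}\mathtt{UB}(q)\le3$ follows only if you compute exactly or cap your bound at $3$.
\item Your candidate extremal family cannot work. In $\ell_\infty$ with $f=\mathbf 1$, every $v\le0$ has $\qnorm[\infty]{v-f}=\qnorm[\infty]{v}+1$, while every $v$ has $\qnorm[\infty]{v}\le\qnorm[\infty]{v-f}+1$. Hence whenever half the agents lie in the nonpositive orthant, the ratio against $f$ is at most $1$. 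Also, with the paper's tie-breaking, the CM output of your profile is $-\mathbf 1$, not $0$.
\end{enumerate}

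The cheap agents must sit at $f$ itself. Put a fraction $\beta$ of the agents at $\mathbf 1$ and the rest at points $2\cdot\mathbf 1_S$, where the sets $S$ have density $a$, are spread evenly over the coordinates, and $\beta+(1-\beta)a=\tfrac12$. Then every coordinate has exactly half zeros, so $\M(x)=0$. Moreover $\cost_1(0,x)/\cost_1(\mathbf 1,x)=(2-\beta)/(1-\beta)\to3$ as $a\to0$, and fixing $d$ and letting $q\to\infty$ transfers this to $\ell_q$.

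Finally, the $q=1$ check, \Cref{prop:monotone-majorization-high-d} and such a family together yield only a trivial nondecreasing envelope capped at $3$. The real content of the cited theorem is a bound strictly below $3$ for every finite $q$, and that is exactly the part your proposal leaves open.
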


From our earlier analysis for monotone symmetric norms, the approximation ratio for the CM under the $p$-norm social cost will never exceed $3$. 
In this section, we use a more careful analysis of our initial program \eqref{eq:full-program} to give more refined approximation guarantees for this $p$-norm objective.

\thmupperboundhigh*

Our analysis for the $d \ge 3$ case will be markedly different from the $d = 2$ case discussed in the previous section. This is because our new program no longer admits a clean restatement of the median constraints. In general, the median mechanism for $\mathbb{R}^d$ partitions agents across $2^d$ possible signatures, but only enforces $d + 1$ constraints (namely, one for each dimension and one for the total number of agents). When $d = 2$, we could exploit a single degree of freedom and argue based on bijections between opposite orthants. This argument breaks down in higher dimensions.

Because our geometric arguments for $d=2$ do not generalize, our approach here is to analyze an appropriate relaxation of our program.
We adopt the proof structure introduced by \citet{Gravin:2025aa}; however, our technical analysis diverges in several places to handle the tension between the $p$-norm for social cost and the $q$-norm for distances.

\subsection{Upper Bounds on Approximation Ratio}
We first note that our program is minimized at some point in the domain.
\begin{claim}
The program \eqref{eq:full-program} attains its minimum at some $x \in (\R^d)^n$.
\end{claim}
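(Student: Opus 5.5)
The plan is to argue by a standard compactness argument. The feasible set is not compact for two reasons: the points $x_i$ range over all of $\R^d$, and the median constraint \eqref{eq:median-constraint} is combinatorial rather than closed. The second issue is easy. Each point $x_i$ has some admissible signature, with zero entries allowed either sign. So the feasible set is a finite union, over all assignments of signatures $(\sigma_1,\dots,\sigma_n)\in(\{-1,1\}^d)^n$ with $\sum_i\sigma_i=0$, of products of \emph{closed} orthants $O_{\sigma_1}\times\cdots\times O_{\sigma_n}$. Here $O_\sigma=\{v:\sigma_j v_j\ge0\ \forall j\}$ is closed, and closing it is exactly what the $0^\pm$ convention permits. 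Since $f$ and $\lambda$ are fixed, it suffices to show that on each such closed product the continuous function $\sum_i g(x_i)$ attains its minimum; we then take the smallest of these finitely many minima.

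To handle unboundedness, I would show that $g$ is bounded below and coercive on $\R^d$ whenever $\lambda\in(0,1)$. By the triangle inequality $\qnorm{v-f}\ge \qnorm{v}-1$, so for $\qnorm{v}\ge 1$,
\[
g(v)\ge (\qnorm{v}-1)^p-\lambda^p\qnorm{v}^p = \qnorm{v}^p\bigl((1-1/\qnorm{v})^p-\lambda^p\bigr),
\]
which tends to $+\infty$ as $\qnorm{v}\to\infty$ because $\lambda<1$. Hence there is an $R$ with $g(v)\ge0$ whenever $\qnorm{v}\ge R$. Moreover $g$ is continuous, so it is bounded below on the ball $\{\qnorm{v}\le R\}$, say by $-B$. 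It follows that $g\ge -B$ on all of $\R^d$.

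Now fix one product of closed orthants and a feasible point $x^0$ in it; the origin itself is feasible, since $0\in O_\sigma$ for every $\sigma$. Let $V=\sum_i g(x^0_i)$. Any $x$ in the product with $\sum_i g(x_i)\le V$ must have $g(x_i)\le V+(n-1)B$ for every $i$. By coercivity, this forces $\qnorm{x_i}\le R'$ for some $R'$ depending only on $V,n,B,\lambda,p$. The sublevel set is therefore the intersection of a closed set with a compact ball product, hence compact, and the continuous objective attains its minimum there by Weierstrass. That point is a global minimizer on this product.

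The only step requiring care is the coercivity bound, which relies on $\lambda<1$. That is the regime of interest, since we only seek $\lambda\in(0,1)$. For $\lambda\ge1$ the infimum could be $-\infty$, or unattained, along rays. I would therefore state the claim for $\lambda\in(0,1)$, matching how the program is used.
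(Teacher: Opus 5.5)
Your proof is correct and follows essentially the same route as the paper. Both arguments use the coercivity bound $g(v)\ge \qnorm{v}^p\bigl((1-1/\qnorm{v})^p-\lambda^p\bigr)$ (which needs $\lambda<1$), a global lower bound on $g$, and a comparison with the all-origin profile to confine the minimization to a compact ball. Your decomposition of the feasible set into finitely many products of closed orthants also makes explicit why the median constraint (under the $0^\pm$ convention) defines a closed set, which the paper leaves implicit when it asserts compactness of $\{x\in B_R(0)^n \mid \sum_i\sigma(x_i)=0\}$.
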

\begin{proof}
By the reverse triangle inequality and recalling $\qnorm{f} = 1$, whenever $\qnorm{x_i}>1$ we have
\[
g(x_i) = \qnorm{x_i - f}^p - \lambda^p \qnorm{x_i}^p \ge (\qnorm{x_i} - 1)^p - \lambda^p \qnorm{x_i}^p.
\]
We further factor out $\qnorm{x_i}^p$ to obtain
\[
g(x_i) \ge \qnorm{x_i}^p \left( \left|1 - \qnorm{x_i}^{-1}\right|^p - \lambda^p \right).
\]

Since $\lambda \in (0, 1)$, the bracketed expression approaches a strictly positive constant $(1 - \lambda^p)$ as $\qnorm{x_i} \to \infty$, implying $g(x_i) \to \infty$.
Because $g$ is continuous and diverges to infinity, it attains a global minimum $g_{\min} \in \mathbb{R}$. We can therefore choose a radius $R$ large enough such that $g(x_i) > n - (n-1)g_{\min}$ for any $x_i \notin B_R(0)$. If a location profile $x$ has even one point outside this closed ball, its total objective value strictly exceeds the trivial solution of placing all agents at the origin:
\[
\sum_{i=1}^n g(x_i) > \left(n - (n-1)g_{\min}\right) + (n-1)g_{\min} = n = \sum_{i=1}^n g(0).
\]
Thus, we can restrict our minimization to the compact set $\left\{x \in B_R(0)^n \mid \sum_{i=1}^n \sigma(x_i) = 0\right\}$, guaranteeing the minimum is attained.
\end{proof}
Following \citep{Gravin:2025aa}, we analyze the local properties of this optimal solution $x$. By optimality, each component $x_i$ must minimize $g$ over all points sharing its signature $\sigma(x_i)$. Therefore, our next step is to establish a pointwise lower bound on $g$ for any fixed signature.

\subsubsection{Optimization over Individual Points}
Fix an arbitrary signature, and let $z \in \mathbb{R}^d$ be a minimizer of $g$ restricted to this signature.
For our analysis, we assume $q > 1$.\footnote{The strict inequality $q > 1$ ensures that the map $z \mapsto |z|^q$ is everywhere differentiable, and prevents terms with $q - 1$ in the denominator, such as in the definition of $c(z)$ below, from becoming undefined. The case of $q =1$ follows from a limiting argument.}
We first observe that any coordinate with a positive sign naturally falls outside the interval $[0^+, f_j)$.
\begin{claim}\label{clm:wlog-points}
   We can assume without loss of generality that $z_j \in [f_j, +\infty)$ whenever $z_{j} \geq 0^+$.
\end{claim}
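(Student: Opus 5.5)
We will show that moving a positive-signed coordinate lying in $[0^+, f_j)$ up to $f_j$ never increases $g$ and keeps the signature unchanged. So among minimizers of $g$ over the signature class there is one with $z_j \ge f_j$ for every positive-signed coordinate.

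Fix $z$ in the given signature class and an index $j$ with $z_j \ge 0^+$ and $z_j < f_j$. Define $z'$ by replacing $z_j$ with $f_j$ and leaving every other coordinate alone. If $f_j > 0$, then $z'_j = f_j$ is strictly positive. If $f_j = 0$, then $z_j = 0^+$ and we keep the label $0^+$. Either way $\sigma(z') = \sigma(z)$, so the median constraint \eqref{eq:median-constraint} is untouched.

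We then compare the two terms of $g$ separately.

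\textbf{First term.} The $j$-th coordinate of $z' - f$ is $0$, whereas that of $z - f$ has absolute value $f_j - z_j \ge 0$. All other coordinates agree. By monotonicity of the $\ell_q$ norm under coordinatewise absolute values, $\qnorm{z'-f} \le \qnorm{z-f}$.

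\textbf{Second term.} We have $|z'_j| = f_j \ge |z_j|$ because $0 \le z_j < f_j$. Again by monotonicity, $\qnorm{z'} \ge \qnorm{z}$, so $-\lambda^p\qnorm{z'}^p \le -\lambda^p\qnorm{z}^p$.

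Adding the two comparisons gives $g(z') \le g(z)$.

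We apply this replacement to every offending coordinate in turn. Each step keeps the signature and does not increase $g$. Hence if $z$ was a minimizer over the signature class, the modified point is also a minimizer, and it satisfies the claimed property. In the full program \eqref{eq:full-program}, every $x_i$ can be replaced this way without changing feasibility, and the objective does not increase.

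The only delicate step is the bookkeeping of $0^+$ when $f_j = 0$. In that case the interval $[0^+, f_j)$ is empty apart from the point $0^+$ itself, which already lies in $[f_j, \infty)$, so there is nothing to do. Beyond that the argument is a direct monotonicity check, and I expect no substantial obstacle.
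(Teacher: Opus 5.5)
Your proposal is correct and uses the same argument as the paper: move each positive-signed coordinate in $[0^+,f_j)$ up to $f_j$; then $\|z-f\|_q$ cannot increase and $\|z\|_q$ cannot decrease, so $g$ does not increase while the signature, and hence the median constraint, is preserved. Your non-strict version yields \emph{some} minimizer with the property, which is exactly what the ``without loss of generality'' statement needs; your separate treatment of the case $f_j=0$ is a bookkeeping detail the paper leaves implicit.
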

\begin{proof}
Recall that the function $g$ on input $z$ is defined as follows:
\[
g(z) = \qnorm{z - f}^p - \lambda^p \qnorm{z}^p.
\]
Suppose there exists a coordinate $j \in [d]$ such that $z_j \in [0^+, f_j)$. We analyze the local properties of $g(z)$ where $g$ is as defined in \eqref{eq:full-program}. Shifting $z_j$ to $f_j$ strictly decreases $|z_j - f_j|$, which in turn decreases $\qnorm{z - f}^p$. At the same time, this shift increases $|z_j|$, which in turn increases the penalty term $\qnorm{z}^p$. Together, these adjustments decrease $g(z)$ without altering the location's positive sign $\sigma(z)_j = +1$. Hence, we can assume the optimal location satisfies $z_j \ge f_j$.
\end{proof}

By introducing an intermediate quantity $c(z)$ defined in \eqref{eq:c-z}, we can further characterize the coordinates of $z$ in \Cref{lem:higher-d-optimal-z}.

\begin{equation}\label{eq:c-z}
    c(z) = \begin{cases} \lambda^{\frac{p}{q-1}} \cdot \left ( \frac{ \qnorm{z} }{ \qnorm{z - f} } \right )^{\frac{p-q}{q-1}} & \text{if} \ \qnorm{z - f} > 0, \\ 0 & \text{if} \ \qnorm{z - f} = 0. \end{cases}
\end{equation}

\begin{lemma}[Optimal $z$ with given signature]\label{lem:higher-d-optimal-z}
For every index $j \in [d]$ such that $z_j \neq 0$,
\begin{equation*}
    (1 - c(z))\cdot z_j = f_j
\end{equation*}
Furthermore, if $z_j > 0$ for some $j$, then $c(z) \in [0, 1)$. If $z_j < 0$ for some $j$, then $c(z) \geq 1$. Consequently, $z$ cannot simultaneously have strictly positive and strictly negative coordinates.
\end{lemma}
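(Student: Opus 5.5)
Since $z$ minimizes $g$ over its signature class, which is the product of half-lines (closed or open at $0$, depending on the sign convention $0^\pm$), I would use first-order stationarity coordinatewise. For any coordinate $j$ with $z_j\neq 0$, $z_j$ is an interior point of its allowed half-line. With $q>1$, $g$ is differentiable there, at least where $z\neq f$. So $\partial g/\partial z_j=0$. Writing $A=\qnorm{z-f}$ and $B=\qnorm{z}$, we have
\[
\frac{\partial}{\partial z_j}\qnorm{z-f}^p = p\,A^{p-q}\,|z_j-f_j|^{q-1}\operatorname{sign}(z_j-f_j),\qquad
\frac{\partial}{\partial z_j}\qnorm{z}^p = p\,B^{p-q}\,|z_j|^{q-1}\operatorname{sign}(z_j).
\]
Setting the difference to zero gives $|z_j-f_j|^{q-1}\operatorname{sign}(z_j-f_j)=\lambda^p (B/A)^{p-q}|z_j|^{q-1}\operatorname{sign}(z_j)$. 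Taking the $(q-1)$-th root of this odd power function yields $z_j-f_j=c(z)\,z_j$, i.e.\ $(1-c(z))z_j=f_j$.

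The degenerate case $A=0$, i.e.\ $z=f$, needs separate handling. There $c(z)=0$ by definition and $z_j=f_j$ holds trivially, so the identity holds. Differentiability issues at $z=f$ when $p<q$ therefore do not matter. One should also check the case $B=0$ with $p<q$, but then $z=0$ and there is no $j$ with $z_j\neq0$, so it is vacuous.

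For the sign conclusions, first take $z_j>0$. By \Cref{clm:wlog-points} we have $z_j\ge f_j\ge 0$. From $(1-c)z_j=f_j\ge 0$ with $z_j>0$ we get $c\le 1$. Since $c\ge 0$ by definition, it remains to exclude $c=1$. If $c=1$, then $f_j=0$ and the stationarity equation reads $z_j-f_j=z_j$. That is consistent, so I must argue differently: $c=1$ would force $f_i=0$ for every nonzero coordinate. I expect this edge case to be the main obstacle. I would handle it by noting that if $c(z)=1$ with some $z_j>0$ and $f_j=0$, then moving $z_j$ continuously toward $0^+$ does not increase $g$. In fact $g$ is positively $p$-homogeneous along scaling of those coordinates, so one can reduce to a minimizer with $z_j=0^+$ and treat it as a zero coordinate. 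This is in the same "without loss of generality" spirit as \Cref{clm:wlog-points}.

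Now take $z_j<0$. Since $f_j\ge 0$ we have $(1-c)z_j=f_j\ge0$, which forces $1-c\le 0$, i.e.\ $c\ge1$. Finally, if $z$ had both a strictly positive and a strictly negative coordinate, we would need both $c<1$ and $c\ge1$, a contradiction, since $c(z)$ is a single scalar independent of $j$.
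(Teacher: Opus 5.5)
Your derivation of the identity $(1-c(z))z_j=f_j$ is correct. Inverting the odd, strictly increasing map $t\mapsto|t|^{q-1}\operatorname{sign}(t)$ is cleaner than the paper's separate arguments that $z_j\neq f_j$ and $\operatorname{sign}(z_j-f_j)=\operatorname{sign}(z_j)$. Your handling of $z=f$, of $z_j<0$, and of the final scalar comparison matches the paper.

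The gap is the exclusion of $c(z)=1$ when some $z_j>0$. You are right that this case is compatible with stationarity whenever $f$ vanishes on the support of $z$. The paper's own step ``because $z_j\neq f_j$, this yields $c(z)<1$'' only gives $c(z)\neq 0$, so it has the same hole. Your repair, however, is false. If $c(z)=1$, then $f_j=0$ on the support of $z$, so $\|z-f\|_q^q=\|z\|_q^q+1$ and $g(z)=h(w):=(1+w)^r-\lambda^p w^r$ with $w=\|z\|_q^q$ and $r=p/q$. This is not homogeneous. For $r<1$ it has a strict global minimum on $[0,\infty)$ at the unique $w^*$ with $w^*/(1+w^*)=\lambda^{p/(1-r)}$, and this is exactly the condition $c(z)=1$. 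So shrinking toward $0^+$ strictly increases $g$. Concretely, take $p=1$, $q=2$, $\lambda=1/2$, $f=(1,0)$ and signature $(-,+)$. The unique minimizer is $z=(0,1/\sqrt{3})$, with $g(z)=\sqrt{3}/2<1=g(0)$ and $c(z)=1$. For $f=(1,0,0)$ and signature $(-,+,-)$, every $z=(0,t,-s)$ with $t,s>0$ and $t^2+s^2=1/3$ is a minimizer. So even the final ``consequently'' clause fails for some minimizers.

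Hence $c(z)<1$ cannot be proved as stated; it needs an extra hypothesis. It does hold when $p\ge q$. With your $A,B$, $c(z)=1$ means $(A/B)^{p-q}=\lambda^p$. But $c(z)=1$ forces $f=0$ on the support of $z$, so $A^q=B^q+1$ and $(A/B)^{p-q}\ge 1>\lambda^p$. It also holds whenever the signature has a positive-sign coordinate $j$ with $f_j>0$: then $z_j\ge f_j>0$ by \Cref{clm:wlog-points}, and $c(z)=1-f_j/z_j<1$. The remaining case is $p<q$ with $f_j=0$ on every positive-sign coordinate. There $X_i=0$, and if $c(x_i)=1$ the downstream inequality of \Cref{lem:g-lower-bound} follows directly from $g(x_i)=h(\|x_i\|_q^q)\ge\delta_1=G(0)$, the bound on $h$ already proved in Case 2 of that proof. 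The lemma itself should carry this exception rather than be patched by a reduction to $z_j=0^+$.
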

\begin{proof}
If $\qnorm{z - f} = 0$, then $z = f$ and $c(z) = 0$; the lemma is clearly satisfied in this case. For the remainder of the proof, assume $\qnorm{z - f} > 0$.
Note also that the claim clearly holds if $z = 0$, so we further assume $z \neq 0$.

Since $q > 1$, and since we have reduced to the case $z\neq 0$ and $z\neq f$, the function $g(z) = \qnorm{z - f}^p - \lambda^p \qnorm{z}^p$ is differentiable at $z$. Because $z$ is a local optimum of $g$, its partial derivatives with respect to any coordinate $j$ where $z_j \neq 0$ must equal zero. Computing the partial derivative, we obtain
\begin{equation}\label{eq:partial-derivative-g-app}
    \pdv{g}{z_j}(z) = p \qnorm{z - f}^{p-q} |z_j - f_j|^{q-1} \sign(z_j - f_j) - \lambda^p p \qnorm{z}^{p-q} |z_j|^{q-1} \sign(z_j).
\end{equation}

Now consider any $j$ such that $z_j \neq 0$. It must be the case that $\sign(z_j - f_j) \neq 0$ and thus $z_j \neq f_j$; otherwise the first term in \eqref{eq:partial-derivative-g-app} would vanish while the second would remain nonzero, contradicting stationarity.
Furthermore, since every term on the right hand side of \eqref{eq:partial-derivative-g-app} is nonzero, we must have $\sign(z_j - f_j) = \sign(z_j)$ to yield a derivative of zero. This sign agreement allows us to derive that
\begin{equation*}
    z_j - f_j = z_j \cdot \lambda^{\frac{p}{q - 1}} \left ( \frac{ \qnorm{z} }{ \qnorm{z - f} } \right )^{\frac{p-q}{q - 1}} = z_j \cdot c(z).
\end{equation*}

If $z_j > 0$, we have $c(z) = \frac{z_j - f_j}{z_j} = 1 - \frac{f_j}{z_j} \le 1$ (recall that \Cref{clm:wlog-points} implies $z_j \geq f_j$). Because $z_j \neq f_j$, this yields $c(z) < 1$.
On the other hand, if $z_j < 0$, we have $c(z) = \frac{z_j - f_j}{z_j} = 1 + \frac{f_j}{|z_j|} \ge 1$.
Because $c(z)$ is a single global scalar, it is impossible for $c(z) < 1$ and $c(z) \ge 1$ to hold simultaneously. Thus, $z$ cannot have both strictly positive and strictly negative coordinates. \qedhere
\end{proof}

\subsubsection{Program Relaxation}
Let $x = (x_1, \dots, x_n) \in (\mathbb R^d)^n$ be an optimal solution to the program specified in \eqref{eq:full-program}. We will use our characterization of the individual locations $x_i$ (given in the previous subsection) to write a relaxation of our initial program. To this end, for each $i \in [n]$, we define the variable
\begin{equation}
X_i = \sum_{\substack{j \in [d] \\ x_{ij} \geq 0^+}} f_j^q .
\end{equation}
Note that $X_i \in [0, 1]$ because $\qnorm{f}=1$.
We further introduce a function $G : [0, 1] \to \mathbb R$ that will help to establish a pointwise lower bound on $g(x_i)$. The function is given by
\begin{subequations}\label{eq:G-and-C-star-def}
\begin{equation}\label{eq:G-def}
G(Z) = \begin{cases}
    - Z^{p/q}  \lambda^{p} \left ( 1 - C^* \right )^{1-p} + (1 - Z)^{p/q}  & \text{if } p \geq q > 1 \\
    - Z^{p/q}  \lambda^{p} \left ( 1 - C^* \right )^{1-p} + (1 - Z)^{p/q} \left (1 - \gamma^{\frac{p}{1-p/q}} \right )^{1-p/q}   & \text{if } 1 \leq p < q
\end{cases}
\end{equation}
where $\gamma \in (\lambda, 1)$ is a parameter that will later be specified and $C^*$ is given by
\begin{equation}\label{eq:C-star-def}
C^* \coloneqq C^*(p, q, \gamma) = \begin{cases}
0 & \text{if } p = 1, \\
\lambda^\frac{p}{p-1} & \text{if } p \geq q > 1,  \\
\left ( \lambda / \gamma \right )^ \frac{p}{p-1} & \text{if } 1 < p < q.
\end{cases}
\end{equation}
\end{subequations}
The parameter $\gamma$ is introduced for technical reasons which will become apparent in later proofs.

Having defined $X_i$ and $G$, we can establish two key results. First, we show that the median mechanism induces a convenient structure on the sum of the variables $X_i$. Then, we show that $g(x_i)$ is bounded below by $G(X_i)$. The proof of the latter (which follows from an involved case analysis of $g$ and $G$ dependent on the relation between $p$ and $q$) is deferred to \Cref{app:higher-d-g-lowerbound}.
\begin{lemma}\label{lem:sum-X-i}
The variables $X_1, \dots, X_n$ are such that $\sum_{i \in [n]} X_i = n / 2$.
\end{lemma}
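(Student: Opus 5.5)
The proof is a double-counting argument that swaps the order of summation and then applies the median constraint \eqref{eq:median-constraint} one coordinate at a time. By the definition of $X_i$,
\[
\sum_{i\in[n]} X_i = \sum_{i\in[n]} \sum_{\substack{j\in[d]\\ x_{ij}\ge 0^+}} f_j^q = \sum_{j\in[d]} f_j^q \cdot \bigl|\{ i\in[n] : x_{ij} \ge 0^+ \}\bigr| .
\]
Everything then reduces to counting, for each coordinate $j$, the agents whose $j$-th signature entry is $+1$.

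For that count we use the median constraint. Because of the duplication reduction, $n$ is even, and $x$ is a feasible solution of \eqref{eq:full-program}, so $\sum_i \sigma(x_i) = 0$. Reading this coordinate-wise, the $j$-th entries $\sigma(x_i)_j \in \{-1,+1\}$ sum to zero. Exactly $n/2$ of them are therefore $+1$, and these are precisely the indices with $x_{ij} \ge 0^+$; zero entries are counted according to their assigned sign $0^+$ or $0^-$.

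Substituting this count gives
\[
\sum_i X_i = \frac{n}{2} \sum_j f_j^q = \frac{n}{2}\qnorm{f}^q = \frac{n}{2},
\]
using the normalization $\qnorm{f}=1$ from \eqref{eq:domain-constraint}.

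The one point that needs care is consistency of the zero-sign convention. The sign labels $0^\pm$ used to define $X_i$ must be the same ones under which \eqref{eq:median-constraint} holds. Since $X_i$ is defined directly in terms of the signature $\sigma(x_i)$ fixed by the program, this is automatic.

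If $q=\infty$ is in scope, $f_j^q$ is not meaningful, but the section already assumes $q>1$ with finite exponents, so no extra work is needed.
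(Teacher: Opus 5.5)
Your proposal is correct and follows the paper's own proof essentially verbatim: interchange the sums over agents and coordinates, use the median constraint to see that exactly $n/2$ agents carry a $+1$ in each coordinate, and conclude with $\qnorm{f}=1$. Your extra remark, that the $0^{\pm}$ assignment used to define $X_i$ must be the same one under which \eqref{eq:median-constraint} holds, is a sensible clarification that the paper leaves implicit.
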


\begin{proof}
By summing over $X_i$ and rearranging summations, we can derive that
\begin{align*}
    \sum_{i \in [n]} X_i =
    \sum_{i \in [n]} \sum_{\substack{j \in [d] \\ x_{ij} \geq 0^+}} f_j^q =
    \sum_{j \in [d]} \sum_{\substack{i \in [n] \\ x_{ij} \geq 0^+}} f_j^q =
    \sum_{j \in [d]} f_j^q \sum_{\substack{i \in [n] \\ x_{ij} \geq 0^+}} 1 =
    \frac{n}{2}.
\end{align*}
The final equality follows from the fact that $\| f\|_q =1$ and the fact that exactly half of the locations are assigned a positive sign for each coordinate $j \in [d]$.
\end{proof}

\begin{restatable}{lemma}{lemglowerbound}\label{lem:g-lower-bound}
For every $i \in [n]$, we have $g(x_i) \geq G(X_i)$.
\end{restatable}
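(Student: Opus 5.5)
The plan is to reduce to a minimizer within a fixed signature and then treat the resulting one-parameter family. Since $x$ is optimal for \eqref{eq:full-program}, each $z=x_i$ minimizes $g$ over points with signature $\sigma(x_i)$. So it suffices to show $g(z)\ge G(X)$ for such a minimizer, where $X=\sum_{j\in P} f_j^q$ and $P=\{j : z_j\ge 0^+\}$. By \Cref{clm:wlog-points}, $z_j\ge f_j$ for $j\in P$. \Cref{lem:higher-d-optimal-z} then splits the analysis into two cases: (A) $z$ has no strictly negative coordinate, or (B) $z$ has some strictly negative coordinate.

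\emph{Case (B).} Any $j\in P$ with $f_j>0$ would force $z_j\ge f_j>0$, which \Cref{lem:higher-d-optimal-z} forbids. Hence $X=0$, and every coordinate with $f_j>0$ satisfies $z_j\le 0$, so $|z_j-f_j|=|z_j|+f_j$. This gives $\qnorm{z-f}^q\ge \qnorm{z}^q+1$. Write $A=\qnorm{z}^q$ and $r=p/q$.
\begin{itemize}
\item If $p\ge q$: superadditivity of $t\mapsto t^{r}$ gives $g(z)\ge A^{r}+1-\lambda^pA^{r}\ge 1=G(0)$.
\item If $p<q$: I will use the elementary ``reverse H\"older'' bound for $r<1$,
\[
(a+b)^r\ge \alpha a^r+\beta b^r \quad\text{whenever } \alpha^{1/(1-r)}+\beta^{1/(1-r)}\le 1,
\]
with $\alpha=\gamma^p$ and $\beta=(1-\gamma^{p/(1-r)})^{1-r}$. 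This bound follows from H\"older or from concavity. It gives $g(z)\ge(\gamma^p-\lambda^p)A^r+\beta\ge\beta=G(0)$, using $\gamma>\lambda$.
\end{itemize}
The point $z=0$ falls into the same computation.

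\emph{Case (A).} Coordinates outside $P$ are $0$, and \Cref{lem:higher-d-optimal-z} gives $z_j=s f_j$ on $P$, with the common factor $s=1/(1-c(z))\ge1$. Coordinates with $f_j=0$ contribute nothing and can be ignored. Hence
\[
g(z)=\bigl((s-1)^qX+(1-X)\bigr)^{p/q}-\lambda^p s^p X^{p/q}.
\]
I split this as $a=(s-1)^qX$ and $b=1-X$ inside the power, then handle $a$ and $b$ separately.
\begin{itemize}
\item If $p\ge q$: superadditivity gives $g(z)\ge X^{p/q}\bigl[(s-1)^p-\lambda^ps^p\bigr]+(1-X)^{p/q}$. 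Minimizing $\psi(s)=(s-1)^p-\lambda^ps^p$ over $s\ge1$, the first-order condition is $(s-1)/s=\lambda^{p/(p-1)}=C^*$. Using $(C^*)^p=\lambda^pC^*$, the minimum value is $-\lambda^p(1-C^*)^{1-p}$, which is exactly the first branch of $G$.
\item If $p<q$: the same reverse-H\"older split gives $g(z)\ge X^{p/q}\bigl[\gamma^p(s-1)^p-\lambda^ps^p\bigr]+\beta(1-X)^{p/q}$. The bracket is $\gamma^p$ times the previous $\psi$ with $\lambda$ replaced by $\lambda/\gamma$. Its minimum is therefore $-\gamma^p(\lambda/\gamma)^p(1-C^*)^{1-p}=-\lambda^p(1-C^*)^{1-p}$, with $C^*=(\lambda/\gamma)^{p/(p-1)}$. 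For $p=1$ the bracket is $(\gamma-\lambda)s-\gamma\ge-\lambda$ since $\gamma>\lambda$, matching $C^*=0$.
\end{itemize}

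The main obstacle is the $p<q$ regime. There $t\mapsto t^{p/q}$ is subadditive, so the naive split fails, and one must find the right weighted concavity inequality together with the auxiliary $\gamma$. Checking that the weights $\gamma^p$ and $(1-\gamma^{p/(1-p/q)})^{1-p/q}$ are admissible is a short H\"older computation: maximize $\alpha u+\beta(1-u)$ after substituting $a=uT$.

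Minor bookkeeping remains in both cases. One item is the zero-signed coordinates and coordinates with $f_j=0$. The other is the degenerate points $z=f$ and $z=0$, where differentiability fails; these are checked directly. The case $q=1$ follows by the limiting argument already indicated.
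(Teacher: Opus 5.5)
Your proposal is correct and follows essentially the same route as the paper. It uses the sign dichotomy from \Cref{lem:higher-d-optimal-z} (giving $X=0$ and $\qnorm{z-f}^q\ge\qnorm{z}^q+1$ in the nonpositive case), rewrites $g$ in terms of the common scale factor and $X$ in the nonnegative case, splits via superadditivity or the $\gamma$-weighted Jensen (your ``reverse H\"older'') inequality with $\eta=\gamma^{p/(1-p/q)}$, and finishes with a one-variable minimization yielding $-\lambda^p(1-C^*)^{1-p}$; your parametrization by $s=1/(1-c)$ and the rescaling $\lambda\mapsto\lambda/\gamma$ for $p<q$ are just a repackaging of the paper's \Cref{clm:C-optimization}.
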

Granting \Cref{lem:sum-X-i} and \Cref{lem:g-lower-bound}, it is not hard to see that our initial program can be relaxed to the following program:
\begin{subequations}\label{eq:relaxed-program-higher-d}
\begin{align}
 \min_{(X_1, \dots, X_n) \in [0, 1]^n} \quad & \sum_{i \in [n]} G(X_i) \label{eq:relaxed-objective}  \\
 \text{s.t.} \quad & \sum_{i \in [n]} X_i = \frac{n}{2}  \label{eq:relaxed-median-constraint}.
\end{align}
\end{subequations}
In later sections, we will analyze the relaxed program based on the cases of $p$ and $q$.
Our goal is to determine when the solution of the relaxed program is nonnegative.
Since the relaxed program provides a lower bound for our initial program, we can then find a sufficient condition on $\lambda \in (0, 1)$ that makes the initial program nonnegative.

\newcommand{\h}{\textcolor{black}{\gamma^{\frac{p}{1-r}}}}

\clearpage

\subsubsection{The Case of \texorpdfstring{$p/q \in \{1, 2\}$}{p/q = 1, 2}}
We first consider the setting in which $p/q \in \{1, 2\}$. It turns out that analysis of the relaxed program \eqref{eq:relaxed-program-higher-d} is relatively straightforward. By standard convexity arguments, we can establish a range of $\lambda$ values for which the relaxed objective is nonnegative, and hence the initial program \eqref{eq:full-program} is nonnegative. Taking the largest such value of $\lambda$ gives an upper bound on the approximation ratio of the CM mechanism.

\begin{proposition}\label{prop:p-over-q-is-1-or-2}
If $p/q \in \{1, 2 \}$, then $\mathtt{UB}(p, q) = 2^{1 - 1/p}$.
\end{proposition}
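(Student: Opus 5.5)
The plan is to work entirely with the relaxed program \eqref{eq:relaxed-program-higher-d}. By \Cref{lem:sum-X-i} and \Cref{lem:g-lower-bound}, it suffices to show that $\sum_i G(X_i)\ge 0$ for every $(X_1,\dots,X_n)\in[0,1]^n$ with $\sum_i X_i=n/2$, when $\lambda=2^{1/p-1}$. Nonnegativity of the original program \eqref{eq:full-program} then gives the approximation ratio $1/\lambda=2^{1-1/p}$. Since $p/q\in\{1,2\}$ forces $p\ge q$, we are in the first branch of \eqref{eq:G-and-C-star-def}. Write $r=p/q$ and
\[
K:=\lambda^p\,(1-C^*)^{1-p},\qquad C^*=\lambda^{p/(p-1)},
\]
so that $G(Z)=(1-Z)^r-K Z^r$. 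The degenerate case $p=1$ (hence $q=1$) gives $\lambda=1$ and is trivial. The case $q=1$, $p=2$ is handled by the limiting argument already invoked for $q>1$.

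The key computation is that $K=1$ exactly at $\lambda=2^{1/p-1}$. Indeed, $\lambda^p=2^{1-p}$ and $\lambda^{p/(p-1)}=2^{(1-p)/(p-1)}=1/2$, so
\[
(1-C^*)^{1-p}=(1/2)^{1-p}=2^{p-1},
\]
and therefore $K=2^{1-p}\cdot 2^{p-1}=1$. I would also note that $K$ is increasing in $\lambda$ on $(0,1)$, so $K\le 1$ for all $\lambda\le 2^{1/p-1}$. This monotonicity is not strictly needed, but it explains why this $\lambda$ is the largest value this relaxation can certify.

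Next I split by $r$.
\begin{itemize}
\item If $r=1$, then $G(Z)=1-(1+K)Z$ is affine. The constraint gives $\sum_i G(X_i)=n-(1+K)n/2=\tfrac n2(1-K)\ge 0$.
\item If $r=2$, then $G(Z)=(1-K)Z^2-2Z+1$, which is convex whenever $K\le1$. Jensen's inequality under $\sum_i X_i=n/2$ gives $\sum_i G(X_i)\ge nG(1/2)=\tfrac n4(1-K)\ge0$.
\end{itemize}
In both cases, when $K=1$ the function $G$ becomes affine ($1-2Z$), and the sum is exactly $0$.

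Hence the relaxed program, and thus \eqref{eq:full-program}, has nonnegative minimum at $\lambda=2^{1/p-1}$, which gives $\mathtt{UB}(p,q)=2^{1-1/p}$. This value also matches the one-dimensional lower bound of \citet{Feigenbaum:2017aa} (embedded in $\R^d$), so it cannot be improved. The only real obstacle is the heavy lifting hidden in \Cref{lem:g-lower-bound}, which we are allowed to assume. Beyond that, the delicate step is the exact algebraic identity $K=1$, which I would verify carefully, including the sign of the exponent $1-p$.
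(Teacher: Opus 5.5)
Your proposal is correct and follows the paper's proof. You reduce to the relaxed program and write $G(Z)=(1-Z)^r-KZ^r$, with your $K$ playing the role of the paper's $\delta_2$. You handle $r=1$ by linearity and $r=2$ by Jensen, obtaining $n(1-K)/2$ and $n(1-K)/4$, and you check that $K\le 1$ exactly when $\lambda\le 2^{1/p-1}$. The remaining differences are cosmetic: you evaluate at the threshold $\lambda=2^{1/p-1}$, where $K=1$ and $G$ becomes affine, which shows that the paper's separate concave subcase for $r=2$ is vacuous; and you treat $p=q=1$ directly (the $\ell_1$ utilitarian cost separates by coordinate) instead of through the limiting argument.
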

\begin{proof}
Let $r = p/q$. It suffices to show that for all $\lambda \in (0, 2^{1/p - 1}]$, the minimum of \eqref{eq:relaxed-program-higher-d} is nonnegative. This would mean that we can take $\mathtt{UB}(p,q) = 1 / \lambda^* \coloneqq 1 / (2^{1/p - 1}) = 2^{1 - 1/p}$.

Since $r \in \{1, 2\}$, we are in the setting where $p \ge q > 1$. From \eqref{eq:G-and-C-star-def}, we can rewrite $G(Z)$ as
\begin{equation*}
G(Z) = \delta_1 ( 1 - Z)^r - \delta_2 Z^r,
\end{equation*}
where $\delta_1 = 1$ and $\delta_2 = \lambda^p (1-C^*)^{1-p}$.
Below, we will repeatedly use the fact that $X_i \in [0,1]$ for every $i \in [n]$ and the fact that the median constraint \eqref{eq:relaxed-median-constraint} implies $\sum_{i \in [n]} X_i = \frac{n}{2}$.

If $r = 1$, then
\begin{align*}
\sum_{i \in [n]} G(X_i)
= \sum_{i \in [n]} \left ( \delta_1 (1 -X_i) -\delta_2 X_i \right )
= \frac{n}{2} \left ( \delta_1 - \delta_2 \right ).
\end{align*}
If $r = 2$, then $G$ can either be concave, convex or linear. If $G$ is concave, then
\begin{align*}
\sum_{i \in [n]} G(X_i)
\geq \sum_{i \in [n]} \left ( X_i \cdot G(1) + (1- X_i) \cdot G(0) \right )
= \frac{n}{2} \left ( \delta_1 - \delta_2 \right ).
\end{align*}
If $G$ is convex (or linear), then
\begin{align*}
\sum_{i \in [n]} G(X_i)
\geq n G \left ( \frac{1}{n} \sum_{i \in [n]} X_i \right )
= n G \left (\frac{1}{2} \right )
=  \frac{n}{4} \left ( \delta_1 - \delta_2 \right ).
\end{align*}
Altogether, these observations imply that if $r \in \{1, 2\}$, then $\sum_{i \in [n]} G(X_i) \geq 0$ whenever $\delta_1 \geq \delta_2$.

\paragraph{Determining feasible $\lambda$.}
It remains to determine the range of $\lambda$ for which $\delta_1 \geq \delta_2$.
We now recall the precise specifications of $\delta_1$ and $\delta_2$:
\begin{equation*}
    \delta_1 = 1, \quad \delta_2 = \begin{cases} \lambda^p & \text{if } p = 1, \\ \frac{\lambda^p}{\left (1 - \lambda^{\frac{p}{p-1}} \right)^{p-1}} & \text{if } p > 1. \end{cases}
\end{equation*}
Since $p > 1$ (as we continue to assume $q > 1$), the following are equivalent assuming $\lambda \in (0, 1)$:
\begin{align*}
\delta_1 \geq \delta_2
&\iff \left (1 - \lambda^{\frac{p}{p-1}} \right)^{p-1} \geq \lambda^p \\
&\iff 1 - \lambda^{\frac{p}{p-1}} \geq \lambda^{\frac{p}{p-1}} \\
&\iff 2^{-1} \geq\lambda^{\frac{p}{p-1}} \\
&\iff 2^{1/p - 1}  \geq \lambda
\end{align*}
Since we assume $\lambda \in (0, 2^{1/p - 1}]$, it follows that $\delta_1 \geq \delta_2$ as desired.
\end{proof}

\subsubsection{Characterizing the Remaining Cases}
When $p/q \notin \{ 1, 2\}$, we are not able to immediately determine a range of values of feasible $\lambda$. Nevertheless, the structure of $G$ is still well-behaved, as we show in the following lemma.
The proof of this result, which follows from a standard calculus argument, is deferred to the appendix (\Cref{app:higher-d-second-derivative}).

\begin{lemma}\label{lem:higher-d-second-derivative}
The second derivative of $G$ on $(0, 1)$ is characterized as follows:
\begin{enumerate}
    \item If $p/q \in (1, 2)$, then there exists $Z^* \in (0, 1)$ such that $G''(Z) < 0$ for $Z \in (0, Z^*)$, $G''(Z) > 0$ for $Z \in (Z^*, 1)$, and $G''(Z^*) = 0$.
    \item If $p/q \in (0, 1) \cup (2, \infty)$, then there exists $Z^* \in (0, 1)$ such that $G''(Z) > 0$ for $Z \in (0, Z^*)$, $G''(Z) < 0$ for $Z \in (Z^*, 1)$, and $G''(Z^*) = 0$.
\end{enumerate}
\end{lemma}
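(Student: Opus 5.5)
Write $r = p/q$ and put $G$ in the common form
\[
G(Z) = \delta_1 (1-Z)^r - \delta_2 Z^r ,
\]
where $\delta_2 = \lambda^p (1-C^*)^{1-p} > 0$. The constant $\delta_1 > 0$ equals $1$ when $p \ge q$ and equals $\bigl(1-\gamma^{p/(1-r)}\bigr)^{1-r}$ when $p<q$. In both cases $\delta_1>0$, since $\gamma<1$. Differentiating twice on $(0,1)$ gives
\[
G''(Z) = r(r-1)\left[\delta_1 (1-Z)^{r-2} - \delta_2 Z^{r-2}\right].
\]
The whole proof therefore reduces to locating the sign changes of the bracket $B(Z) = \delta_1 (1-Z)^{r-2} - \delta_2 Z^{r-2}$ and then multiplying by the sign of $r(r-1)$.

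\textbf{Locating the root of $B$.} Since $r \neq 2$, the equation $B(Z)=0$ is equivalent to
\[
\left(\frac{Z}{1-Z}\right)^{r-2} = \frac{\delta_1}{\delta_2}.
\]
The map $Z \mapsto Z/(1-Z)$ is a strictly increasing bijection from $(0,1)$ onto $(0,\infty)$. Raising to the nonzero power $r-2$ keeps it a strictly monotone bijection onto $(0,\infty)$. Because $\delta_1/\delta_2\in(0,\infty)$, there is exactly one root
\[
Z^* = \frac{t}{1+t}\in(0,1), \qquad t = (\delta_1/\delta_2)^{1/(r-2)} .
\]

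\textbf{Sign of $B$ away from $Z^*$.} Check the endpoint behaviour.
\begin{itemize}
\item If $r<2$, then $r-2<0$. As $Z\to 0^+$, the term $Z^{r-2}\to\infty$, so $B\to -\infty$. As $Z\to 1^-$, $B\to+\infty$. Hence $B<0$ on $(0,Z^*)$ and $B>0$ on $(Z^*,1)$.
\item If $r>2$, the roles reverse: $B(0^+) = \delta_1>0$ and $B(1^-) = -\delta_2<0$. Hence $B>0$ on $(0,Z^*)$ and $B<0$ on $(Z^*,1)$.
\end{itemize}
Since $B$ has a single root, it keeps a constant sign on each side of $Z^*$. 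This gives the claimed strict inequalities away from $Z^*$, with $G''(Z^*)=0$.

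\textbf{Combining with $r(r-1)$.}
\begin{itemize}
\item For $r\in(1,2)$: $r(r-1)>0$ and $B$ goes from negative to positive, so $G''$ goes from negative to positive. This is item 1.
\item For $r>2$: $r(r-1)>0$ and $B$ goes from positive to negative, so $G''$ does the same. This is the $(2,\infty)$ part of item 2.
\item For $r\in(0,1)$: $r(r-1)<0$ and $B$ goes from negative to positive, so $G''$ goes from positive to negative. This is the $(0,1)$ part of item 2.
\end{itemize}
This yields exactly the two patterns in the statement.

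\textbf{Main obstacle.} The only real point of care is checking that $\delta_1$ and $\delta_2$ are finite and strictly positive in every regime. For $\delta_2$ this needs $C^*<1$, i.e. $\lambda<1$ when $p\ge q$ and $\lambda<\gamma$ when $1<p<q$. In the case $p=1<q$ we have $C^*=0$, so $\delta_2 = \lambda^p$. For $\delta_1$ in the regime $p<q$, we have $1-r>0$, and so $\gamma<1$ gives $\gamma^{p/(1-r)}<1$. With these checked, the rest is the monotonicity argument above.
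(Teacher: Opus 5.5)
Your proof is correct and follows essentially the paper's argument. You use the same factorization $G''(Z) = r(r-1)\bigl[\delta_1(1-Z)^{r-2}-\delta_2 Z^{r-2}\bigr]$, the same unique root $Z^*$, and the same case split on the sign of $r(r-1)$; the only difference is that you read off the sign of the bracket on each side of $Z^*$ from its endpoint limits and the uniqueness of the root, whereas the paper shows the bracket is strictly monotone by computing its derivative $(2-r)\bigl[\delta_1(1-Z)^{r-3}+\delta_2 Z^{r-3}\bigr]$.
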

Using \Cref{lem:higher-d-second-derivative}, we can significantly restrict the set of values in an optimal solution to \eqref{eq:relaxed-program-higher-d}.

\begin{claim}[Values of Optimal Solution]\label{lem:higher-d-optimal-solutions}
Let $X_1, \dots, X_n$ be an optimal solution to the program \eqref{eq:relaxed-program-higher-d}.
Let $Z^* \in (0, 1)$ be the inflection point guaranteed by \Cref{lem:higher-d-second-derivative}.
\begin{enumerate}
\item
If $p/q \in (1, 2)$, then there exist $a \in (0, Z^*), b \in [Z^*, 1]$ such that for all $i \in [n]$, $X_i \in \{0, a, b\}$. Moreover, $|\{ i \in [n] : X_i = a \}| \leq 1$.

\item
If $p/q \in (0, 1) \cup (2, \infty)$, then there exist $a \in [0, Z^*], b \in (Z^*, 1)$ such that for all $i \in [n]$, $X_i \in \{a, b, 1\}$. Moreover, $|\{ i \in [n] : X_i = b \}| \leq 1$.

\end{enumerate}
\end{claim}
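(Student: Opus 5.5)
The plan is a standard exchange argument for separable programs with a linear constraint: at any optimum, perturbing two coordinates while keeping $\sum_i X_i$ fixed must not decrease the objective. Two consequences drive everything.

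\emph{(i) Concave region.} Suppose two values $X_i, X_k$ both lie in an interval where $G$ is concave, and neither sits at the endpoint of that interval. Replacing them by $X_i+t$ and $X_k-t$ gives a function of $t$ that is concave. It is therefore minimized at an endpoint of the feasible range of $t$. Hence we may push one of the two values to the boundary of the concave region, and at most one value remains strictly inside it.

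\emph{(ii) Convex region.} If two values lie in a region where $G$ is convex, replacing both by their average does not increase the objective, since $G$ is convex there. So all values in a convex region may be taken equal.

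Note that the minimum exists because the feasible set is compact and $G$ is continuous. We may therefore take an optimal solution and modify it, without losing optimality, into one with the claimed structure.

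\textbf{Case 1: $p/q\in(1,2)$.} By \Cref{lem:higher-d-second-derivative}, $G$ is concave on $[0,Z^*]$ and convex on $[Z^*,1]$.
\begin{enumerate}
\item Apply (ii) to all $X_i\in[Z^*,1]$. They can be replaced by their common average $b\in[Z^*,1]$. Averaging does not move any value out of the convex region.
\item Apply (i) to any two values in $(0,Z^*)$. Move one of them toward $0$ or toward $Z^*$, whichever the concavity allows. Repeat until at most one value $a\in(0,Z^*)$ remains.
\item Any value pushed to exactly $Z^*$ now lies in the convex region. Re-averaging it with the $b$-values keeps $b\in[Z^*,1]$.
\end{enumerate}
This re-averaging can reintroduce interior points, which is why we need a termination argument. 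I would handle this with a potential: for example, lexicographically decrease the number of distinct values, or use a strict-improvement argument where it is available. Alternatively, I would argue directly with the KKT conditions. On $(0,1)$, all interior values share a common derivative $G'(X_i)=\mu$. On the strictly concave part, two distinct interior points violate second-order optimality. On the strictly convex part, $G'$ is injective, so all those points coincide.

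\textbf{Case 2: $p/q\in(0,1)\cup(2,\infty)$.} This is symmetric: $G$ is convex on $[0,Z^*]$ and concave on $[Z^*,1]$.
\begin{enumerate}
\item Average all values in $[0,Z^*]$ to a common $a\in[0,Z^*]$.
\item Apply (i) on $(Z^*,1)$. Concave pairs are pushed toward $Z^*$ or $1$, leaving at most one value $b\in(Z^*,1)$.
\item Values pushed down to $Z^*$ merge into the convex block $a$.
\end{enumerate}
One detail needs care: a value at exactly $Z^*$ should be counted in the closed convex interval. This is why the claim uses $a\in[0,Z^*]$ and $b\in(Z^*,1)$.

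The main obstacle is making the iteration rigorous, so that concave pushing and convex averaging together terminate in the stated form. I would first try a second-order argument on a fixed optimal solution that has the fewest distinct values. If two values lie in the concave interior, the perturbation in (i) has nonpositive second derivative, and strictly negative second derivative at interior points where $G''<0$. This either strictly improves the objective or keeps it at optimality while reducing the number of distinct values. Either outcome contradicts the choice of solution. Similarly, two distinct values in the convex region can be averaged without loss, which merges them and again contradicts minimality.
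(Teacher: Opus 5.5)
Your argument, as structured, proves a weaker statement than the claim. The claim is about an \emph{arbitrary} optimal solution $X_1,\dots,X_n$, but both of your main routes only produce \emph{some} optimal solution with the stated form.
\begin{itemize}
\item The route that modifies an optimal solution ``without losing optimality'' uses only weak convexity and concavity: averaging ``does not increase'' the objective, and a concave function of $t$ is ``minimized at an endpoint''.
\item The route that fixes an optimal solution with the fewest distinct values is, by construction, an existence argument.
\end{itemize}
Nothing you wrote excludes a different optimum having, say, two coordinates in the open concave interval, or two different values in the convex block. For the paper's later use (computing the minimum value of the relaxed program) existence would suffice, but it is not what the claim asserts.

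The missing ingredient is strictness, which \Cref{lem:higher-d-second-derivative} already supplies. $G''$ has a strict sign on each open piece and $G$ is continuous on $[0,1]$. So $G$ is strictly convex on the closed convex piece ($[Z^*,1]$ in Case 1, $[0,Z^*]$ in Case 2) and strictly concave on the closed concave piece. The paper then argues directly on the given optimum:
\begin{itemize}
\item If $X_i<X_j$ lie in the convex piece, replacing them by $X_i+\epsilon$ and $X_j-\epsilon$ preserves the sum and strictly lowers the objective.
\item If two coordinates (possibly with equal values) lie in the open concave interval, replacing them by $X_i-\epsilon$ and $X_j+\epsilon$ keeps both in that interval for small $\epsilon$ and again strictly lowers the objective.
\end{itemize}
Either way optimality is contradicted. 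This gives the structure for every optimum with no iteration at all.

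Your termination worry was also unfounded even for the existence version. Averaging values from a closed piece lands back in that piece, never in the other one, so ``push concave pairs, then average the convex block once'' terminates.

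Your KKT sketch is the right instinct for the universal statement, but as written it only covers coordinates in $(0,1)$. You would still need one-sided conditions at the endpoint $0$ (Case 2) or $1$ (Case 1) to rule out mixtures such as $X_i=0$ with $X_k\in(0,Z^*]$. Strict convexity on the closed interval handles this for free.
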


\begin{proof}
Recall from \Cref{lem:higher-d-second-derivative} that the domain of $G$ is partitioned into regions of strict convexity and strict concavity. Specifically, if $p/q \in (0, 1) \cup (2, \infty)$, then $G$ is strictly convex on $[0, Z^*]$ and strictly concave on $[Z^*, 1]$. Conversely, if $p/q \in (1, 2)$, $G$ is strictly concave on $[0, Z^*]$ and strictly convex on $[Z^*, 1]$. Below, we reason about the case where $p/q \in (0, 1) \cup (2, \infty)$; the argument for $p/q \in (1, 2)$ follows analogously.

Suppose for the sake of contradiction that there exist distinct coordinates $i, j \in [n]$ such that $0 \leq X_i < X_j \leq Z^*$. By strict convexity, we have $G(X_i) + G(X_j) >  G(X_i + \epsilon) + G(X_j - \epsilon)$ for sufficiently small $\epsilon > 0$. Since this perturbation preserves the sum of the variables but decreases the objective value, the solution $X_1, \dots, X_n$ cannot be optimal, which is a contradiction. It follows that there exists $a \in [0, Z^*]$ such that for every $i \in [n]$, if $X_i \in [0, Z^*]$ then $X_i = a$.

Now suppose for the sake of contradiction that there exist distinct coordinates $i, j \in [n]$ such that $Z^* \leq X_i < X_j < 1$. By strict concavity, we have $G(X_i) + G(X_j) >  G(X_i - \epsilon) + G(X_j + \epsilon)$ for sufficiently small $\epsilon > 0$. Since this perturbation preserves the sum of the variables but decreases the objective value, the solution $X_1, \dots, X_n$ cannot be optimal, which is a contradiction. It follows that there can be at most one coordinate $i$ such that $X_i \in (Z^*, 1)$.

Observe that any remaining coordinates must therefore take the value 1, completing the proof.
\end{proof}

In the following sections, we use the above claim to solve the relaxed program.

\subsubsection{The Case of \texorpdfstring{$p/q \in (1, 2)$}{p/q in (1, 2)}}
Our analysis of this setting consists of two main steps. First, we use \Cref{lem:higher-d-optimal-solutions} to further simplify our optimization problem given in \eqref{eq:relaxed-program-higher-d}. Then we will solve the simplified optimization problem to find the appropriate value of $\lambda$.

\paragraph{Simplifying the Program in \eqref{eq:relaxed-program-higher-d}.}

Note that \Cref{lem:higher-d-optimal-solutions} significantly reduces the number of parameters in our optimization problem to just three, namely the values $a, b$ and $|\{ i \in [n] : X_i = a \} |$.
Indeed, as noted by \citet{Gravin:2025aa}, we can ignore the effect of this single point by taking $n\to\infty$: duplicating the underlying profile preserves the median constraints and scales the objective, while the contribution of one exceptional point is $O(1)$ since $G$ is continuous on $[0,1]$.
We therefore reduce the complexity of the program to a single parameter, namely the quantity $|\{ i \in [n] : X_i = b \}|$. Since we have $X_i \in \{0, b\}$ after this reduction, it follows that
\begin{equation*}
|\{ i \in [n] : X_i = 0 \}| = n - |\{ i \in [n] : X_i = b \}| .
\end{equation*}
Having reduced the optimization to a single parameter, we will minimize a simpler objective. Let $t =|\{ i \in [n] : X_i = b \}|$. Since $n/2 = \sum_{i\in [n]} X_i = tb$, it follows that
\[
|\{ i \in [n] : X_i = b \}| = \frac{n}{2b}, \quad |\{ i \in [n] : X_i = 0 \}| = \frac{n(2b - 1)}{2b}.
\]
Let $Z^*$ be the inflection point guaranteed by \Cref{lem:higher-d-second-derivative}.
The optimization problem has now been reduced to finding $b \in [Z^*, 1]$ such that
\begin{align}
&\min_{b\in [Z^*, 1]} \left [  \frac{n}{2b} \cdot G(b) + \frac{n(2b - 1)}{2b} \cdot G(0) \right ] = \min_{b\in [Z^*, 1]} \left [ \frac{n}{2b} \left ( G(b) + (2b - 1) \cdot G(0) \right ) \right ]
\geq 0
\end{align}
\begin{subequations}
    We recall from \eqref{eq:G-and-C-star-def} that the function $G$ is given by
\begin{equation}
    G(Z) = \delta_1 ( 1 - Z)^r - \delta_2 Z^r,
\end{equation}
where $r = p/q$ and $\delta_1, \delta_2 > 0$ are defined as follows:
\begin{equation}
    \delta_1 = 1, \quad \delta_2 = \begin{cases} \lambda^p & \text{if } p = 1, \\ \frac{\lambda^p}{\left (1 - \lambda^{\frac{p}{p-1}} \right)^{p-1}} & \text{if } p > 1. \end{cases}
\end{equation}
\end{subequations}
Since we assume that $q > 1$, and we also have $p/q \in (1, 2)$, only the second case of $\delta_2$ is relevant.
We want to find $\lambda \in (0, 1)$ such that the minimum of the function $u: [Z^*, 1] \to \mathbb R$ given by
\begin{align*}
u(b)
&= G(b) + (2b - 1) \cdot G(0)
\end{align*}
is zero.
Let $b^* \in \arg \min_{b \in [ Z^*, 1]} u(b)$ be a minimizer. In the following lemma, we show that if $u$ is equal to zero at $b^*$, then its derivative at $b^*$ is also zero. The proof of this lemma, which follows from standard calculus arguments, is deferred to \Cref{app:u-b-minimization}.
\begin{restatable}{lemma}{lemubminimization}
\label{lem:u-b-minimization}
If $u(b^*) = 0$, then $u'(b^*) = 0$.
\end{restatable}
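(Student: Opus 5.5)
The plan is a first-order optimality argument for the minimizer $b^*$ of $u(b)=G(b)+(2b-1)G(0)$, where $G(Z)=(1-Z)^r-\delta_2 Z^r$ with $r=p/q\in(1,2)$ and $\delta_2>0$. Since $G(0)=\delta_1=1$, we have
\[
u'(b)=G'(b)+2=-r(1-b)^{r-1}-r\delta_2 b^{r-1}+2,
\qquad u''(b)=G''(b).
\]
If $b^*$ lies in the interior of the feasible interval, then $u'(b^*)=0$ by Fermat's rule and we are done. So the work is to show that, under the hypothesis $u(b^*)=0$, neither endpoint can be a minimizer unless the derivative vanishes there. I will take the feasible interval to be $[\max(Z^*,1/2),1]$, since $b\ge 1/2$ is needed for the count $n(2b-1)/(2b)$ to be nonnegative.

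\emph{Right endpoint $b^*=1$.} Here $u(1)=G(1)+G(0)=1-\delta_2$, so $u(1)=0$ forces $\delta_2=1$. Because $r>1$, the term $(1-b)^{r-1}$ vanishes at $b=1$, giving $u'(1)=2-r\delta_2=2-r>0$. A minimizer at a right endpoint must satisfy $u'(1)\le 0$, otherwise points just to the left have smaller value. This is a contradiction, so $b^*\neq 1$.

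\emph{Left endpoint, case $Z^*\le 1/2$.} The endpoint is $b^*=1/2$. Then $u(1/2)=G(1/2)=2^{-r}(1-\delta_2)$, so again $\delta_2=1$. This gives
\[
u'(1/2)=2-2r\,2^{1-r}=2-r\,2^{2-r}.
\]
The function $r\mapsto r2^{2-r}$ equals $2$ at $r=1$ and $r=2$ and is log-concave, so it is strictly greater than $2$ on $(1,2)$. Hence $u'(1/2)<0$, which contradicts minimality at a left endpoint.

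\emph{Left endpoint, case $Z^*>1/2$.} The endpoint is $b^*=Z^*$, and a minimizer requires $u'(Z^*)\ge 0$; suppose $u'(Z^*)>0$. On $(0,Z^*)$ we have $u''=G''<0$ by \Cref{lem:higher-d-second-derivative}, so $u'$ is decreasing there. Therefore $u'(b)>u'(Z^*)>0$ for every $b\in[1/2,Z^*)$, and so $u(b)<u(Z^*)=0$ on that range. Each such $b$ gives a feasible configuration of the relaxed program \eqref{eq:relaxed-program-higher-d} (values in $\{0,b\}$ with sum $n/2$). This contradicts the assumption that $\min u=0$ captures the minimum of the relaxed program, i.e.\ that the relaxed optimum is nonnegative at this $\lambda$. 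So $u'(Z^*)=0$.

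This last case is the delicate one. It relies on reading the statement as ``the relaxed minimum equals $0$'' rather than literally ``the minimum over $[Z^*,1]$ equals $0$.'' Equivalently, I would first re-justify that the reduction from \Cref{lem:higher-d-optimal-solutions} may be stated over all $b\in[1/2,1]$, with $[Z^*,1]$ merely containing the optimum. The calculus inequality $r2^{2-r}>2$ on $(1,2)$ is routine by comparison with its values at $r=1,2$ via log-concavity.
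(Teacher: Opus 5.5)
Your right-endpoint argument is correct and more direct than the paper's. From $u(1)=0$ you get $\delta_2=1$, and then $u'(1)=2-r>0$ rules out $b^*=1$; the paper instead notes $Z^*=1/2$, $u(1/2)=u(1)=0$, and uses strict convexity on $(1/2,1)$. Your computation $u'(1/2)=2-r\,2^{2-r}<0$ when $\delta_2=1$ is also correct.

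The gap is the left endpoint $b^*=Z^*$ of the interval $[Z^*,1]$, which is where the lemma is actually posed. You replace the domain by $[\max(Z^*,1/2),1]$, so you never examine $b^*=Z^*$ when $Z^*<1/2$. When $Z^*>1/2$, you step outside $[Z^*,1]$ and swap the hypothesis ``$\min_{[Z^*,1]}u=u(b^*)=0$'' for a different one: that the relaxed program is nonnegative on all of $[1/2,Z^*)$. As written, you prove a modified lemma, not the stated one, and you flag this reinterpretation yourself.

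The missing idea is to evaluate $u$ at the inflection point using $G''(Z^*)=0$, i.e.\ $\delta_2=\bigl((1-Z^*)/Z^*\bigr)^{r-2}$. Substituting gives
\[
u(Z^*) = (1-Z^*)^{r-2}\bigl[(1-Z^*)^2-(Z^*)^2\bigr]+2Z^*-1 = (1-2Z^*)\bigl[(1-Z^*)^{r-2}-1\bigr].
\]
Since $r-2<0$, the bracket is strictly positive, so $u(Z^*)$ has the sign of $1-2Z^*$. Three consequences follow.
\begin{itemize}
\item If $Z^*<1/2$, then $u(Z^*)>0$, so $b^*\neq Z^*$.
\item If $Z^*>1/2$, then $u(Z^*)<0$, so the lemma's hypothesis cannot hold at all. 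This is exactly the case you could only handle by reinterpretation.
\item If $Z^*=1/2$, then $\delta_2=1$, and your derivative computation at $1/2$ (or the paper's secant argument) gives the contradiction.
\end{itemize}
With this identity you need no change of domain. Your observation that only $b\ge 1/2$ corresponds to nonnegative counts is valid. However, minimizing over the larger interval $[Z^*,1]$ only weakens the resulting bound, so it does not invalidate the paper's formulation.
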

Granting the above lemma and recalling that $G(0) = \delta_1 = 1$, we now find $\lambda$ such that
\begin{subequations}\label{eq:u-system-r-in-1-2}
\begin{align}
  u(b) &= G(b) + 2b - 1  = 0, \label{eq:u-b} \\
  u'(b) &= G'(b) + 2 = 0. \label{eq:u-b-derivative}
\end{align}
\end{subequations}

\paragraph{Finding $\lambda$.}
We perform the following steps to determine a feasible parameter $\lambda^*$.
\begin{enumerate}[label=(\roman*)]
    \item Let $b^*$ be a solution to the system of equations given in \eqref{eq:u-system-r-in-1-2}.
    \item From \eqref{eq:u-b}, we have $G(b^*) = 1 - 2b^*$. By the definition of $G(Z)$ given in \eqref{eq:G-and-C-star-def}, we also have
    $G(b^*) = (1 - b^*)^r - \delta_2 (b^*)^r.$
    Using these, we isolate $\delta_2$ to get
    \begin{equation}\label{eq:delta2-r-in-1-2}
        \delta_2 = \frac{(1 - b^*)^r - G(b^*)}{(b^*)^r} = \frac{(1 - b^*)^r + 2b^* - 1}{(b^*)^r}
    \end{equation}
    Note that for $r \in (1, 2)$, the numerator $h(b) = (1-b)^r + 2b - 1$ has derivative $h'(b) = 2 - r(1-b)^{r-1} \geq 2 - r > 0$. Since $h(0) = 0$ and $h$ is strictly increasing, it follows that the numerator is positive for any $b^* > 0$. Therefore, we are guaranteed that $\delta_2 > 0$.
    \item We set $\lambda^*$ as follows:
    \begin{equation}\label{eq:lambda-star-r-in-1-2}
        \lambda^* =
            \delta_2^{\frac{1}{p}} \cdot  \left ( 1 + \delta_2^{\frac{1}{p-1}} \right )^{\frac{1- p}{p}}.
    \end{equation}
    We obtained the value of $\lambda^*$ by observing the following equivalences.
    \begin{align*}
    \delta_2 = \frac{\lambda^p}{\left (1 - \lambda^{\frac{p}{p-1}} \right)^{p-1}}
    &\iff \delta_2 \cdot \left (1 - \lambda^{\frac{p}{p-1}} \right)^{p-1} = \lambda^p \\
    &\iff \delta_2^{\frac{1}{p-1}} \cdot \left (1 - \lambda^{\frac{p}{p-1}} \right )= \lambda^{\frac{p}{p-1}} \\
    &\iff \delta_2^{\frac{1}{p-1}} \cdot  \left ( 1 + \delta_2^{\frac{1}{p-1}} \right )^{-1} =\lambda^{\frac{p}{p-1}} \\
    &\iff \delta_2^{\frac{1}{p}} \cdot  \left ( 1 + \delta_2^{\frac{1}{p-1}} \right )^{\frac{1-p}{p}} =\lambda
    \end{align*}
    \item Finally we set $\mathtt{UB}(p, q) = 1/ \lambda^*$. Note that this is well-defined since $1 + \delta_2^{\frac{1}{p-1}} > \delta_2^{\frac{1}{p-1}} > 0$.
\end{enumerate}

\begin{proposition}\label{prop:p-over-q-in-1-2}
If $p/q \in (1, 2)$ then $\mathtt{UB}(p, q) \leq 1 + 2^{1/p}$.
\end{proposition}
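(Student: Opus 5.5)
Rather than solving the system \eqref{eq:u-system-r-in-1-2} for $b^*$ exactly, I plan to exhibit one explicit feasible value of $\lambda$. Set $\lambda_0 = 1/(1+2^{1/p})$. The goal is to show that, for this choice, $u(b) = G(b) + (2b-1)G(0) \ge 0$ for every admissible $b$. Then the reduced program, and hence the relaxed program \eqref{eq:relaxed-program-higher-d} and the initial program \eqref{eq:full-program}, has a nonnegative minimum, which gives $\mathtt{UB}(p,q) \le 1/\lambda_0 = 1+2^{1/p}$. This also means \Cref{lem:u-b-minimization} is not needed. The one fact I use is that $\lambda^*$ from \eqref{eq:lambda-star-r-in-1-2} is the largest $\lambda$ for which nonnegativity holds, so $\lambda^*\ge\lambda_0$.

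\textbf{Step 1: lower-bound $u$ uniformly in $b$.} Note first that $b \ge 1/2$. This holds because the number of indices with $X_i = 0$ is $n(2b-1)/(2b)\ge 0$. Write $r=p/q\in(1,2)$ and $G(Z)=(1-Z)^r-\delta_2 Z^r$. For $b\in[1/2,1]$ we have $0\le 1-b\le 1$ and $r<2$, so $(1-b)^r \ge (1-b)^2$. Therefore
\[
u(b) \ge (1-b)^2 + 2b - 1 - \delta_2 b^r = b^2 - \delta_2 b^r = b^r\bigl(b^{2-r} - \delta_2\bigr).
\]
Since $b\ge 1/2$ and $0<2-r<1$, we get $b^{2-r}\ge (1/2)^{2-r} \ge 1/2$. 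Hence $u(b)\ge b^r(1/2-\delta_2)$, and it suffices to ensure $\delta_2\le 1/2$.

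\textbf{Step 2: translate $\delta_2\le 1/2$ into a condition on $\lambda$.} We have $\delta_2 = \lambda^p/(1-\lambda^{p/(p-1)})^{p-1}$. This is increasing in $\lambda\in(0,1)$, by the same chain of equivalences used to derive \eqref{eq:lambda-star-r-in-1-2}. Taking $(p-1)$-th roots, as in that derivation, $\delta_2\le 1/2$ is equivalent to
\[
\lambda^{p/(p-1)} \le \frac{s}{1+s},\quad s = 2^{-1/(p-1)},
\]
that is, $\lambda \le \bigl(1+2^{1/(p-1)}\bigr)^{-(p-1)/p}$.

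\textbf{Step 3: check $\lambda_0$.} I need $\bigl(1+2^{1/(p-1)}\bigr)^{(p-1)/p} \le 1+2^{1/p}$. Set $t=(p-1)/p\in(0,1)$. The map $x\mapsto x^t$ is subadditive on $\R_{\ge0}$, so
\[
(1+A)^t\le 1+A^t \quad\text{with } A=2^{1/(p-1)},\quad A^t=2^{1/p}.
\]
This is exactly the required inequality. Note that $p>1$ here because $q>1$ and $r>1$.

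I expect Step 1 to be the main obstacle. The crude replacement $(1-b)^r\ge(1-b)^2$ only works because $b\ge 1/2$ and $r<2$, and it is the only place where the assumption $p/q\in(1,2)$ enters. Everything after it is elementary monotonicity together with subadditivity of $x\mapsto x^t$. The limiting argument for $q=1$ is excluded from this case automatically, since $q=1$ would force $p\in(1,2)$ with $q>1$ assumed throughout.
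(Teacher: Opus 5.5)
Your proof is correct, and it takes a genuinely different route from the paper.

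The paper computes the exact threshold. It first uses \Cref{lem:u-b-minimization} to force the minimizer of $u$ to be an interior tangency point. It then solves $u(b^*)=u'(b^*)=0$ to obtain the identity \eqref{eq:b-star-identity}, and locates $b^*\in[1/2,1]$ through a concavity analysis of $h$ together with $2^{r-1}\le r$. Only after that does it bound $\delta_2=(2-r+2b^*(r-1))/(r(b^*)^{r-1})\ge 1/2$ at $\lambda^*$.

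You instead certify nonnegativity directly at $\delta_2=1/2$. The pointwise bound $(1-b)^r\ge(1-b)^2$ collapses $u(b)$ to at least $b^2-\delta_2 b^r$, and $b^{2-r}\ge 1/2$ on $[1/2,1]$ finishes the argument.

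\textbf{What your route buys.}
It is much shorter. It bypasses \Cref{lem:u-b-minimization} and any analysis of where the minimizer of $u$ sits. It also enforces the feasibility constraint $b\ge 1/2$ explicitly. The paper's domain $[Z^*,1]$ does not impose $b\ge 1/2$; the paper recovers it only a posteriori from the location of the root of $h$.

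\textbf{What the paper's route buys.}
It yields the exact threshold $1/\lambda^*$ of the relaxed program, not merely an upper bound on it. Since $(1/2)^{2-r}>1/2$ strictly, your own bound shows that this threshold is strictly below $1+2^{1/p}$.

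\textbf{Shared parts.}
Both arguments end with the same subadditivity step $(1+2^{1/(p-1)})^{(p-1)/p}\le 1+2^{1/p}$. Your ``one fact'' about $\lambda^*$ is just monotonicity: $\delta_2$ is increasing in $\lambda$, and $\partial u/\partial\delta_2=-b^r<0$, so $\min_b u$ is decreasing in $\lambda$. In any case, exhibiting $\lambda_0$ already bounds the approximation ratio directly.

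\textbf{Two small corrections to your commentary.}
First, the replacement $(1-b)^r\ge(1-b)^2$ holds for every $b\in[0,1]$ once $r\le 2$. The hypothesis $b\ge 1/2$ is used in the next step, $b^{2-r}\ge(1/2)^{2-r}\ge 1/2$, which needs both $r\le 2$ and $r\ge 1$. The assumption $r>1$ also enters through the form of $G$ (namely $\delta_1=1$ and $C^*=\lambda^{p/(p-1)}$) and through the reduction to $X_i\in\{0,b\}$.

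Second, $q=1$ is not ``excluded automatically'': $q=1$ with $p\in(1,2)$ is a legitimate instance of $p/q\in(1,2)$. It simply lies outside the section's standing assumption $q>1$. It is handled, exactly as in the paper, by the limiting argument in the proof of \Cref{thm:cm-upper-bound}.
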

\begin{proof}
By definition, we have
\begin{equation*}
\mathtt{UB}(p, q)  = \delta_2^{-\frac{1}{p}} \left ( 1 + \delta_2^{\frac{1}{p-1}} \right )^{\frac{p-1}{p}} = \left( 1 + \delta_2^{-\frac{1}{p-1}} \right)^{\frac{p-1}{p}}.
\end{equation*}
To upper bound $\mathtt{UB}(p, q)$, it suffices to find a lower bound for $\delta_2$.
Note that if $\delta_2 \geq 1/2$, then
\begin{equation*}
    \mathtt{UB}(p, q) \leq \left( 1 + 2^{\frac{1}{p-1}} \right)^{\frac{p-1}{p}} \leq  1 + 2^{\frac{1}{p}},
\end{equation*}
which implies our desired result. In the final inequality, we use the fact that $(A +B)^z \le A^z + B^z$ for $z \in (0, 1)$ and $A, B \geq 0$. The rest of the proof is devoted to establishing a lower bound on $\delta_2$.

\paragraph{Lower bounding $\delta_2$.} Let $r = p/q$.
From equation \eqref{eq:u-b}, (i.e., setting $u(b^*) = 0$), we have
\begin{equation*}
    \delta_2 = \frac{(1 - b^*)^r + 2b^* - 1}{(b^*)^r}.
\end{equation*}
This is precisely equation \eqref{eq:delta2-r-in-1-2}.
By additionally enforcing \eqref{eq:u-b-derivative} (i.e., $u'(b^*) = 0$), we can extract another expression for $\delta_2$. Since $G'(b^*) = -r  (1 - b^*)^{r-1} - r \delta_2 (b^*)^{r-1} $ and $G'(b^*) + 2 = 0$, we have
\begin{equation*}
    \delta_2 = \frac{2 - r(1-b^*)^{r-1}}{r (b^*)^{r-1}}.
\end{equation*}
By equating the two above expressions for $\delta_2$, we observe the following equivalences:
\begin{align*}
    \frac{(1 - b^*)^r + 2b^* - 1}{(b^*)^r} = \frac{2 - r(1-b^*)^{r-1}}{r (b^*)^{r-1}}
    &\iff r \left ( (1 - b^*)^r + 2b^* - 1 \right ) = b^* \left( 2 - r(1-b^*)^{r-1} \right) \\
    &\iff r (1 - b^*)^r + r (2b^* - 1) = 2b^* - r b^* (1-b^*)^{r-1} \\
    &\iff r(1 - b^*)^r + r b^*(1-b^*)^{r-1} = 2b^* - r(2b^* - 1) \\
    &\iff r(1 - b^*)^{r-1}(1 - b^* + b^*) = r - 2b^*(r - 1) \\
    &\iff (1 - b^*)^{r-1} = 1 - 2b^* \left( \frac{r-1}{r} \right).
\end{align*}
This yields the identity
\begin{equation}\label{eq:b-star-identity}
(1-b^*)^{r-1} = 1 - 2b^* \left(\frac{r-1}{r}\right).
\end{equation}
Consider the difference of the two sides of \eqref{eq:b-star-identity}, given by
\begin{equation*}
h(b) = (1-b)^{r-1} - \left (1 - 2b  \left ( \frac{r-1}{r} \right ) \right ).
\end{equation*}
Its derivative is
\begin{equation*}
h'(b) = (r-1) \left [ \frac{2}{r} - (1-b)^{r-2} \right ].
\end{equation*}
Because $r \in (1, 2)$, the exponent $r-2$ is negative. Thus, the map $b \mapsto (1-b)^{r-2}$ is strictly increasing on the domain $(0, 1)$. This means that $h'(b)$ is strictly decreasing on the domain $(0, 1)$.

Since $h'(b)$ is strictly decreasing on the open interval $(0, 1)$, $h(b)$ is strictly concave on $[0, 1]$. Since $h(0) = 0$ and $h'(0) = (r-1)(2/r - 1) > 0$, strict concavity dictates that $h(b)$ increases to a unique positive maximum before strictly decreasing, crossing zero exactly once on $(0, 1]$ at $b^*$.
Evaluating at $b = 1/2$ yields $h(1/2) = 2^{1-r} - 1/r$. Because $2^{r-1} \leq r$ for all $r \in [1, 2]$, we have $2^{1-r} \geq 1/r$, meaning $h(1/2) \geq 0$. Since $h(1) = \frac{r-2}{r} < 0$, the root $b^*$ must lie in the interval $[1/2, 1]$.

Finally, we are ready to bound $\delta_2$. By substituting the identity from \eqref{eq:b-star-identity} back into the second expression for $\delta_2$, we eliminate the exponential term $(1-b^*)^{r-1}$ to get
\begin{equation*}
\delta_2 = \frac{2 - r\left(1 - 2b^* \left (\frac{r-1}{r}\right)\right)}{r (b^*)^{r-1}} = \frac{2 - r + 2b^*(r-1)}{r (b^*)^{r-1}}.
\end{equation*}
Since $b^* \geq 1/2$, the numerator is bounded below by $2 - r + 2(1/2)(r - 1) = 1$. Since $b^* \leq 1$ and $r \in (1, 2)$, the denominator is bounded above by $2(1)^{r-1} = 2$. Thus, $\delta_2 \geq 1/2$ as desired.
\end{proof}

\subsubsection{The Case of \texorpdfstring{$p/q \in (2, \infty)$}{r in (2, infty)}}
We next consider the case of $p/q \in  (2, \infty)$. Our analysis continues in a similar fashion.

\paragraph{Simplifying the optimization problem.}
\Cref{lem:higher-d-optimal-solutions} significantly reduces the number of parameters in our optimization problem to just three, namely the values $a, b$ and $|\{ i \in [n] : X_i = b \} |$. We again ignore the effect of the single point with $X_i = b$. We thus reduce the optimization problem to a single parameter, namely the quantity $|\{ i \in [n] : X_i = a \}|$. Note that
\begin{equation*}
|\{ i \in [n] : X_i = 1 \}| = n - |\{ i \in [n] : X_i = a \}| .
\end{equation*}
Having reduced the optimization to a single parameter, we will minimize a simpler objective. Let $t =|\{ i \in [n] : X_i = a \}|$. Since $n/2 = \sum_{i\in [n]} X_i = ta + (n-t) = t(a - 1) + n$, it follows that
\[
|\{ i \in [n] : X_i = a \}| = \frac{n}{2(1-a)}, \quad |\{ i \in [n] : X_i = 1 \}| = \frac{n(1 - 2a)}{2(1-a)}.
\]
Note that since we require that $|\{ i \in [n] : X_i = 1 \}| \ge 0$, it must be the case that $a \le 1/2$.
Let $Z^*$ be the inflection point guaranteed by \Cref{lem:higher-d-second-derivative}.
The optimization problem has now been reduced to finding $a \in [0, \min(Z^*, 1/2)]$ such that
\begin{align*}
&\min_{a \in [0, \min(Z^*, 1/2)]} \left [  \frac{n}{2(1-a)} \cdot G(a) + \frac{n(1 - 2a)}{2(1-a)} \cdot G(1) \right ]  \\
&= \min_{a \in [0, \min(Z^*, 1/2)]} \left [  \frac{n}{2(1-a)} \left ( G(a) + (1 - 2a) \cdot G(1) \right ) \right ]
\geq 0
\end{align*}
\begin{subequations}
    We recall from \eqref{eq:G-and-C-star-def} that the function $G$ is given by
\begin{equation}
    G(Z) = \delta_1 ( 1 - Z)^r - \delta_2 Z^r,
\end{equation}
where $r = p/q$ and $\delta_1, \delta_2 > 0$ are defined as follows:
\begin{equation}
    \delta_1 = 1, \quad \delta_2 = \begin{cases} \lambda^p & \text{if } p = 1, \\ \frac{\lambda^p}{\left (1 - \lambda^{\frac{p}{p-1}} \right)^{p-1}} & \text{if } p > 1. \end{cases}
\end{equation}
\end{subequations}
Since we assume that $q > 1$, and we also have $p/q \in (2, \infty)$, only the second case of $\delta_2$ is relevant.
Now, our goal is to find $\lambda \in (0, 1)$ such that the minimum of the function $u: [0,Z^*] \to \mathbb R$ given by
\begin{align*}
u(a)
&=  G(a) + (1 - 2a) \cdot G(1)
\end{align*}
is zero.
Let $a^* \in \arg \min_{a \in [0, \min(Z^*, 1/2)]} u(a)$ be a minimizer. In the following lemma, we show that if $u$ is zero at $a^*$, then its derivative at $a^*$ is also zero.
We defer the proof of this lemma to the appendix (\Cref{app:u-a-minimization}).

\begin{restatable}{lemma}{lemuaminimizationrintwoinf}
\label{lem:u-a-minimization-r-in-2-inf}
If $u(a^*) = 0$, then $u'(a^*) = 0$ and $a^* < 1/2$.
\end{restatable}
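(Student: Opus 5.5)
The plan is to show that $a^*$ is an interior point of $[0,1/2]$ at which $u$ attains its minimum over all of $[0,1/2]$, not just over $[0,\min(Z^*,1/2)]$. Stationarity, $u'(a^*)=0$, then follows from Fermat's rule because $u$ is differentiable on $(0,1)$. Throughout I would use the explicit form
\[
u(a) = (1-a)^r - \delta_2 a^r - \delta_2(1-2a), \qquad r = p/q > 2,
\]
which comes from $G(1) = -\delta_2$, together with the boundary values
\[
u(0) = 1-\delta_2, \qquad u(1/2) = 2^{-r}(1-\delta_2),
\]
and the derivative $u'(a) = -r(1-a)^{r-1} - r\delta_2 a^{r-1} + 2\delta_2$. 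In particular $u'(0) = 2\delta_2 - r$, since $r>1$ kills the $a^{r-1}$ term at $0$.

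\textbf{Step 1: enlarge the domain to $[0,1/2]$.} For every $a\in[0,1/2]$, the configuration with $n/(2(1-a))$ coordinates equal to $a$ and the rest equal to $1$ is feasible for the relaxed program \eqref{eq:relaxed-program-higher-d}, up to the $O(1)$ integrality error that the paper already disregards by taking $n\to\infty$. Its normalized objective is $u(a)$. By \Cref{lem:higher-d-optimal-solutions}, the relaxed optimum is attained by a configuration of the reduced form with parameter in $[0,\min(Z^*,1/2)]$. Hence $u(a^*) \le u(a)$ for all $a\in[0,1/2]$, so $a^*$ is a global minimizer of $u$ on $[0,1/2]$. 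This removes the artificial endpoint at $Z^*$ when $Z^*<1/2$, which is the point where a naive argument would get stuck: on $[0,Z^*]$ the function $u$ is convex, so without this step its minimum could sit at $Z^*$ with $u'(Z^*)<0$. I regard this step as the main obstacle. Its justification rests on the limiting argument, so I would state carefully that the min over $[0,\min(Z^*,1/2)]$ is the relaxed optimum, and hence at most every other feasible value.

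\textbf{Step 2: exclude the endpoints.} Suppose $u(a^*)=0$.
\begin{itemize}
\item If $a^*=0$, then $u(0)=0$ forces $\delta_2=1$, so $u'(0)=2-r<0$. Then $u$ is negative just to the right of $0$, contradicting minimality.
\item If $a^*=1/2$, then $u(1/2)=0$ again forces $\delta_2=1$. Then $u(0)=0$ and $u'(0)<0$ give a point in $(0,1/2)$ with $u<0$, the same contradiction.
\end{itemize}
Hence $a^*\in(0,1/2)$. In particular $a^*<1/2$, and since $a^*$ is an interior minimizer of the differentiable function $u$, we get $u'(a^*)=0$, as claimed.
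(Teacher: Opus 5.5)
Your argument is correct in substance but takes a different route from the paper. The paper works entirely with the one-variable function $u$ and rules out each endpoint of $[0,\min(Z^*,1/2)]$ by a local computation:
\begin{itemize}
\item $a^*=0$ is excluded because $u(0)=0$ gives $u'(0)=\delta_1(2-r)<0$, which is the same as your first bullet;
\item $a^*=Z^*<1/2$ is excluded by the factorization $u(Z^*)=\delta_1(1-Z^*)^{r-2}(1-2Z^*)\bigl[1-(Z^*)^{2-r}\bigr]$, which cannot vanish;
\item $a^*=1/2$ is excluded via $u'(1/2)=\delta_1(2-r2^{2-r})>0$.
\end{itemize}
You instead remove the $Z^*$ endpoint globally by appealing to the relaxed program. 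You dispose of $a^*=1/2$ by noting $u(1/2)=2^{-r}u(0)$, which reduces it to the $a^*=0$ case. That is cleaner than the paper's derivative estimate at $1/2$.

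Two remarks on your Step 1. First, the inference as written is not right. The configuration with parameter $a$ has objective $\frac{n}{2(1-a)}u(a)$. So optimality of the relaxation compares the quantities $u(a)/(2(1-a))$, not $u(a)$, and ``hence $u(a^*)\le u(a)$'' does not follow in general. It does follow under the lemma's hypothesis, because the positive factor $1/(2(1-a))$ preserves signs. If $u\ge 0$ on $[0,\min(Z^*,1/2)]$, the limiting relaxed optimum is nonnegative, and therefore $u(a)\ge 0=u(a^*)$ for every $a\in[0,1/2]$. You should phrase the step this way.

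Second, the whole step is unnecessary, and so are the appeal to the structural claim and the $n\to\infty$ limit. Since $0$ lies in the domain, $u(a^*)=0$ gives $u(0)=1-\delta_2\ge 0$, i.e.\ $\delta_2\le 1$. Because $r>2$, this yields $Z^*=\bigl(1+\delta_2^{1/(r-2)}\bigr)^{-1}\ge 1/2$. So under the hypothesis the domain is already $[0,1/2]$, and the scenario you worried about (a minimum at $Z^*<1/2$ with $u'(Z^*)<0$) cannot occur. Your Step 2 alone then proves the lemma, more economically than the paper does.
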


Granting this lemma, it remains to find $\lambda$ such that
\begin{subequations}\label{eq:u-system-r-in-2-inf}
\begin{align}
 u(a) &= G(a) + (1 - 2a) \cdot G(1)  = 0, \label{eq:u-a-r-in-2-inf} \\
 u'(a) &= G'(a) - 2 \cdot G(1) = 0. \label{eq:u-a-derivative-r-in-2-inf}
\end{align}
\end{subequations}

\paragraph{Finding a feasible $\lambda$.}
We perform the following steps to determine a feasible parameter $\lambda^*$.
\begin{enumerate}[label=(\roman*)]
    \item Let $a^*$ be a solution to the system of equations given in \eqref{eq:u-system-r-in-2-inf}.
    \item By definition of $G$, we have $G(1) = -\delta_2$. Thus, by \eqref{eq:u-a-r-in-2-inf} we have $G(a^*) = (1 - 2a ^*)\delta_2.$
    Since $a^* < 1/2$, we divide by $1 - 2a^*$, and isolate $\delta_2$:
    \begin{align*}
        \delta_2 = \frac{G(a^*)}{1 - 2a^*} &= \frac{\delta_1(1 - a^*)^r - \delta_2 (a^*)^r}{1 - 2a^*}.
    \end{align*}
    Note that this simplifies to
    \begin{equation}\label{eq:delta2-r-in-2-inf}
    \delta_2 = \frac{\delta_1(1 - a^*)^r}{1 - 2a^* + (a^*)^r}.
    \end{equation}
    Because $1-2a^* > 0$ and $(a^*)^r \geq 0$, the denominator $1 - 2a^* + (a^*)^r$ is strictly positive, so this expression for $\delta_2$ is well-defined.

    We also note that the map $a \mapsto \delta_1(1-a)^r$ is strictly decreasing for $a \in [0, 1/2]$ and is therefore bounded below by $\delta_1(1/2)^r > 0$. Since we have already established that the denominator is strictly positive, it follows that $\delta_2 > 0$.

    \item We set $\lambda^*$ as follows:
    \begin{equation*}
        \lambda^* = \delta_2^{\frac{1}{p}} \cdot  \left ( 1 + \delta_2^{\frac{1}{p-1}} \right )^{\frac{1-p}{p}}
    \end{equation*}
    We obtained this value of $\lambda^*$ by observing the following equivalences.
    \begin{align*}
    \delta_2 = \frac{\lambda^p}{\left (1 - \lambda^{\frac{p}{p-1}} \right)^{p-1}}
    &\iff \delta_2 \cdot \left (1 - \lambda^{\frac{p}{p-1}} \right)^{p-1} = \lambda^p \\
    &\iff \delta_2^{\frac{1}{p-1}} \cdot \left (1 - \lambda^{\frac{p}{p-1}} \right )= \lambda^{\frac{p}{p-1}} \\
    &\iff \delta_2^{\frac{1}{p-1}} \cdot  \left ( 1 + \delta_2^{\frac{1}{p-1}} \right )^{-1} =\lambda^{\frac{p}{p-1}} \\
    &\iff \delta_2^{\frac{1}{p}} \cdot  \left ( 1 + \delta_2^{\frac{1}{p-1}} \right )^{\frac{1- p}{p}} =\lambda
    \end{align*}

    \item Finally we set $\mathtt{UB}(p, q) = 1/ \lambda^*$. Note that this is well-defined because $1 + \delta_2^{\frac{1}{p-1}} > \delta_2^{\frac{1}{p}} > 0$.
\end{enumerate}

\begin{proposition}\label{prop:p-over-q-in-2-inf}
If $p/q \in (2, \infty)$ then $\mathtt{UB}(p, q) \leq 1 + 2^{1/q}$.
\end{proposition}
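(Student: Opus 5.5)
The plan mirrors the proof of \Cref{prop:p-over-q-in-1-2}: rewrite $\mathtt{UB}(p,q)$ in terms of $\delta_2$, reduce the claim to a lower bound on $\delta_2$, and obtain that bound from the closed form \eqref{eq:delta2-r-in-2-inf}.

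\textbf{Step 1: rewrite $\mathtt{UB}$.} By definition of $\lambda^*$,
\[
\mathtt{UB}(p,q)=\delta_2^{-\frac1p}\left(1+\delta_2^{\frac{1}{p-1}}\right)^{\frac{p-1}{p}}=\left(1+\delta_2^{-\frac{1}{p-1}}\right)^{\frac{p-1}{p}} .
\]
Since $z=(p-1)/p\in(0,1)$, the subadditivity $(A+B)^z\le A^z+B^z$ gives
\[
\mathtt{UB}(p,q)\le 1+\delta_2^{-\frac1p}.
\]
So it suffices to show $\delta_2^{-1/p}\le 2^{1/q}$. With $r=p/q$, this is equivalent to $\delta_2\ge 2^{-p/q}=2^{-r}$.

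\textbf{Step 2: lower-bound $\delta_2$.} I would use the expression \eqref{eq:delta2-r-in-2-inf} with $\delta_1=1$:
\[
\delta_2=\frac{(1-a^*)^r}{1-2a^*+(a^*)^r}.
\]
Here $a^*\in[0,1/2)$. The inequality $a^*\ge0$ holds because $a^*$ lies in the feasible interval, and $a^*<1/2$ is given by \Cref{lem:u-a-minimization-r-in-2-inf}. Since the denominator is positive, the target $\delta_2\ge 2^{-r}$ is equivalent to
\[
(2-2a^*)^r\ \ge\ 1-2a^*+(a^*)^r .
\]

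\textbf{Step 3: verify the elementary inequality.}
\begin{itemize}
\item Left side: because $a^*\le 1/2$, we have $2-2a^*\ge 1$. With $r>0$ this gives $(2-2a^*)^r\ge 1$.
\item Right side: because $a^*\in[0,1)$ and $r>2>1$, we have $(a^*)^r\le a^*\le 2a^*$. Hence $1-2a^*+(a^*)^r\le 1$.
\end{itemize}
Combining the two bullets yields the inequality, so $\delta_2\ge 2^{-r}$. Step 1 then gives $\mathtt{UB}(p,q)\le 1+2^{1/q}$.

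The only nontrivial ingredient is the localization $a^*\in[0,1/2)$ together with the validity of the system \eqref{eq:u-system-r-in-2-inf}. Both come from \Cref{lem:u-a-minimization-r-in-2-inf}, which we are allowed to assume. Without that lemma, the main obstacle would be justifying that the critical minimizer stays strictly below $1/2$.

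Once $a^*<1/2$ is in hand, the argument is easier than in the case $p/q\in(1,2)$. There is no need to solve the derivative identity for $b^*$: the crude bounds $(2-2a)^r\ge1$ and $a^r\le 2a$ hold uniformly on $[0,1/2]$.
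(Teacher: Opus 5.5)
Your proof is correct. It reaches the key bound $\delta_2 \ge 2^{-r}$ by a different and shorter route than the paper.

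The paper uses both equations of the system \eqref{eq:u-system-r-in-2-inf}. It equates the expression for $\delta_2$ from $u(a^*)=0$ with the one from $u'(a^*)=0$, derives the identity \eqref{eq:a-star-identity} for $(a^*)^{r-1}$, and substitutes it back to get $\delta_2 = r(1-a^*)^{r-1}/\bigl(r-2a^*(r-1)\bigr)$. Bounding numerator and denominator then gives $\delta_2 > 2^{1-r}$.

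You use only the value equation \eqref{eq:delta2-r-in-2-inf} together with the crude estimates $(2-2a)^r \ge 1$ and $a^r \le 2a$ on $[0,1/2]$. The first-order condition $u'(a^*)=0$ is therefore never needed. You also do not need the strict inequality $a^*<1/2$ from \Cref{lem:u-a-minimization-r-in-2-inf}. Since $1-2a+a^r>0$ on all of $[0,1/2]$, it suffices that $u(a^*)=0$ and that $a^*$ lies in the feasible interval $[0,\min(Z^*,1/2)]$. So your argument depends less on the boundary-case analysis in that lemma.

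The paper's extra work buys a bound that is a factor of $2$ stronger, $\delta_2 > 2^{1-r}$. That would give the slightly sharper $\mathtt{UB}(p,q) \le 1 + 2^{1/q-1/p}$, although the paper does not use it.

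Your Step 1 applies subadditivity before plugging in the bound on $\delta_2$, while the paper does it in the other order; this is only a reordering of the same step.
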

\begin{proof}
By definition, we have
\begin{equation*}
\mathtt{UB}(p, q)  = \delta_2^{-\frac{1}{p}} \left ( 1 + \delta_2^{\frac{1}{p-1}} \right )^{\frac{p-1}{p}} = \left( 1 + \delta_2^{-\frac{1}{p-1}} \right)^{\frac{p-1}{p}}.
\end{equation*}
To upper bound $\mathtt{UB}(p, q)$, it suffices to find a lower bound for $\delta_2$.
Note that if $\delta_2 \geq 2^{-p/q}$, then
\begin{equation*}
    \mathtt{UB}(p, q) \leq \left( 1 + (2^{-p/q})^{-\frac{1}{p-1}} \right)^{\frac{p-1}{p}} = \left( 1 + 2^{\frac{p/q}{p-1}} \right)^{\frac{p-1}{p}} \leq  1 + 2^{\frac{p/q}{p}} = 1 + 2^{1/q},
\end{equation*}
which implies our desired result. In the final inequality, we use the fact that $(A + B)^z \le A^z + B^z$ for $z \in (0, 1)$ and $A, B \geq 0$. The rest of the proof is devoted to establishing a lower bound on $\delta_2$.

\paragraph{Lower bounding $\delta_2$.} Let $r = p/q$.
From equation \eqref{eq:u-a-r-in-2-inf}, (i.e., setting $u(a^*) = 0$), we isolated $\delta_2$ as
\begin{equation*}
    \delta_2 = \frac{(1 - a^*)^r}{1 - 2a^* + (a^*)^r}.
\end{equation*}
This is precisely equation \eqref{eq:delta2-r-in-2-inf} (recalling that $\delta_1 = 1$).
By additionally enforcing \eqref{eq:u-a-derivative-r-in-2-inf} (i.e., $u'(a^*) = 0$), we can extract another expression for $\delta_2$. Since $G'(a^*) = -r(1 - a^*)^{r-1} - r \delta_2 (a^*)^{r-1}$ and $G'(a^*) + 2\delta_2 = 0$, we have
\begin{equation*}
    0 = -r(1-a^*)^{r-1} - r\delta_2(a^*)^{r-1} + 2\delta_2 \implies \delta_2(2 - r(a^*)^{r-1}) = r(1-a^*)^{r-1}.
\end{equation*}
We quickly verify that $2 - r(a^*)^{r-1} \neq 0$. If the term were zero, the right-hand side would require $r(1-a^*)^{r-1} = 0$, which is impossible since $r > 0$ and $a^* < 1/2$. Thus, we can safely divide to obtain
\begin{equation*}
    \delta_2 = \frac{r(1-a^*)^{r-1}}{2 - r (a^*)^{r-1}}.
\end{equation*}

Now, by equating the two above expressions for $\delta_2$, we observe the following equivalences:
\begin{align*}
    \frac{(1 - a^*)^r}{1 - 2a^* + (a^*)^r} = \frac{r(1-a^*)^{r-1}}{2 - r (a^*)^{r-1}}
    &\iff (1 - a^*) \left ( 2 - r(a^*)^{r-1} \right ) = r \left( 1 - 2a^* + (a^*)^r \right) \\
    &\iff 2(1 - a^*) - r(a^*)^{r-1}(1-a^*) = r(1 - 2a^*) + r(a^*)^r \\
    &\iff 2(1 - a^*) - r(a^*)^{r-1} + r(a^*)^r = r(1 - 2a^*) + r(a^*)^r \\
    &\iff r(a^*)^{r-1} = 2(1 - a^*) - r(1 - 2a^*) \\
    &\iff r(a^*)^{r-1} = 2 - 2a^* - r + 2ra^* \\
    &\iff r(a^*)^{r-1} = 2 - r + 2a^*(r - 1).
\end{align*}
This yields the identity
\begin{equation}\label{eq:a-star-identity}
(a^*)^{r-1} = \frac{2-r}{r} + 2a^* \left( \frac{r-1}{r} \right).
\end{equation}

Finally, we are ready to bound $\delta_2$. By substituting the identity from \eqref{eq:a-star-identity} back into the second expression for $\delta_2$, we eliminate the term $r(a^*)^{r-1}$ in the denominator to get
\begin{equation*}
\delta_2 = \frac{r(1-a^*)^{r-1}}{2 - \left( 2 - r + 2a^*(r-1) \right)} = \frac{r (1-a^*)^{r-1}}{r - 2a^*(r-1)}.
\end{equation*}
By \Cref{lem:u-a-minimization-r-in-2-inf}, we know that $a^* \in [0, 1/2)$. Therefore, $1 - a^* > 1/2$, which means the numerator is strictly bounded below by $r(1/2)^{r-1} = r 2^{1-r}$. Additionally, since $a^* \geq 0$ and $r > 2$, the denominator is bounded above by $r$.
Thus, we conclude that
\begin{equation*}
\delta_2 > \frac{r 2^{1-r}}{r} = 2^{1-r} > 2^{-r}.
\end{equation*}
This yields $\delta_2 \geq 2^{-r}$ as desired.
\end{proof}
\clearpage

\subsubsection{The Case of \texorpdfstring{$p/q \in (0, 1)$}{r in (0,1) }}

Finally, we consider the case of $p/q \in (0, 1)$. Our analysis resembles that of previous sections, but we emphasize key distinctions as they arise.

\paragraph{Simplifying the optimization problem.}
\Cref{lem:higher-d-optimal-solutions} significantly reduces the number of parameters in our optimization problem to just three, namely the values $a, b$ and $|\{ i \in [n] : X_i = b \} |$.
By an argument similar to the one in the case of $p/q \in (2, \infty)$, our problem reduces to finding $a \in [0, \min(Z^*, 1/2)]$ (where $Z^*$ is the inflection point guaranteed by \Cref{lem:higher-d-second-derivative}) such that
\begin{align*}
&\min_{a \in [0, \min(Z^*, 1/2)]} \left [  \frac{n}{2(1-a)} \cdot G(a) + \frac{n(1 - 2a)}{2(1-a)} \cdot G(1) \right ]  \\
&= \min_{a \in [0, \min(Z^*, 1/2)]} \left [  \frac{n}{2(1-a)} \left ( G(a) + (1 - 2a) \cdot G(1) \right ) \right ]
\geq 0
\end{align*}
This time, the function $G$ as defined in \eqref{eq:G-and-C-star-def} is given by
\begin{subequations}
\begin{equation}
    G(Z) = \delta_1 ( 1 - Z)^r - \delta_2 Z^r,
\end{equation}
where $r = p/q$ and $\delta_1, \delta_2 > 0$ are defined as follows:
\begin{equation}\label{eq:delta-def-zero-one}
    \delta_1 = \left (1 - \gamma^{\frac{p}{1 - p/q}}\right )^{1 - p/q},
    \quad
    \delta_2 = \begin{cases}
        \lambda^p & \text{if } p = 1, \\
        \frac{\lambda^p}{\left (1 - \left (\frac{\lambda}{\gamma}\right )^{\frac{p}{p-1}} \right)^{p-1}} & \text{if } 1 < p < q
    \end{cases}
\end{equation}
\end{subequations}
Now, our goal is to find $\lambda \in (0, 1)$ and $\gamma \in (\lambda, 1)$ such that the minimum of the function $u: [0,Z^*] \to \mathbb R$ given by
\begin{align*}
u(a)
&=  G(a) + (1 - 2a) \cdot G(1)
\end{align*}
is zero.
Let $a^* \in \arg \min_{a \in [0, \min(Z^*, 1/2)]} u(a)$ be a minimizer. We can again show that if $u$ is zero at $a^*$, then its derivative at $a^*$ is also zero. A proof of this lemma is deferred to the appendix (\Cref{app:u-a-minimization}).

\begin{restatable}{lemma}{lemuaminimizationrinzeroone}
\label{lem:u-a-minimization-r-in-zero-one}
If $u(a^*) = 0$, then $u'(a^*) = 0$ and $a^* < 1/2$.
\end{restatable}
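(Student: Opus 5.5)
We need to prove: for r = p/q in (0,1), with $u(a) = G(a) + (1-2a)G(1)$ on $[0,\min(Z^*,1/2)]$, if the minimum value $u(a^*)=0$ then $u'(a^*)=0$ and $a^*<1/2$. The plan is to rule out the two endpoints as minimizers with value zero; once $a^*$ is interior, Fermat's rule gives $u'(a^*)=0$.

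\textbf{Step 1: the right endpoint $a=1/2$.} First I would show $a^*\neq 1/2$, which also settles the case where the interval ends at $1/2$. At $a=1/2$ we have $u(1/2)=G(1/2)=2^{-r}(\delta_1-\delta_2)$. The key is to compare with $a=0$: $u(0)=G(0)+G(1)=\delta_1-\delta_2$. These have the same sign, and if $u(1/2)=0$ then $u(0)=0$ as well, so $0$ is also a minimizer.

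The cleaner route is to work directly with the derivative near $1/2$: $u'(a)=G'(a)-2G(1)=-r\delta_1(1-a)^{r-1}-r\delta_2a^{r-1}+2\delta_2$. At $a=1/2$ with $\delta_1=\delta_2=\delta$, this becomes $u'(1/2)=\delta(2-2r\,2^{1-r})=2\delta(1-r2^{1-r})$. Since $r2^{1-r}<1$ for $r\in(0,1)$ (the function $r2^{1-r}$ is increasing on $(0,1)$ and equals $1$ at $r=1$), we get $u'(1/2)>0$. So $u$ is strictly smaller just to the left of $1/2$, contradicting the claim that $u(1/2)=0$ is the minimum. If instead $Z^*<1/2$, the right endpoint is $Z^*$, and I would argue as follows. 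On $[0,Z^*]$ the function $G$ is convex, so $u$ is convex there. If $a^*=Z^*$ with $u'(Z^*)<0$, I would extend the argument beyond $Z^*$ using the structure of \Cref{lem:higher-d-optimal-solutions}: a value $a$ in the concave region reduces back to a configuration with one exceptional point. I am unsure about this step; the fallback is to reuse the comparison $u(a)$ versus the chord, noting $u(Z^*)\ge u(0)$-type bounds.

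\textbf{Step 2: the left endpoint $a=0$.} Here $u'(a)\to-\infty$ as $a\to0^+$, because the term $-r\delta_2a^{r-1}$ blows up when $r<1$ and $\delta_2>0$. Hence $u$ strictly decreases immediately to the right of $0$, so $0$ cannot be a minimizer. This is easier than in the $r>2$ case.

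\textbf{Step 3: conclude.} With $a^*$ interior, $u$ is differentiable at $a^*$ (since $a^*\in(0,1)$), so $u'(a^*)=0$.

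The main obstacle is the endpoint $Z^*$ when $Z^*<1/2$. There I would check that $u'(Z^*)\ge 0$ fails to yield a contradiction only in cases that can be absorbed by the convexity of $u$ on $[0,Z^*]$ together with the sign computation $u'(1/2)>0$ and the concavity of $u$ on $[Z^*,1/2]$. A concave function on $[Z^*,1/2]$ attains its minimum at an endpoint, and we have already excluded $1/2$, so the minimum over the full range $[0,1/2]$ is interior to $[0,Z^*]$ or lies at $Z^*$ with a zero derivative.
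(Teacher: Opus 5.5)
Your endpoint arguments at $a=0$ and $a=1/2$ are sound and match the paper. Near $0$, the term $-r\delta_2 a^{r-1}$ forces $u'\to-\infty$. At $1/2$, the hypothesis $u(1/2)=0$ gives $\delta_1=\delta_2$, and then $u'(1/2)=2\delta_1(1-r2^{1-r})>0$. (Your first observation in Step 1 also works: $u(1/2)=0$ forces $u(0)=0$, and Step 2 then gives a contradiction.)

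The gap is the remaining endpoint $a^*=Z^*$ when $Z^*<1/2$. You leave this case unresolved, and the sketch you offer does not work.

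\begin{enumerate}
\item The computation $u'(1/2)>0$ relied on $\delta_1=\delta_2$. Since $Z^*=\bigl(1+(\delta_2/\delta_1)^{1/(r-2)}\bigr)^{-1}$, that equality holds exactly when $Z^*=1/2$. So when $Z^*<1/2$, the point $1/2$ has not been ``already excluded.''
\item Enlarging the domain to $[0,1/2]$ changes the minimization problem. In exactly the bad scenario $u(Z^*)=0$, $u'(Z^*)<0$, the function keeps decreasing through the concave region. Its minimum over $[0,1/2]$ then sits at $1/2$ with a negative value. That contradicts nothing in the lemma's hypothesis, which concerns only $[0,\min(Z^*,1/2)]$.
\item Appealing to the structure of the relaxed program is not available either, since the lemma is a purely one-variable statement about $u$.
\end{enumerate}

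The missing idea is that this case is vacuous, which you can see by evaluating $u$ directly at the inflection point. From $G''(Z^*)=0$ you get $\delta_2=\delta_1\bigl((1-Z^*)/Z^*\bigr)^{r-2}$. Substituting this into $u(Z^*)=\delta_1(1-Z^*)^r-\delta_2(Z^*)^r-(1-2Z^*)\delta_2$ and using $(1-Z^*)^2-(Z^*)^2=1-2Z^*$ gives
\[
u(Z^*)=\delta_1(1-Z^*)^{r-2}(1-2Z^*)\bigl(1-(Z^*)^{2-r}\bigr).
\]
For $r\in(0,1)$ and $Z^*\in(0,1/2)$, every factor is strictly positive; in particular $2-r>0$ gives $(Z^*)^{2-r}<1$. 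Hence $u(Z^*)>0$, and $Z^*$ cannot be a minimizer with value $0$. With this case closed, your Steps 2 and 3 complete the proof.
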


Granting this lemma, it remains to find $\lambda$ such that
\begin{subequations}\label{eq:u-system-r-in-zero-one}
\begin{align}
 u(a) &= G(a) + (1 - 2a) \cdot G(1)  = 0, \label{eq:u-a-r-in-zero-one} \\
 u'(a) &= G'(a) - 2 \cdot G(1) = 0. \label{eq:u-a-derivative-r-in-zero-one}
\end{align}
\end{subequations}

\paragraph{Finding $\lambda$ and $\gamma$.}
\begin{enumerate}[label=(\roman*)]
    \item Let $a^*$ be a solution to the system of equations given in \eqref{eq:u-system-r-in-zero-one}.
    \item By definition of $G$, we have $G(1) = -\delta_2$. Thus, by \eqref{eq:u-a-r-in-zero-one} we have $G(a^*) = (1 - 2a ^*)\delta_2.$
    Since $a^* < 1/2$, we divide by $1 - 2a^*$ and isolate $\delta_2$:
    \begin{align*}
        \delta_1(1 - a^*)^r - \delta_2 (a^*)^r = (1 - 2a^*) \delta_2.
    \end{align*}
    This implies that
    \begin{equation}\label{eq:delta2-r-in-zero-one}
    \delta_2 = \frac{\delta_1(1 - a^*)^r}{1 - 2a^* + (a^*)^r} .
    \end{equation}
    Note that this step is identical to the one taken in the case of $p/q \in (2, \infty)$. We remind the reader that $\delta_2 > 0$.
    \item We set the parameter $\gamma \in (0, 1)$ such that
    \begin{equation}\label{eq:gamma-def-zero-one}
        \gamma = \left( 1 + 2^{1/p} \right)^{-1/p}.
    \end{equation}
    This allows us to treat $\delta_1 = \left (1 - \gamma^{\frac{p}{1 - p/q}}\right )^{1 - p/q}$ as a fixed constant.
    \item Next, we extract a feasible $\lambda$ strictly in terms of $\delta_2$ and $\gamma$. Since we continue to assume $q>1$, the case $p=1$ lies in the regime $1\leq p<q$ and has $C^*=0$, so $\lambda^*=\delta_2$. For $1 < p < q$, we define $\lambda$ to be the unique positive solution to
    \begin{equation*}
        \delta_2 = \frac{\lambda^p}{\left (1 - \left (\frac{\lambda}{\gamma}\right )^{\frac{p}{p-1}} \right)^{p-1}}.
    \end{equation*}
    By isolating $\lambda$, we observe the following equivalences:
    \begin{align*}
    \delta_2 \cdot \left (1 - \lambda^{\frac{p}{p-1}} \gamma^{-\frac{p}{p-1}} \right)^{p-1} = \lambda^p
    &\iff \delta_2^{\frac{1}{p-1}} \cdot \left (1 - \lambda^{\frac{p}{p-1}} \gamma^{-\frac{p}{p-1}} \right )= \lambda^{\frac{p}{p-1}} \\
    &\iff \delta_2^{\frac{1}{p-1}} = \lambda^{\frac{p}{p-1}} \left( 1 + \delta_2^{\frac{1}{p-1}} \gamma^{-\frac{p}{p-1}} \right) \\
    &\iff \lambda^{\frac{p}{p-1}} = \delta_2^{\frac{1}{p-1}} \cdot \left( 1 + \delta_2^{\frac{1}{p-1}} \gamma^{-\frac{p}{p-1}} \right)^{-1} \\
    &\iff \lambda^* = \delta_2^{\frac{1}{p}} \cdot \left( 1 + \delta_2^{\frac{1}{p-1}} \gamma^{-\frac{p}{p-1}} \right)^{\frac{1-p}{p}}.
    \end{align*}
    \item Finally we set $\mathtt{UB}(p, q) = 1/ \lambda^*$. Note that this is well-defined because $\delta_2 > 0$.
\end{enumerate}

\begin{proposition}\label{prop:p-over-q-in-0-1}
If $p/q \in (0, 1)$ then $\mathtt{UB}(p, q) \leq 1 + 2^{1/p}$.
\end{proposition}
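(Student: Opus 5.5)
The plan follows the template of \Cref{prop:p-over-q-in-2-inf}, with two changes: the constant $\delta_1$ is now less than one, and the extra parameter $\gamma$ appears in $\lambda^*$.

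\textbf{Step 1 (closed form for $\mathtt{UB}$).} From the formula for $\lambda^*$ in step (iv),
\[
\mathtt{UB}(p,q)=\frac{1}{\lambda^*}=\delta_2^{-\frac1p}\Bigl(1+\delta_2^{\frac{1}{p-1}}\gamma^{-\frac{p}{p-1}}\Bigr)^{\frac{p-1}{p}}=\Bigl(\gamma^{-\frac{p}{p-1}}+\delta_2^{-\frac{1}{p-1}}\Bigr)^{\frac{p-1}{p}}.
\]
The exponent $(p-1)/p$ lies in $(0,1)$, so the inequality $(A+B)^z\le A^z+B^z$ gives
\[
\mathtt{UB}(p,q)\le \gamma^{-1}+\delta_2^{-1/p}.
\]
For $p=1$ the bound reads $\mathtt{UB}=1/\delta_2$ and is handled the same way. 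The whole task therefore reduces to lower bounding $\delta_2$ in terms of $\delta_1$, and then $\delta_1$ in terms of the chosen $\gamma$.

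\textbf{Step 2 (lower bound on $\delta_2$).} I will use the two characterizations from the system \eqref{eq:u-system-r-in-zero-one}.

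First, $u(a^*)=0$ gives \eqref{eq:delta2-r-in-zero-one}.

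Second, $u'(a^*)=0$ gives
\[
-r\delta_1(1-a^*)^{r-1}-r\delta_2(a^*)^{r-1}+2\delta_2=0,
\qquad\text{so}\qquad
\delta_2=\frac{r\delta_1(1-a^*)^{r-1}}{2-r(a^*)^{r-1}}.
\]
Equating the two expressions, $\delta_1$ cancels and we recover exactly the identity \eqref{eq:a-star-identity}. Substituting it back, as in \Cref{prop:p-over-q-in-2-inf}, yields
\[
\delta_2=\frac{r\,\delta_1(1-a^*)^{r-1}}{r+2a^*(1-r)}.
\]
Now $r\in(0,1)$ and, by \Cref{lem:u-a-minimization-r-in-zero-one}, $a^*\in[0,1/2)$. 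Hence $(1-a^*)^{r-1}\ge 1$ and the denominator is at most $1$, so $\delta_2\ge r\delta_1$.

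This bound degenerates as $r\to 0$. I will therefore also use the first expression directly. Its numerator satisfies $(1-a^*)^r\ge 2^{-r}$, and its denominator satisfies $1-2a^*+(a^*)^r\le 1+2^{-r}$. Together these give a bound of the form $\delta_2\ge c\,\delta_1$ with $c$ an absolute constant. I will take whichever of the two bounds is better, and will try to sharpen the second by optimizing $1-2a+a^r$ over $a$, using \eqref{eq:a-star-identity} to locate $a^*$.

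\textbf{Step 3 (lower bound on $\delta_1$ and assembling).} With $\gamma=(1+2^{1/p})^{-1/p}$, we have $\gamma^{p/(1-r)}\le\gamma^{p}=(1+2^{1/p})^{-1}$. Therefore
\[
\delta_1=\bigl(1-\gamma^{p/(1-r)}\bigr)^{1-r}\ge 1-\gamma^{p/(1-r)}\ge \frac{2^{1/p}}{1+2^{1/p}}.
\]
Plugging this into Step 1 gives
\[
\mathtt{UB}\le (1+2^{1/p})^{1/p}+(c\,\delta_1)^{-1/p},
\]
and the goal is to verify this is at most $1+2^{1/p}$.

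\textbf{Main obstacle.} The step I expect to be hardest is this final numerical comparison. The crude bound $\gamma^{-1}+\delta_2^{-1/p}$ may lose too much for small $p$, and the constant $c$ from Step 2 must be good enough to absorb the loss. If the direct comparison fails, I would first avoid splitting the sum in Step 1 and instead bound $\bigl(\gamma^{-p/(p-1)}+\delta_2^{-1/(p-1)}\bigr)^{(p-1)/p}$ as a single expression. I would then use the sharper bound $\delta_2\ge\delta_1 2^{1-r}$-type estimates that come from the identity \eqref{eq:a-star-identity}, mirroring the $(2,\infty)$ case. The specific choice of $\gamma$ suggests the intended bound balances the $\gamma^{-1}$ term against the $\delta_2$ term, and I would tune the inequalities accordingly.
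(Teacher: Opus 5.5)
There is a genuine gap, and it is quantitative: both your Step 1 and your Step 2 lose more than the statement allows, because $1+2^{1/p}$ is attained with equality in a limit.

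\textbf{Step 1.} Your closed form $\mathtt{UB}=\bigl(\gamma^{-\frac{p}{p-1}}+\delta_2^{-\frac{1}{p-1}}\bigr)^{\frac{p-1}{p}}$ is correct. With $\gamma=(1+2^{1/p})^{-1/p}$ you have $\gamma^{-\frac{p}{p-1}}=(1+2^{1/p})^{\frac{1}{p-1}}$. If $\delta_2=\frac{1}{2+2^{1-1/p}}$, then $\delta_2^{-\frac{1}{p-1}}=2^{1/p}(1+2^{1/p})^{\frac{1}{p-1}}$, and the unsplit expression equals exactly $1+2^{1/p}$. Moreover, $\delta_2$ really approaches this value. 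As $r\to0$ (i.e.\ $q\to\infty$), the identity for $(a^*)^{r-1}$ forces $a^*\approx r/2$. Hence $(a^*)^r\to1$ and $\delta_2\to\delta_1/2$, while $\delta_1\to1-\gamma^p$, so $\delta_2\to\frac{1}{2+2^{1-1/p}}$. Any strict loss therefore overshoots, in particular the subadditivity split $(A+B)^z\le A^z+B^z$. For $p=2$, in this limit your bound $\gamma^{-1}+\delta_2^{-1/p}$ is about $1.55+1.85\approx3.40$, against $1+\sqrt2\approx2.41$. So the split must be dropped, not just treated as a fallback. What you actually need is exactly $\delta_2\ge\frac{1}{2+2^{1-1/p}}$, which for $p=1$ reads $\delta_2\ge1/3$.

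\textbf{Step 2.} Your estimates do not reach this threshold, even without splitting. The lower bounds $r\delta_1$ and $\delta_1/(1+2^r)$ are both strictly below $\delta_1/2$ for every $r\in(0,1/2)$, since the larger of $r$ and $1/(1+2^r)$ is less than $1/2$ there. Combined with your $r$-independent bound $\delta_1\ge1-\gamma^p=\frac{2}{2+2^{1-1/p}}$ from Step 3, they fall strictly short.

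\textbf{Missing step.} The paper sharpens the derivative-based formula $\delta_2=r\delta_1(1-a^*)^{r-1}/\bigl(2-r(a^*)^{r-1}\bigr)$ as follows.
\begin{enumerate}
\item Instead of bounding $(1-a^*)^{r-1}\ge1$, apply Bernoulli's inequality: $(1-a^*)^{r-1}\ge\bigl(1-(1-r)a^*\bigr)^{-1}$.
\item Expand the product $\bigl(1-(1-r)a^*\bigr)\bigl(2-r(a^*)^{r-1}\bigr)$.
\item Use the identity in the form $2-r(a^*)^{r-1}-2a^*(1-r)=r$. The product collapses to $r+r(1-r)(a^*)^r$.
\item This gives $\delta_2\ge\delta_1/\bigl(1+(1-r)(a^*)^r\bigr)>\delta_1/2$, because $a^*<1/2$.
\end{enumerate}
Together with $\delta_1>1-\gamma^p$, this yields $\delta_2>\frac{1}{2+2^{1-1/p}}$. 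Plugging that into the unsplit closed form gives exactly $1+2^{1/p}$, and for $p=1$ it gives $\delta_2>1/3$.
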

\begin{proof}
By definition, we have
\begin{equation*}
    \mathtt{UB}(p, q) =
    \begin{cases}
        1 / \delta_2 & \text{if } p = 1, \\
        \delta_2^{-\frac{1}{p}} \left( 1 + \delta_2^{\frac{1}{p-1}} \gamma^{-\frac{p}{p-1}} \right)^{\frac{p-1}{p}} & \text{if } 1 < p < q.
    \end{cases}
\end{equation*}
Note that if $\delta_2 \geq \frac{1}{2 + 2^{1-1/p}}$, then for $1 < p < q$, we can upper bound $\mathtt{UB}(p, q)$ as follows:
\begin{align*}
    \mathtt{UB}(p, q) &\leq \left( \left( 2 + 2^{1-1/p} \right)^{\frac{1}{p-1}} + \gamma^{-\frac{p}{p-1}} \right)^{\frac{p-1}{p}}.
\end{align*}
Since we defined $\gamma = (1 + 2^{1/p})^{-1/p}$, we have $\gamma^{-\frac{p}{p-1}} = (1 + 2^{1/p})^{\frac{1}{p-1}}$. By substituting this in and factoring it out, the upper bound evaluates directly:
\begin{align*}
    \mathtt{UB}(p, q) &\leq \left( 2^{1/p}\left( 1 + 2^{1/p} \right)^{\frac{1}{p-1}} + \left( 1 + 2^{1/p} \right)^{\frac{1}{p-1}} \right)^{\frac{p-1}{p}} \\
    &= \left( \left( 1 + 2^{1/p} \right)^{\frac{1}{p-1}} \left( 2^{1/p} + 1 \right) \right)^{\frac{p-1}{p}} \\
    &= \left( \left( 1 + 2^{1/p} \right)^{\frac{p}{p-1}} \right)^{\frac{p-1}{p}} \\
    &= 1 + 2^{1/p}.
\end{align*}
Notice that the same claim holds for $p=1$ in this regime: since we continue to assume $q>1$, we have $r=1/q\in(0,1)$ and $\delta_2=\lambda$. Substituting $p=1$ yields $\delta_2 \geq 1/3$, which guarantees $\mathtt{UB}(1, q) = 1/\delta_2 \leq 3 = 1 + 2^{1/1}$. The rest of the proof is devoted to establishing this lower bound on $\delta_2$.

\paragraph{Lower bounding $\delta_2$.} Let $r = p/q$.
From equation \eqref{eq:u-a-r-in-zero-one}, we isolated $\delta_2$ as
\begin{equation*}
    \delta_2 = \frac{\delta_1(1 - a^*)^r}{1 - 2a^* + (a^*)^r}.
\end{equation*}
This is precisely equation \eqref{eq:delta2-r-in-zero-one}.
By also enforcing \eqref{eq:u-a-derivative-r-in-zero-one} (i.e., $u'(a^*) = 0$), we extract a second expression for $\delta_2$. Since $G'(a^*) = -r\delta_1(1 - a^*)^{r-1} - r \delta_2 (a^*)^{r-1}$ and $G'(a^*) + 2\delta_2 = 0$, we have
\begin{equation*}
    0 = -r\delta_1(1 - a^*)^{r-1} - r \delta_2 (a^*)^{r-1} + 2\delta_2 \implies \delta_2(2 - r(a^*)^{r-1}) = r\delta_1(1-a^*)^{r-1}.
\end{equation*}
As in the previous case, we verify that $2 - r(a^*)^{r-1} \neq 0$. If it were zero, this would require $r\delta_1(1-a^*)^{r-1} = 0$, which is impossible since $r, \delta_1 > 0$ and $a^* < 1/2$. Thus, we can divide to obtain
\begin{equation*}
    \delta_2 = \frac{r\delta_1(1-a^*)^{r-1}}{2 - r (a^*)^{r-1}}.
\end{equation*}
Equating the two expressions for $\delta_2$ produces the exact same identity for $(a^*)^{r-1}$ as before:
\begin{equation}\label{eq:a-star-identity-zero-one}
(a^*)^{r-1} = \frac{2-r}{r} + 2a^* \left( \frac{r-1}{r} \right).
\end{equation}
Notice that multiplying by $r$ and rearranging this identity yields $2 - r(a^*)^{r-1} - 2a^*(1-r) = r$. Since $r \in (0, 1)$, we know $1-r \in (0, 1)$. Applying Bernoulli's inequality to the concave term $(1-a^*)^{1-r}$, we have $(1-a^*)^{r-1} \geq \frac{1}{1 - a^*(1-r)}$. Applying this directly to our second expression for $\delta_2$ allows us to cleanly isolate and lower bound $\delta_2$:
\begin{align*}
    \delta_2 &= \frac{r\delta_1(1-a^*)^{r-1}}{2 - r(a^*)^{r-1}} \\
    &\geq \frac{r\delta_1}{\big(1 - a^*(1-r)\big)\big(2 - r(a^*)^{r-1}\big)} \tag{Bernoulli's inequality} \\
    &= \frac{r\delta_1}{2 - r(a^*)^{r-1} - 2a^*(1-r) + r(1-r)(a^*)^r}  \\
    &= \frac{r\delta_1}{r + r(1-r)(a^*)^r} \tag{by \eqref{eq:a-star-identity-zero-one}} \\
    &= \frac{\delta_1}{1 + (1-r)(a^*)^r}.
\end{align*}
We now bound the denominator of this final expression. Since $1-r < 1$ and $(a^*)^r < 1$ (by \Cref{lem:u-a-minimization-r-in-zero-one}), the term $(1-r)(a^*)^r$ is strictly less than $1$. Therefore, the denominator is strictly bounded above by $2$, yielding
$\delta_2 > \delta_1/2$.

It remains to lower bound the constant $\delta_1$.
Recall we defined $\delta_1 = \left (1 - \gamma^{\frac{p}{1 - r}}\right )^{1 - r}$ in \eqref{eq:delta-def-zero-one} and $\gamma^p = (1 + 2^{1/p})^{-1}$ in \eqref{eq:gamma-def-zero-one}. Since $r \in (0, 1)$, we have $\frac{p}{1-r} > p$. Because $\gamma \in (0, 1)$, this implies $\gamma^{\frac{p}{1-r}} < \gamma^p$, which in turn gives $1 - \gamma^{\frac{p}{1-r}} > 1 - \gamma^p$.
Furthermore, raising a value in $(0, 1)$ to a power $1-r \in (0, 1)$ strictly increases it. Thus, we have
\begin{equation*}
\delta_1 > (1 - \gamma^p)^{1-r} > 1 - \gamma^p.
\end{equation*}
Substituting $\gamma^p = \frac{1}{1 + 2^{1/p}}$, we further note that
\begin{equation*}
1 - \gamma^p = 1 - \frac{1}{1 + 2^{1/p}} = \frac{2^{1/p}}{1 + 2^{1/p}} = \frac{2}{2 + 2^{1-1/p}}.
\end{equation*}
Therefore, $\delta_1 > \frac{2}{2 + 2^{1-1/p}}$. Since $\delta_2 \geq \delta_1/2$, this implies our desired result.
\end{proof}

\subsubsection{Proof of \texorpdfstring{\Cref{thm:cm-upper-bound}}{General Result}} \label{app:high-dim-thm1}
Finally, we are ready to establish the general upper bound.

\begin{proof}[Proof of \Cref{thm:cm-upper-bound}]
We apply  \Cref{prop:p-over-q-is-1-or-2}, \Cref{prop:p-over-q-in-1-2}, \Cref{prop:p-over-q-in-0-1}, and \Cref{prop:p-over-q-in-2-inf} to
establish the upper bounds for all the cases involving $p$ and $q$ with $q>1$. The case $q = 1$ follows from a limiting argument.
To complete the proof, note that for all regimes, our upper bound on the approximation ratio is at most $1 + 2^{1/\min(p,q)}$. Since $p, q \geq 1$, it follows that
\begin{equation*}
\alpha_p(\M) \leq \mathtt{UB}(p, q) \leq 1 + 2^{1/ \min(p,q)} \leq 3,
\end{equation*}
as desired.
\end{proof}

%% file: conclusion.tex
\subsection{Discussion and Future Directions}
Our work illustrates that the coordinate-wise median mechanism is highly effective at providing an approximately optimal facility location while incentivizing agents to be truthful.
We provide its exact approximation ratio in $\ell_q(\mathbb R^2)$ under $p$-norm social costs, and show that its approximation ratio is at most $3$ in $\ell_q(\mathbb R^d)$ even for arbitrary monotone symmetric norm objectives.
There are several directions for future work and open questions. We outline them below.

Is the coordinate-wise median mechanism optimal among deterministic strategyproof mechanisms for $\ell_2(\mathbb R^d)$ under an arbitrary $p$-norm social cost? This is indeed the case when $d \in \{1, 2\}$ \citep{Feigenbaum:2017aa,Goel:2023aa}, but our understanding beyond this is considerably limited. While there are well-known characterizations for strategyproof mechanisms in $\mathbb R^2$ \citep{border1983straightforward, peters1993range}, we do not have similarly crisp characterizations for higher dimensions. New tools may need to be developed for the more general setting.

Another enticing direction is determining the extent to which randomized strategyproof-in-expectation mechanisms can achieve better approximation ratios than deterministic strategyproof mechanisms. \citet{barak2026facility} recently introduced a randomized strategyproof mechanism with a better approximation ratio than the CM mechanism in $\mathbb R^2$ for the utilitarian social cost.
For each $p$-norm social cost, does there exist a randomized strategyproof-in-expectation mechanism with a better approximation ratio than the CM mechanism in $\mathbb R^2$? Is there a single randomized strategyproof-in-expectation mechanism that \textit{simultaneously} achieves a better approximation for all $p$-norm social costs in $\mathbb R^2$? Does the same hold in $\mathbb R^d$ for $d \geq 3$?

\subsection{Acknowledgements}
JH thanks Xizhi Tan for an introduction to the strategic facility location problem and suggesting relevant literature.
JH also thanks Prasanna Ramakrishnan for helpful discussions and feedback on an earlier version of this manuscript. JH also thanks an anonymous reviewer for helpful comments on an earlier version of this manuscript.

\paragraph{AI disclosure}
JH initially used Gemini 3.1 Pro (High) in Antigravity to brainstorm and refine conceptual approaches, and to rephrase sections of this manuscript for clarity and flow. In particular,
\Cref{lem:min-sum-opp} and the limiting analysis used to establish \Cref{thm:cm-upper-bound} were significantly streamlined with the help of the model. 

For the second version of this work, JH also used GPT 5.5 (Extra High) in Codex to explore extensions of the main arguments. 
The model produced the main proof idea for extending the dimension-free $3$-approximation from $p$-norm social costs to monotone symmetric norm objectives (\Cref{thm:monotone-social-cost}); JH subsequently checked, edited, and incorporated the argument into the manuscript.

JH accepts full responsibility for the correctness of all material in this work.

%% file: higher-appendix.tex
In this section, we prove various lemmas regarding our defined functions $g$ and $G$.
As a reminder, the function $g$ is given by
\[
g(z) = \qnorm{z - f}^p - \lambda^p \qnorm{z}^p.
\]
The function $G$ is given by
\begin{subequations}
\begin{equation*}
G(Z) = \begin{cases}
    - Z^{p/q}  \lambda^{p} \left ( 1 - C^* \right )^{1-p} + (1 - Z)^{p/q}  & \text{if } p \geq q > 1 \\
    - Z^{p/q}  \lambda^{p} \left ( 1 - C^* \right )^{1-p} + (1 - Z)^{p/q} \left (1 - \gamma^{\frac{p}{1-p/q}} \right )^{1-p/q}   & \text{if } 1 \leq p < q
\end{cases}
\end{equation*}
where $\gamma \in (\lambda, 1)$ is a parameter that will later be specified and $C^*$ is given by
\begin{equation*}
C^* \coloneqq C^*(p, q, \gamma) = \begin{cases}
0 & \text{if } p = 1, \\
\lambda^\frac{p}{p-1} & \text{if } p \geq q > 1,  \\
\left ( \lambda / \gamma \right )^ \frac{p}{p-1} & \text{if } 1 < p < q.
\end{cases}
\end{equation*}
\end{subequations}
We conveniently write $G(Z) = \delta_1 (1-Z)^r - \delta_2 Z^r$
for some constants $\delta_1, \delta_2$ and $r = p/q > 0$.

\begin{claim}
$\delta_1 \in (0, 1]$ and $\delta_2 > 0$ for all $p \geq 1$, $q > 1$, $\lambda \in (0, 1)$ and $\gamma \in (\lambda, 1)$.
\end{claim}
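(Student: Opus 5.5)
We read off $\delta_1$ and $\delta_2$ directly from the two branches of the definition of $G$ and check each quantity's sign and size, splitting by the regime of $p,q$ and, inside each, by the case in the definition of $C^*$.

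\emph{The constant $\delta_1$.} In the branch $p \ge q > 1$ we have $\delta_1 = 1$, which lies in $(0,1]$. In the branch $1 \le p < q$ we have $\delta_1 = (1-\gamma^{p/(1-r)})^{1-r}$ with $r = p/q \in (0,1)$, so both the exponent $p/(1-r)$ and the outer exponent $1-r$ are positive. Since $\gamma \in (\lambda,1) \subset (0,1)$, it follows that $\gamma^{p/(1-r)} \in (0,1)$, and hence $1-\gamma^{p/(1-r)} \in (0,1)$. Raising a number in $(0,1)$ to a positive power keeps it in $(0,1)$, so $\delta_1 \in (0,1) \subseteq (0,1]$.

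\emph{The constant $\delta_2$.} Here $\delta_2 = \lambda^p (1-C^*)^{1-p}$. Because $\lambda^p > 0$, it suffices to show $1 - C^* > 0$, since then $(1-C^*)^{1-p}$ is a well-defined positive real for any exponent.
\begin{itemize}
\item If $p = 1$, then $C^* = 0$ and $\delta_2 = \lambda > 0$. This case also covers $p = 1 < q$; note that $p = 1$ cannot occur with $p \ge q > 1$.
\item If $p \ge q > 1$, then $p > 1$ and $C^* = \lambda^{p/(p-1)}$. The exponent $p/(p-1)$ is positive and $\lambda \in (0,1)$, so $C^* \in (0,1)$.
\item If $1 < p < q$, then $C^* = (\lambda/\gamma)^{p/(p-1)}$. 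The hypothesis $\gamma \in (\lambda,1)$ gives $\lambda/\gamma \in (0,1)$, so again $C^* \in (0,1)$.
\end{itemize}
In each case $1 - C^* \in (0,1]$, and therefore $\delta_2 > 0$.

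The only point needing care is the case $1<p<q$. There the constraint $\gamma > \lambda$ is exactly what keeps $1-C^*$ positive, so that the negative power $(1-C^*)^{1-p}$ is finite. The rest is routine bookkeeping of signs of exponents.
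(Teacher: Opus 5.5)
Your proof is correct and follows essentially the same argument as the paper. You split into the regimes $p \ge q > 1$ (where $\delta_1 = 1$) and $1 \le p < q$ (where $\delta_1 = (1-\gamma^{p/(1-r)})^{1-r} \in (0,1)$), and you get $\delta_2 = \lambda^p (1-C^*)^{1-p} > 0$ from $\lambda > 0$ and $C^* \in [0,1)$; the only difference is that you verify $C^* \in [0,1)$ case by case, which the paper asserts in one line.
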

\begin{proof}
If $p \geq q > 1$, we clearly have $\delta_1 = 1$.
If $1 \leq p < q$, we have $r = p/q \in (0, 1)$ and $\gamma \in (\lambda, 1)$, which implies that $\gamma^{\frac{p}{1-r}} \in (0, 1)$. Thus, $\delta_1 = (1 - \gamma^{\frac{p}{1-r}})^{1-r}$ is strictly between $0$ and $1$.
In both regimes, since $\lambda > 0$ and $C^* \in [0, 1)$, it follows that $\delta_2 = \lambda^p (1 - C^*)^{1-p} > 0$.
\end{proof}

\subsection{Proof of \texorpdfstring{\Cref{lem:g-lower-bound}}{Lower Bound for g}}\label{app:higher-d-g-lowerbound}
For a local optimum $x_i \in \mathbb R^d$, we defined the variable
\begin{equation*}
    X_i = \sum_{\substack{j \in [d], \\ x_{ij} \geq 0^+} } f_j^q.
\end{equation*}
We now restate and prove the lemma that lower bounds $g(x_i)$ in terms of $G(X_i)$.
\lemglowerbound*
\begin{proof}[Proof of \Cref{lem:g-lower-bound}]
Since $x_i$ is a local optimum, \Cref{lem:higher-d-optimal-z} implies that $x_i$ is either strictly nonpositive or strictly nonnegative. As such, we consider cases based on the entries of $x_i$.

\paragraph{Case 1: $x_i = 0$.}
If $x_i = 0$, then $g(x_i) = \qnorm{-f}^p = 1$. Since $X_i = \sum_{x_{ij} \geq 0^+} f_j^q \in [0, 1]$, and $\delta_2 > 0$, we have $G(X_i) = \delta_1 (1 - X_i)^r - \delta_2 X_i^r \leq \delta_1 (1 - X_i)^r \leq \delta_1$. Since we previously established that $\delta_1 \in (0, 1]$, we conclude that $g(x_i) = 1 \geq \delta_1 \geq G(X_i)$, satisfying the lemma.

\paragraph{Case 2: $x_{ij} < 0^-$ for some $j \in [d]$.}
By \Cref{lem:higher-d-optimal-z}, all the entries of $x_i$ are nonpositive. By \Cref{clm:wlog-points}, we can assume that there is no index $k$ such that $f_k > 0$ and $x_{ik} = 0^+$. Thus, we can derive that
\[
X_i = \sum_{\substack{j \in [d] \\ x_{ij} \geq 0^+}} f_j^q
= \sum_{\substack{j \in [d] \\ x_{ij} = 0^+}} f_j^q + \sum_{\substack{j \in [d] \\ x_{ij} > 0^+}} f_j^q = 0.
\]
Thus, we have $G(X_i) = \delta_1 (1 - 0)^r - \delta_2\cdot 0^r = \delta_1$. We now aim to lower bound $g(x_i)$. Note that
\begin{align*}
g(x_i)
&= \qnorm{x_i - f}^{p} - \lambda^p \qnorm{x_i}^p \\
&= \left ( \sum_{j \in [d]}  (|x_{ij}| + f_j )^q \right )^{p/q}  -\lambda^p \qnorm{x_i}^p \\
&\geq \left ( \sum_{j \in [d]}  |x_{ij}|^q + f_j^q \right )^{p/q} -\lambda^p \qnorm{x_i}^p \\
&= \left ( \qnorm{x_i}^q + 1 \right )^{p/q} -\lambda^p \qnorm{x_i}^p.
\end{align*}

We now make use of the following claim to complete our analysis.
\begin{claim}
Let $h(w) = (w + 1)^r - \lambda^p w^r$.
For every $w \geq 0$, we have $h(w) \geq G(0) = \delta_1$.
\end{claim}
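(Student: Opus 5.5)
The plan is to treat the two regimes of $G$ separately, since $G(0)=\delta_1$ equals $1$ when $p\ge q$ and equals $\bigl(1-\gamma^{\frac{p}{1-r}}\bigr)^{1-r}$ when $p<q$. Throughout, $r=p/q$. The variable $w$ stands for $\qnorm{x_i}^q$, so that $\qnorm{x_i}^p=w^r$ and the lower bound derived just above reads $g(x_i)\ge h(w)$.

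\emph{Case $p\ge q>1$ (so $r\ge1$, $\delta_1=1$).}
\begin{enumerate}
\item Use superadditivity of $t\mapsto t^r$ on $[0,\infty)$ for $r\ge1$, which gives $(w+1)^r\ge w^r+1$.
\item Conclude $h(w)\ge 1+(1-\lambda^p)w^r\ge 1=\delta_1$, because $\lambda\in(0,1)$.
\end{enumerate}
No calculus is needed in this case.

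\emph{Case $1\le p<q$ (so $r\in(0,1)$).} Here I minimize $h$ directly on $[0,\infty)$.
\begin{enumerate}
\item Boundary behaviour: $h(0)=1\ge\delta_1$, and $h(w)=w^r\bigl((1+1/w)^r-\lambda^p\bigr)\to\infty$ as $w\to\infty$.
\item Interior stationary points: the equation $h'(w)=r(w+1)^{r-1}-r\lambda^p w^{r-1}=0$ is equivalent to $\bigl(\tfrac{w}{w+1}\bigr)^{1-r}=\lambda^p$.
\item Set $s:=\tfrac{w}{w+1}$. Since $w\mapsto \tfrac{w}{w+1}$ is strictly increasing, there is a unique stationary point, determined by $s=\lambda^{\frac{p}{1-r}}\in(0,1)$, namely $w=\tfrac{s}{1-s}$.
\item Hence the global minimum of $h$ on $[0,\infty)$ is the smaller of $h(0)$ and $h$ at this point.
\end{enumerate}

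To evaluate $h$ at the stationary point, substitute $w+1=\tfrac{1}{1-s}$ and $w=\tfrac{s}{1-s}$:
\[
h=(1-s)^{-r}\bigl(1-\lambda^p s^r\bigr).
\]
The stationarity condition gives $\lambda^p=s^{1-r}$, so $\lambda^p s^r=s$. Therefore $h=(1-s)^{1-r}$ at the stationary point.

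Finally, compare with $\delta_1$. Since $\gamma>\lambda$ and $\tfrac{p}{1-r}>0$, we have $s=\lambda^{\frac{p}{1-r}}<\gamma^{\frac{p}{1-r}}$. As $1-r>0$, this gives $(1-s)^{1-r}>\bigl(1-\gamma^{\frac{p}{1-r}}\bigr)^{1-r}=\delta_1$. Together with $h(0)=1\ge\delta_1$, this yields $h(w)\ge\delta_1$ for all $w\ge0$.

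The case $p=1$ needs no separate treatment, because the computation never uses $C^*$. The only step needing care is the algebraic simplification at the critical point, specifically the identity $\lambda^p s^r=s$. That identity is exactly what makes the choice of $\delta_1$ in the definition of $G$ come out right.
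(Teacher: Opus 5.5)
Your proof is correct. The $r\ge1$ case is exactly the paper's superadditivity argument, but for $r\in(0,1)$ you take a different route.

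The paper never locates the minimizer of $h$. It applies the weighted Jensen (power-mean) inequality $(1+w)^r\ge(1-\eta)^{1-r}+\eta^{1-r}w^r$ with $\eta=\gamma^{\frac{p}{1-r}}$, so that $\eta^{1-r}=\gamma^p$ and $(1-\eta)^{1-r}=\delta_1$. This gives $h(w)\ge\delta_1+(\gamma^p-\lambda^p)w^r\ge\delta_1$ in one line, uniformly in $w$, with no need for coercivity, boundary checks, or uniqueness of critical points. It also reuses the same Jensen split that produces $\delta_1$ in the Case~3 bound of the lemma $g(x_i)\ge G(X_i)$, which is why $\delta_1$ has that particular form.

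Your first-order analysis costs a bit more bookkeeping, but it computes the exact value $\min_{w\ge0}h(w)=\bigl(1-\lambda^{\frac{p}{1-r}}\bigr)^{1-r}$. That explains the paper's constant: the best choice in the Jensen bound is $\eta=\lambda^{\frac{p}{1-r}}$, which makes it tight at your critical point $w=s/(1-s)$. Replacing $\lambda$ by the larger $\gamma$ only gives away slack. In particular, both arguments show that this claim needs only $\gamma\ge\lambda$; the strict inequality $\gamma>\lambda$ is used elsewhere, for example to keep $C^*<1$.
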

\begin{proof}
If $r \geq 1$ and $w \geq 0$, we can apply superadditivity:
\begin{align*}
h(w) &= (1+w)^r - \lambda^p w^r \\
&\geq 1 + w^r - \lambda^p w^r \\
&= 1 + (1 - \lambda^p) w^r \\
&\geq 1.
\end{align*}
In the final line we used the fact that $\lambda \in (0, 1)$ and $w \geq 0$.
Recall that in the $r \geq 1$ regime, we explicitly defined $\delta_1 = 1$. Thus, $h(w) \geq 1 = \delta_1$ as desired.

If $0 < r < 1$, we can lower bound $h(w)$ using Jensen's inequality. For any $\eta \in (0, 1)$, Jensen's inequality implies that $(1+w)^r \geq (1-\eta)^{1-r} + \eta^{1-r} w^r$. Thus, we can lower bound $h(w)$ as follows:
\begin{align*}
h(w) &= (1+w)^r - \lambda^p w^r \\
&\geq (1-\eta)^{1-r} + \eta^{1-r} w^r - \lambda^p w^r \\
&= (1-\eta)^{1-r} + (\eta^{1-r} - \lambda^p) w^r.
\end{align*}
Recall that in the $r \in (0, 1)$ regime, we explicitly defined $\delta_1 = (1 - \gamma^{\frac{p}{1-r}})^{1-r}$. We choose $\eta = \gamma^{\frac{p}{1-r}}$, which lies strictly in $(0, 1)$ since $\gamma \in (\lambda, 1)$. Substituting this choice of $\eta$ yields:
\begin{equation*}
h(w) \geq \delta_1 + (\gamma^p - \lambda^p) w^r.
\end{equation*}
Because we strictly enforce $\gamma > \lambda$, we have $\gamma^p - \lambda^p > 0$. Since $w \geq 0$, the term $(\gamma^p - \lambda^p) w^r$ is non-negative, meaning $h(w) \geq \delta_1$ as desired.
\end{proof}

By the claim, $g(x_i) \geq h(\qnorm{x_i}^q) \geq \delta_1 = G(X_i)$, satisfying the lemma.

\paragraph{Case 3: $x_{ij} > 0^+$ for some $j \in [d]$.}
By \Cref{lem:higher-d-optimal-z}, all the entries of $x_i$ are nonnegative, and for every $j \in [d]$ with $x_{ij} > 0^+$, we have $(1 -c(x_i)) \cdot x_{ij} = f_j$, where $c(x_i) \in [0, 1)$. By \Cref{clm:wlog-points}, we can also assume that whenever $f_{j} > 0$ and $x_{ij} \geq 0^+$, we have $x_{ij} \geq f_j$. Note that
\begin{align*}
\qnorm{x_i - f}^p
&= \left ( \sum_{\substack{j \in [d] \\ x_{ij} > 0^+}} \left| \frac{f_j}{1 - c(x_i)} - f_j \right|^q + \sum_{\substack{j \in [d] \\ x_{ij} = 0}} f_j^q  \right )^{p/q}
=  \left ( \left(\frac{c(x_i)}{1-c(x_i)}\right)^q X_i + (1 - X_i) \right )^{p/q}, \\
\qnorm{x_i}^p &= \left ( \sum_{\substack{j \in [d] \\ x_{ij} > 0^+}} \left | \frac{f_j}{1 - c(x_i)} \right|^q + \sum_{\substack{j \in [d] \\ x_{ij} = 0}} |x_{ij}| \right )^{p/q}
= \left (\frac{1}{1-c(x_i) } \right )^p X_i^{p/q}.
\end{align*}
Thus, we can rewrite $g(x_i)$ as follows:
\begin{equation}\label{eq:higher-d-g-term}
    g(x_i) = \left ( \left(\frac{c(x_i)}{1-c(x_i)}\right)^q X_i+ (1 - X_i) \right )^{r} - \left (\frac{\lambda}{1 - c(x_i)} \right )^p X_i^{r}.
\end{equation}

We note that \citet{Gravin:2025aa} are able to algebraically isolate the term $c(x_i)/ (1 - c(x_i))$ in \eqref{eq:higher-d-g-term} when $p= 1$ and derive an explicit closed form in terms of $\lambda$ and $q$. Outside of a few special settings of $p$ and $q$, we are unable to do the same.
However, we can establish lower bounds on $g(x_i)$ that are independent of the scalar $c(x_i)$ by making use of the following claim.

\begin{claim}\label{clm:C-optimization}
For every $c \in [0, 1)$ and every $\gamma \in (\lambda, 1]$, we have
\begin{equation}
\frac{\gamma^p c^{p} - \lambda^p }{(1 - c)^p} \geq -\frac{\lambda^p}{(1 - c^*)^{p-1}},
\end{equation}
where $c^* \in [0, 1)$ is defined as follows:
\begin{equation*}
    c^* = \begin{cases}
    0 & \text{if } p = 1, \\
    \left(\frac{\lambda}{\gamma}\right)^{\frac{p}{p-1}} & \text{if } p > 1.
    \end{cases}
\end{equation*}
\end{claim}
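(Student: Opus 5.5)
We need to show that $\phi(c) := (\gamma^p c^p - \lambda^p)/(1-c)^p$ on $[0,1)$ is bounded below by $-\lambda^p/(1-c^*)^{p-1}$. The plan is a one-variable minimization: locate the minimizer of $\phi$ by calculus and evaluate $\phi$ there.

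\textbf{The case $p=1$.} Here $c^*=0$ and the right-hand side is $-\lambda$. The claim becomes
\[
\frac{\gamma c-\lambda}{1-c}\ge -\lambda,
\]
which is equivalent to $\gamma c - \lambda \ge -\lambda + \lambda c$, i.e.\ $c(\gamma-\lambda)\ge 0$. This holds because $\gamma>\lambda$ and $c\ge 0$.

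\textbf{The case $p>1$, locating the minimum.} I would differentiate:
\[
\phi'(c) = \frac{p\gamma^p c^{p-1}(1-c) + p(\gamma^p c^p-\lambda^p)}{(1-c)^{p+1}} = \frac{p\,(\gamma^p c^{p-1}-\lambda^p)}{(1-c)^{p+1}}.
\]
The $\gamma^p c^p$ terms cancel, which is the key simplification. The numerator is strictly increasing in $c$, so $\phi'$ is negative for $c<c^*$ and positive for $c>c^*$. The sign change occurs where $c^{p-1}=(\lambda/\gamma)^p$, that is, at $c^*=(\lambda/\gamma)^{p/(p-1)}$. Since $\gamma>\lambda$, we have $c^*\in(0,1)$, so the global minimum of $\phi$ on $[0,1)$ is attained at $c^*$. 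As $c\to1$, $\phi\to+\infty$ because $\gamma^p>\lambda^p$, so the boundary does not interfere.

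\textbf{The case $p>1$, evaluating at $c^*$.} Using $\gamma^p (c^*)^{p-1}=\lambda^p$, we get $\gamma^p (c^*)^p = \lambda^p c^*$. Hence the numerator equals $\lambda^p(c^*-1)$, and
\[
\phi(c^*) = -\frac{\lambda^p}{(1-c^*)^{p-1}},
\]
which is exactly the claimed bound. The endpoint $\gamma=1$ is covered by the same computation.

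The only mildly delicate step is the algebraic cancellation in $\phi'$ and confirming that $c^*<1$, which relies on $\gamma>\lambda$. Nothing here is a genuine obstacle.
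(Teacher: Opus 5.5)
Your proposal is correct and, for $p>1$, matches the paper's proof: the same derivative with the $\gamma^p c^p$ cancellation, the same sign analysis around $c^*=(\lambda/\gamma)^{p/(p-1)}$, and the same evaluation trick $\gamma^p (c^*)^p=\lambda^p c^*$. For $p=1$ you clear the denominator to get $c(\gamma-\lambda)\ge 0$, whereas the paper notes that $h'(c)=(\gamma-\lambda)/(1-c)^2>0$; the two steps are equivalent, and yours is slightly more direct.
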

\begin{proof}
Define the function
\begin{equation*}
    h(c) = \frac{\gamma^p c^{p} - \lambda^p}{(1 - c)^p}.
\end{equation*}
We minimize this function over $c \in [0, 1)$.
Taking the first derivative with respect to $c$, we get
\begin{align*}
h'(c) &= \frac{ \left[ \gamma^p p c^{p-1} \right] (1 - c)^p - \left[ \gamma^p c^{p} - \lambda^p \right] \left[ -p ( 1 - c)^{p-1} \right] }{(1 - c)^{2p}} \\
      &= \frac{ p \left[ \gamma^p c^{p-1}(1 - c) + (\gamma^p c^p - \lambda^p) \right] }{(1 - c)^{p+1}} \\
      &= \frac{ p (\gamma^p c^{p-1} - \lambda^p) }{(1 - c)^{p+1}}.
\end{align*}
Our analysis proceeds in two cases based on the value of $p$.

\paragraph{Case 3a: $p = 1$.}
The first derivative simplifies to $h'(c) = \frac{\gamma - \lambda}{(1 - c)^2} > 0$ since $\gamma > \lambda$ and $c \in [0, 1)$. Since $h'(c) > 0$, the function $h$ is strictly increasing. Therefore when restricted to the range $[0, 1)$, the function $h$ is minimized at $c^* = 0$. In particular, we have $h(c) \geq h(c^*) = -\frac{\lambda^p}{(1 - c^*)^{p-1}}$ for every $c \in [0, 1)$, as desired.

\paragraph{Case 3b: $p > 1$.}
Since $h'(c) = 0$ when $\gamma^p c^{p-1} = \lambda^p$, the function $h$ has a critical point at $c^* = \left(\lambda / \gamma\right)^{\frac{p}{p - 1}}$. Note that $c^* \in [0, 1)$ since $\gamma > \lambda > 0$.
To verify that this critical point is a global minimum, we observe the sign of $h'(c)$ around $c^*$.
For all $c \in [0, 1)$, the denominator $(1 - c)^{p+1}$ is positive. In the numerator, since $p > 1$, the term $\gamma^p c^{p-1}$ is strictly increasing with respect to $c$. Therefore, we deduce that:
\begin{itemize}
    \item For $c < c^*$, we have $\gamma^p c^{p-1} < \lambda^p$, making $h'(c) < 0$.
    \item For $c > c^*$, we have $\gamma^p c^{p-1} > \lambda^p$, making $h'(c) > 0$.
\end{itemize}
Consequently, $h'(c)$ is strictly negative on $[0, c^*)$ and strictly positive on $(c^*, 1)$. By the first derivative test, $h(c)$ strictly decreases up to $c^*$ and strictly increases thereafter, establishing $c^*$ as the unique global minimum of $h(c)$ on $[0, 1)$.

To find this minimum value, we evaluate $h(c^*)$. Rather than substituting $c^*$ directly, we use the critical point condition $\gamma^p (c^*)^{p-1} = \lambda^p$. Multiplying both sides by $c^*$ yields $\gamma^p (c^*)^p = \lambda^p c^*$.
Substituting this into the numerator of $h(c^*)$ gives
\begin{equation*}
\gamma^p (c^*)^p - \lambda^p = \lambda^p c^* - \lambda^p = -\lambda^p(1 - c^*).
\end{equation*}
We can now simplify the evaluation of $h(c^*)$ as follows:
\begin{equation*}
    h(c^*) = \frac{-\lambda^p(1 - c^*)}{(1 - c^*)^p}
           = -\frac{\lambda^p}{(1 - c^*)^{p-1}}.
\end{equation*}
Since $h(c) \geq h(c^*)$ for every $c \in [0, 1)$, this completes the proof.
\end{proof}

Using this claim, we establish a lower bound on $g(x_i)$ based on the two regimes of $r$.

\begin{claim}
\label{clm:higher-d-g-lowerbound}
The following statements hold.
\begin{enumerate}
    \item If $r \geq 1$, then
    \begin{equation}
        g(x_i) \geq (1-X_i)^r - \frac{\lambda^p}{(1 - C^*)^{p-1}} X_i^r.
    \end{equation}
    \item For any $\gamma \in (\lambda, 1)$. If $0 < r < 1$, then
    \begin{equation}
        g(x_i) \geq (1 - \gamma^{\frac{p}{1-r}})^{1-r} (1-X_i)^r - \frac{\lambda^p}{(1 - C^*)^{p-1}} X_i^r.
    \end{equation}
\end{enumerate}
\end{claim}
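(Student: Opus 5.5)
We start from the representation \eqref{eq:higher-d-g-term}. Write $c=c(x_i)\in[0,1)$, $X=X_i$, and $s=c/(1-c)$, so that
\[
g(x_i)=\bigl(s^qX+(1-X)\bigr)^r-\frac{\lambda^p}{(1-c)^p}X^r .
\]
The plan is to lower bound the first term by a sum of a term in $(1-X)^r$ and a term $\gamma^p s^p X^r=\gamma^p c^pX^r/(1-c)^p$. Once that is done, the pieces involving $X^r$ combine into $X^r\cdot\frac{\gamma^pc^p-\lambda^p}{(1-c)^p}$. \Cref{clm:C-optimization} then bounds this below by $-\frac{\lambda^p}{(1-C^*)^{p-1}}X^r$, using $c^*=C^*$ in the appropriate regime. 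This last step uses $X^r\ge0$.

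For $r\ge1$ we take $\gamma=1$. Superadditivity of $t\mapsto t^r$ on $[0,\infty)$ (since $t^r$ is convex and vanishes at $0$) gives
\[
\bigl(s^qX+(1-X)\bigr)^r\ge s^{qr}X^r+(1-X)^r=s^pX^r+(1-X)^r .
\]
\Cref{clm:C-optimization} is stated for $\gamma\in(\lambda,1]$, so $\gamma=1$ is allowed, and then $c^*=\lambda^{p/(p-1)}=C^*$ in the regime $p\ge q>1$. This yields statement 1 immediately. We should check that $p=1$ cannot occur here, since $p\ge q>1$; so the case $C^*=0$ only arises when $r<1$.

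For $0<r<1$, superadditivity fails, and we instead use the same Jensen/concavity trick as in Case 2. For $A,B\ge0$ and $\eta\in(0,1)$, concavity gives
\[
(A+B)^r\ge (1-\eta)^{1-r}B^r+\eta^{1-r}A^r .
\]
This is the weighted form: write $A+B=\eta(A/\eta)+(1-\eta)(B/(1-\eta))$ and apply concavity of $t^r$ together with $t^r$ being zero at zero; equivalently, it is Hölder's inequality in two terms. We take $B=1-X$, $A=s^qX$ and $\eta=\gamma^{p/(1-r)}\in(0,1)$. Then $\eta^{1-r}=\gamma^p$, and the $A$-term becomes $\gamma^p s^pX^r$. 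The $B$-term gives $(1-\gamma^{p/(1-r)})^{1-r}(1-X)^r$. Combining with \Cref{clm:C-optimization} for this $\gamma\in(\lambda,1)$ gives statement 2, with $c^*=C^*$ in both subcases $p=1$ and $1<p<q$.

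The main obstacle is verifying the two-term concavity inequality cleanly, including the edge cases $X\in\{0,1\}$ and $c=0$ (where $s=0$). These should be harmless because every term is continuous and nonnegative there. We also need to confirm that the exponent bookkeeping $(s^qX)^r=s^pX^r$ and $\eta^{1-r}=\gamma^p$ is consistent. Everything else is direct substitution into the earlier claim.
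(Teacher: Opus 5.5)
Your proposal is correct and matches the paper's proof step for step: superadditivity with $\gamma=1$ when $r\ge 1$, the weighted Jensen bound with $\eta=\gamma^{p/(1-r)}$ (so that $\eta^{1-r}=\gamma^p$) when $0<r<1$, and then the $c$-optimization claim, using that its $c^*$ equals $C^*$ in each regime. You also note that $p=1$ cannot occur when $r\ge 1$ because $p\ge q>1$, a point the paper leaves implicit.
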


\begin{proof}
Note that for any $A, B \geq 0$, we have
\begin{equation*}
(A + B)^{r} \geq
\begin{cases}
A^r + B^r & \text{if }r \geq 1, \\
\eta \left (\frac{A}{\eta} \right)^r + \left (1- \eta \right) \left (\frac{B}{1 - \eta}\right)^r & \text{if } 0 < r <  1.
\end{cases}
\end{equation*}
The second case follows from Jensen's inequality for any $\eta \in (0, 1)$.
We apply the above inequalities to lower bound the first term of $g(x_i)$ as defined in \eqref{eq:higher-d-g-term}. Recall that
\[
    g(x_i) = \left ( \left(\frac{c(x_i)}{1-c(x_i)}\right)^q X_i+ (1 - X_i) \right )^{r} - \left (\frac{\lambda}{1 - c(x_i)} \right )^p X_i^{r}.
\]
If $r \geq 1$, we have
\begin{align*}
g(x_i)
&\geq \left ( \left(\frac{c(x_i)}{1-c(x_i)}\right)^{p} X_i^r+ (1 - X_i)^r \right ) - \left (\frac{\lambda}{1 - c(x_i)} \right )^p X_i^{r} \\
&=  (1 - X_i)^r +  \left ( \frac{ c(x_i)^{p} - \lambda^p }{(1 - c(x_i))^p}  \right ) X_i^r \\
&\geq (1 - X_i)^r - \frac{\lambda^p}{(1 - c^*)^{p-1}} X_i^r \tag{by \Cref{clm:C-optimization}}.
\end{align*}
Here we applied \Cref{clm:C-optimization} with $c = c(x_i)$ and $\gamma = 1$. Under these parameters, the local minimum is $c^* = \lambda^{\frac{p}{p-1}}$, which perfectly coincides with the global parameter $C^*$ in the $r \geq 1$ regime.

If $0 < r < 1$, let $\eta = \gamma^{\frac{p}{1-r}}$. We have
\begin{align*}
g(x_i)
&\geq \left ( \eta^{1-r} \left(\frac{c(x_i)}{1-c(x_i)}\right)^{p} X_i^r+ (1 - \eta)^{1-r}(1 - X_i)^r \right ) - \left (\frac{\lambda}{1 - c(x_i)} \right )^p X_i^{r} \\
&= ( 1- \eta)^{1-r} (1 - X_i)^r +  \left ( \frac{ \eta^{1-r} c(x_i)^{p} - \lambda^p }{(1 - c(x_i))^p}  \right ) X_i^r \\
&= ( 1- \gamma^{\frac{p}{1-r}})^{1-r} (1 - X_i)^r +  \left ( \frac{ \gamma^p c(x_i)^{p} - \lambda^p }{(1 - c(x_i))^p}  \right ) X_i^r \\
&\geq ( 1- \gamma^{\frac{p}{1-r}})^{1-r} (1 - X_i)^r - \frac{\lambda^p}{(1 - c^*)^{p-1}} X_i^r \tag{by \Cref{clm:C-optimization}}.
\end{align*}
Here we applied \Cref{clm:C-optimization} with $c = c(x_i)$ and parameter $\gamma$. Under these parameters, the local minimum is $c^* = (\lambda/\gamma)^{\frac{p}{p-1}}$, which perfectly coincides with the global parameter $C^*$ in the $0 < r < 1$ regime, yielding the desired bound.
\end{proof}

By evaluating the bounds established in \Cref{clm:higher-d-g-lowerbound}, we immediately see that they perfectly match $\delta_1(1-X_i)^r - \delta_2 X_i^r = G(X_i)$. Thus, we conclude that $g(x_i) \geq G(X_i)$, satisfying the lemma.
\end{proof}

\subsection{Proof of \texorpdfstring{\Cref{lem:higher-d-second-derivative}}{Second Derivative of g}}\label{app:higher-d-second-derivative}

The following proof establishes the second derivative of $G(Z)$ and identifies its inflection point, which governs the convexity and concavity of the function across different parameter regimes.

\begin{proof}[Proof of \Cref{lem:higher-d-second-derivative}]
We analyze the second derivative of $G:[0,1] \to \mathbb R$ as defined in \eqref{eq:G-and-C-star-def}. Recall that
\[
G(Z) = \delta_1 ( 1 - Z)^r - \delta_2 Z^r,
\]
for some positive constants $\delta_1, \delta_2$.
For $r \notin \{1, 2\}$, differentiating $G(Z)$ twice yields
\[
G''(Z) = r(r-1) T(Z),
\]
where we define the auxiliary function
\[
T(Z) = \delta_1 (1 - Z)^{r-2} - \delta_2 Z^{r-2}.
\]
Setting $G''(Z) = 0$ for $Z \in (0, 1)$, we require that $\delta_1 (1 - Z)^{r-2} = \delta_2 Z^{r-2}$, which in turn implies that
$\frac{1-Z}{Z} = \left(\frac{\delta_2}{\delta_1}\right)^{\frac{1}{r-2}}$.
Solving for $Z$ yields a unique root
\[
    Z^* = \left ( 1 + \left(\frac{\delta_2}{\delta_1}\right)^{\frac{1}{r-2}} \right)^{-1},
\]
which strictly lies in $(0,1)$ because $\delta_1, \delta_2 > 0$. By definition, $G''(Z^*) = 0$.

To determine the sign of $G''(Z)$ on the remaining intervals, we determine the sign of $r(r-1)$ and $T(Z)$.
Differentiating $T(Z)$ yields
\[
T'(Z) = -(r-2)\delta_1(1-Z)^{r-3} - (r-2)\delta_2 Z^{r-3} = (2-r) \left[ \delta_1(1-Z)^{r-3} + \delta_2 Z^{r-3} \right].
\]
Because $Z \in (0,1)$ and $\delta_1, \delta_2 > 0$, the term in the square brackets is strictly positive. Therefore, the sign of $T'(Z)$ is entirely determined by the sign of $(2-r)$. We now consider cases to complete our analysis.

\paragraph{Case 1: $r \in (1, 2)$.}
If $r \in (1, 2)$, we clearly have $r(r-1) > 0$. Note also that $2-r > 0$, which implies that $T'(Z) > 0$. Thus, $T(Z)$ is strictly increasing on $(0, 1)$ with exactly one root at $T(Z^*) = 0$. Since $T(Z) < 0$ for $Z \in (0, Z^*)$ and $T(Z) > 0$ for $Z \in (Z^*, 1)$, multiplying by the positive term $r(r-1)$ preserves these signs. Therefore, $G''(Z) < 0$ for $Z \in (0, Z^*)$ and $G''(Z) > 0$ for $Z \in (Z^*, 1)$.

\paragraph{Case 2: $r \in (0, 1) \cup (2, \infty)$.}
If $r > 2$, we have $r(r-1) > 0$ but $2-r < 0$, which implies that $T'(Z) < 0$. Thus, $T(Z)$ is strictly decreasing on $(0, 1)$ with exactly one root at $T(Z^*) = 0$. Since $T(Z) > 0$ for $Z \in (0, Z^*)$ and $T(Z) < 0$ for $Z \in (Z^*, 1)$, multiplying by the positive term $r(r-1)$ preserves these signs. Therefore, $G''(Z) > 0$ for $Z \in (0, Z^*)$ and $G''(Z) < 0$ for $Z \in (Z^*, 1)$.

Similarly, if $r \in (0, 1)$, we have $r(r-1) < 0$ and $2-r > 0$, which implies that $T'(Z) > 0$. Thus, $T(Z)$ is strictly increasing on $(0, 1)$ with exactly one root at $T(Z^*) = 0$. Since $T(Z) < 0$ for $Z \in (0, Z^*)$ and $T(Z) > 0$ for $Z \in (Z^*, 1)$, multiplying by the negative term $r(r-1)$ reverses these signs. Therefore, once again, $G''(Z) > 0$ for $Z \in (0, Z^*)$ and $G''(Z) < 0$ for $Z \in (Z^*, 1)$.

\end{proof}

\subsection{Proof of  \texorpdfstring{\Cref{lem:u-b-minimization}}{Bound on Minimizer b} } \label{app:u-b-minimization}

In the case where $p/q = r \in [1, 2)$, we analyze the function $u(b) = G(b) + (2b - 1) \cdot G(0)$ to prove a lower bound on the global minimizer $b^* \in [Z^*, 1]$, where $Z^*$ is the inflection point guaranteed by \Cref{lem:higher-d-second-derivative}.

\lemubminimization*

\begin{proof}[Proof of \Cref{lem:u-b-minimization}]
The condition for the minimizer $b^* \in [Z^*, 1]$ of the function $u(b)$ is that either $b^*$ is a boundary point, meaning $b^* \in \{Z^*, 1\}$, or it is an interior point where $u'(b^*) = 0$. Since $G(0) = 1$, the derivative of $u(b)$ is given by
\begin{align*}
    u'(b) &= \frac{d}{db} \left[ G(b) + 2b - 1 \right] = G'(b) + 2.
\end{align*}

First, we note that there exists at most one solution to $u'(b) = 0$ on $b \in (Z^*, 1)$. This uniqueness is guaranteed because $u''(b) = G''(b)$, and $G(b)$ is strictly convex on $(Z^*, 1)$ since its second derivative $G''(b) > 0$ on this interval (\Cref{lem:higher-d-second-derivative}).

We now prove that if $u(b^*) = 0$, the minimizer $b^*$ must be an interior point (thus $u'(b^*) = 0$) by contradiction. Assume that the minimizer $b^*$ lies on the boundary.

\paragraph{Case 1: $b^* = 1$.}
If the minimum occurs at $b^* = 1$, then $u(1) = 0$. Evaluating $u(1)$, we find that
\begin{align*}
    u(1) = G(1) + 2(1) - 1 = (1 - 1)^r - \delta_2(1)^r + 1 = 1 - \delta_2.
\end{align*}
Thus, $u(1) = 0$ implies $\delta_2 = 1$.
If $\delta_2 = 1$, the inflection point $Z^*$ where $G''(Z^*) = 0$ satisfies
\begin{align*}
    r(r-1)(1-Z^*)^{r-2} - r(r-1)(Z^*)^{r-2} = 0
\end{align*}
Canceling like terms, we see that  $1-Z^* = Z^*$, and thus we also have  $Z^* = 1/2$.
However, if we evaluate $u(b)$ at this left boundary $Z^* = 1/2$, we get
\begin{align*}
    u(1/2) = G(1/2) + 2(1/2) - 1 = (1/2)^r - (1/2)^r = 0.
\end{align*}
This means $u(1/2) = 0$ and $u(1) = 0$. Because $u(b)$ is strictly convex on $(1/2, 1)$, a secant line drawn between two roots must lie strictly above the curve. Therefore, $u(b)$ must be strictly negative for all $b \in (1/2, 1)$. This contradicts our premise that the absolute minimum value of $u(b)$ is $0$. Thus, $b^* \neq 1$.

\paragraph{Case 2: $b^* = Z^*$.}
If the minimum occurs at $b^* = Z^*$, then $u(Z^*) = 0$.
By definition, $G''(Z^*) = 0$, which gives $\delta_2 = \left(\frac{1-Z^*}{Z^*}\right)^{r-2}$. Substituting this into $u(Z^*) = 0$, we derive that
\begin{align*}
    u(Z^*) &= (1-Z^*)^r - \left(\frac{1-Z^*}{Z^*}\right)^{r-2} (Z^*)^r + 2Z^* - 1 \\
    &= (1-Z^*)^{r-2} \left[ (1-Z^*)^2 - (Z^*)^2 \right] + 2Z^* - 1 \\
    &= (1-Z^*)^{r-2} (1-2Z^*) - (1-2Z^*) \\
    &= (1-2Z^*) \left[ (1-Z^*)^{r-2} - 1 \right] = 0.
\end{align*}
Since $Z^* \in (0, 1)$ and $r \in (1, 2)$, the exponent $r-2$ is strictly negative. Therefore, $(1-Z^*)^{r-2} > 1$, which means the term in the brackets is strictly positive. The only way this expression evaluates to $0$ is if $1 - 2Z^* = 0$, meaning $Z^* = 1/2$.
If $Z^* = 1/2$, then $\delta_2 = 1$. As established in Case 1, setting $\delta_2 = 1$ forces the strictly convex function to have roots at both boundaries ($1/2$ and $1$). This forces the function to dip below $0$ in the interior, which again contradicts the minimum being $0$.

Since $b^*$ cannot be $Z^*$ and cannot be $1$, the minimizer must strictly lie in the interior $(Z^*, 1)$. Therefore, we conclude that $u'(b^*) = 0$.
\end{proof}

\subsection{Proof of \texorpdfstring{\Cref{lem:u-a-minimization-r-in-2-inf}}{Bound on Minimizer a} and  \texorpdfstring{\Cref{lem:u-a-minimization-r-in-zero-one}}{Bound on Minimizer a}} \label{app:u-a-minimization}

In the case of $p/q = r \in (0, 1) \cup (2, \infty)$, we analyze the function
\begin{equation*}
u(a) = G(a) + (1-2a) \cdot G(1),
\end{equation*}
where $a^* \in \arg\min_{a \in [0, \min(Z^*, 1/2)]} u(a)$ is a global minimizer and $Z^*$ is the unique inflection point guaranteed by \Cref{lem:higher-d-second-derivative}. Since \Cref{lem:u-a-minimization-r-in-2-inf} and \Cref{lem:u-a-minimization-r-in-zero-one} make the exact same claim for their respective regimes, we restate and prove them together as a single lemma:

\begin{lemma}[Bound on Minimizer $a$]
For $r \in (0, 1) \cup (2, \infty)$, if $u(a^*) = 0$, then $u'(a^*) = 0$ and $a^* < 1/2$.
\end{lemma}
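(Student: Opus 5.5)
We argue by contradiction, in the same spirit as the proof of \Cref{lem:u-b-minimization}, and write $I=[0,\min(Z^*,1/2)]$. On $I$ we have $u''(a)=G''(a)>0$ on the interior, since $I\subseteq[0,Z^*]$ and \Cref{lem:higher-d-second-derivative} makes $G$ strictly convex on $(0,Z^*)$. So $u$ is strictly convex on $I$, and it has at most one interior critical point. If $u(a^*)=0$ and $a^*$ lies in the interior of $I$, then $u'(a^*)=0$ and $a^*<\min(Z^*,1/2)\le 1/2$, and we are done. It therefore suffices to rule out the two endpoints $a^*=0$ and $a^*=\min(Z^*,1/2)$.

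\textbf{Left endpoint.} Here $u(0)=G(0)+G(1)=\delta_1-\delta_2$, so $u(0)=0$ forces $\delta_1=\delta_2$. Compute
\[
u'(a)=-r\delta_1(1-a)^{r-1}-r\delta_2 a^{r-1}+2\delta_2 .
\]
\begin{itemize}
\item For $r\in(0,1)$, the term $a^{r-1}\to\infty$ as $a\to0^+$, so $u'(0^+)=-\infty$. Hence $u$ strictly decreases just to the right of $0$, and $0$ is not a minimizer.
\item For $r>2$, we get $u'(0)=-r\delta_1+2\delta_2=(2-r)\delta_1<0$, using $\delta_1=\delta_2$. Again $u$ decreases off $0$, contradicting minimality.
\end{itemize}
Either way, $u$ takes a negative value near $0$, so the minimum cannot be $0$ at $a^*=0$.

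\textbf{Right endpoint, $a^*=1/2\le Z^*$.} Here $u(1/2)=G(1/2)=2^{-r}(\delta_1-\delta_2)$, so again $\delta_1=\delta_2$. By the formula for the inflection point in the proof of \Cref{lem:higher-d-second-derivative}, $\delta_1=\delta_2$ gives $Z^*=1/2$. Then $u(0)=\delta_1-\delta_2=0$ as well. Strict convexity of $u$ on $(0,1/2)$ together with $u(0)=u(1/2)=0$ forces $u<0$ in the interior, which is a contradiction.

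\textbf{Right endpoint, $a^*=Z^*<1/2$.} Combining $G''(Z^*)=0$, i.e.\ $\delta_2=\delta_1\bigl((1-Z^*)/Z^*\bigr)^{r-2}$, with $u(Z^*)=0$ should collapse to a one-variable identity. Substituting, $u(Z^*)=\delta_1(1-Z^*)^{r-2}\bigl[(1-Z^*)^2-(Z^*)^2\bigr]-(1-2Z^*)\delta_2$, which factors as
\[
u(Z^*)=(1-2Z^*)\,\delta_1\Bigl[(1-Z^*)^{r-2}-\bigl((1-Z^*)/Z^*\bigr)^{r-2}\Bigr].
\]
Since $Z^*<1/2$, we have $(1-Z^*)/Z^*>1$. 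For $r>2$ this gives $\bigl((1-Z^*)/Z^*\bigr)^{r-2}>(1-Z^*)^{r-2}$ after dividing by the positive factor $(1-Z^*)^{r-2}$, since $(Z^*)^{-(r-2)}>1$. For $r<1$, the exponent $r-2<-1$ and the comparison reverses the same way. In both cases the bracket is nonzero, so $u(Z^*)\ne0$. Equivalently, $u'(Z^*)$ has the wrong sign: $u$ is still decreasing at $Z^*$ from the left, so $Z^*$ cannot be a minimizer with value $0$.

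I expect this last case to be the main obstacle. One has to verify carefully that the bracket factor is strictly signed for every $r\in(0,1)\cup(2,\infty)$. If it is not, the fallback is to show directly that $u'(Z^*)<0$, so that the minimum is not attained at that endpoint.
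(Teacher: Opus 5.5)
Your proof is correct and follows the paper's argument: you rule out both endpoints, excluding $a^*=0$ by the sign of $u'$ near $0$ and $a^*=Z^*<1/2$ by the same factorization $u(Z^*)=\delta_1(1-Z^*)^{r-2}(1-2Z^*)\bigl[1-(Z^*)^{2-r}\bigr]\neq 0$; the one difference is at $a^*=1/2$, where you use $\delta_1=\delta_2\Rightarrow Z^*=1/2\Rightarrow u(0)=u(1/2)=0$ together with strict convexity, whereas the paper shows $u'(1/2)=\delta_1(2-r2^{2-r})>0$, and both arguments work. You should, however, delete the ``equivalently, $u'(Z^*)<0$'' remark and the proposed fallback: $u$ decreasing into the right endpoint would make $Z^*$ a minimizer rather than exclude it, and no fallback is needed anyway, since $(Z^*)^{2-r}\neq 1$ for $Z^*\in(0,1)$ and $r\neq 2$.
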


\begin{proof}[Proof of \Cref{lem:u-a-minimization-r-in-2-inf} and \Cref{lem:u-a-minimization-r-in-zero-one}]
The condition for the minimizer $a^* \in [0, \min(Z^*, 1/2)]$ of the function $u(a)$ is that either $a^*$ is a boundary point, meaning $a^* \in \{0, \min(Z^*, 1/2)\}$, or it is an interior point where $u'(a^*) = 0$. Since we are minimizing $u(a) = G(a) + (1 - 2a) \cdot G(1)$, the derivative of $u(a)$ is given by
\begin{align*}
    u'(a) &= \frac{d}{da} \left[ G(a) + (1 - 2a) \cdot G(1) \right] = G'(a) - 2 \cdot G(1).
\end{align*}

First, we note that there exists at most one solution to $u'(a) = 0$ on $a \in (0, \min(Z^*, 1/2))$. This uniqueness is guaranteed because $u''(a) = G''(a)$, and $G(a)$ is strictly convex on $(0, \min(Z^*, 1/2))$ since its second derivative $G''(a) > 0$ on this interval (\Cref{lem:higher-d-second-derivative}).

We now prove that if $u(a^*) = 0$, the minimizer $a^*$ must be an interior point (thus $u'(a^*) = 0$) by contradiction. Assume that the minimizer $a^*$ lies on the boundary.

\paragraph{Case 1: $a^* = 0$.}
If the minimum occurs at $a^* = 0$, then $u(0) = 0$. For $0$ to be a local minimum on $[0, \min(Z^*, 1/2)]$, the function cannot be decreasing as it moves into the interval, so we assume $u'(0) > 0$ (if $u'(0) = 0$, the lemma is immediately satisfied).
Evaluating $u(0)$, we find that
\begin{align*}
    u(0) = G(0) + (1-0) \cdot G(1) = G(0) + G(1) = 0.
\end{align*}
Thus, $u(0) = 0$ implies $G(1) = -G(0)$. Substituting this into our assumption that $u'(0) > 0$, we derive that
\begin{align*}
    0 &< G'(0) - 2 \cdot G(1) \\
      &= G'(0) - 2 \cdot (-G(0)) \\
      &= G'(0) + 2 \cdot G(0).
\end{align*}
This directly implies that $G'(0) + 2 \cdot G(0) > 0$.

However, recall that $G(0) = \delta_1 (1-0)^r - \delta_2 (0)^r = \delta_1 > 0$. On the other hand, we know that the first derivative of $G$ is
\begin{align*}
    G'(a) = -r\delta_1(1-a)^{r-1} - r\delta_2 a^{r-1}.
\end{align*}
Because $\delta_1, \delta_2 > 0$ and $r \in (0, 1) \cup (2, \infty)$, we analyze $G'(0) + 2 \cdot G(0)$ by cases:
\begin{itemize}
    \item If $r \in (2, \infty)$, evaluating at $a=0$ gives $G'(0) = -r\delta_1$. Then $G'(0) + 2 \cdot G(0) = -r\delta_1 + 2\delta_1 = \delta_1(2 - r)$. Since $r > 2$ and $\delta_1 > 0$, we have $\delta_1(2 - r) < 0$, which strictly contradicts the derived requirement.
    \item If $r \in (0, 1)$, as $a \to 0^+$, the term $-r\delta_2 a^{r-1} \to -\infty$. Thus $G'(a) \to -\infty$, making $G'(0) + 2 \cdot G(0) < 0$ as well, which again strictly contradicts the derived requirement that $G'(0) + 2 \cdot G(0) > 0$.
\end{itemize}
Thus, $a^* \neq 0$.

\paragraph{Case 2: The right boundary.}
If the right boundary is $Z^*<1/2$ and the minimum occurs at $a^* = Z^*$, then $u(Z^*) = 0$.
By definition, $G''(Z^*)=0$, so $\delta_2=\delta_1\left((1-Z^*)/Z^*\right)^{r-2}$.
Substituting this into $u(Z^*)=G(Z^*)+(1-2Z^*)G(1)=0$ gives
\begin{align*}
0
&= \delta_1(1-Z^*)^r
    - \delta_2 (Z^*)^r
    - (1-2Z^*)\delta_2 \\
&= \delta_1(1-Z^*)^{r-2}
    \left( (1-2Z^*)\left[1-(Z^*)^{2-r}\right]\right).
\end{align*}
Since $\delta_1(1-Z^*)^{r-2}>0$ and $1-2Z^*>0$, the equality above would force $1-(Z^*)^{2-r}=0$, or equivalently $(Z^*)^{2-r}=1$. This is impossible because $Z^*\in(0,1)$: if $r>2$, then $(Z^*)^{2-r}>1$, while if $r\in(0,1)$, then $(Z^*)^{2-r}<1$.
Thus $a^*\neq Z^*$ when $Z^*<1/2$.

If the right boundary is $1/2$ and the minimum occurs at $a^* = 1/2$, then $u(1/2) = 0$. This implies $G(1/2) = 0$, meaning that $\delta_2 = \delta_1$. We can then evaluate the derivative at $1/2$ to find that $u'(1/2) = G'(1/2) + 2\delta_1 = \delta_1(2 - r2^{2-r})$. Because $r2^{2-r} < 2$ for all $r \in (0, 1) \cup (2, \infty)$, we have $u'(1/2) > 0$. Since $u(1/2) = 0$ and $u(a)$ is strictly increasing at $1/2$, there must exist $a < 1/2$ where $u(a) < 0$. This contradicts $a^*$ being the global minimizer with a value of $0$. Thus, $a^* \neq 1/2$.

Since $a^*$ cannot be $\min(Z^*, 1/2)$ and cannot be $0$, the minimizer must strictly lie in the interior $(0, \min(Z^*, 1/2))$. Therefore, we conclude that $u'(a^*) = 0$.
\end{proof}